\newcommand{\sgn}{\mathop{}\mathopen{}\mathrm{sgn}}
\DeclareMathOperator{\image}{im}
\begin{document}

\title{Twisted Injectivity in PEPS and the Classification of Quantum Phases}
\author{Oliver Buerschaper}

\maketitle

\begin{abstract}
    We introduce a class of projected entangled pair states
    (\textsc{PEPS}) which is based on a group symmetry
    twisted by a 3-cocycle of the group. This twisted symmetry
    gives rise to a new standard form for \textsc{PEPS} from which
    we construct a family of local Hamiltonians which are gapped,
    frustration-free and include fixed points of the renormalization group flow.
    Moreover, we advance the classification of 2D gapped quantum spin
    systems by showing how this new standard form for \textsc{PEPS} determines
    the emergent topological order of these local Hamiltonians. Specifically,
    we identify their universality class as
    \textsc{Dijkgraaf-Witten} topological quantum field theory
    (\textsc{TQFT}).
\end{abstract}

\section{Introduction}
\subsection{Background}

A central topic in condensed matter physics is understanding
the universality classes of Hamiltonians with an energy gap above
their ground state(s). Here two
physical systems are defined to be in the same universality class
if they can be connected by a smooth path of
gapped local Hamiltonians (i.e. by quasi-adiabatic evolution)~\cite{Chen:2010gb,Bachmann:2011kw}.

Recently, in 1D a complete classification of gapped quantum spin systems
has been obtained using matrix product states
(\textsc{MPS})~\cite{Chen:2011iq} and
their parent Hamiltonians~\cite{Schuch:2011p3031}. This remarkable achievement
has largely been possible for the following reasons: there is an
area law for the entanglement entropy of gapped quantum spin
chains~\cite{Hastings:2007bu,Arad:2013tl}, there is a normal
form for \textsc{MPS}~\cite{PerezGarcia:2007wt}, and most importantly,
there is \emph{no intrinsic} topological order in 1D, only symmetry protected topological order.
In other words, in the absence of (physical) symmetry all ground states of
gapped quantum spin chains are equivalent to product states
under quasi-adiabatic evolution.
 
In 2D the situation turns out to be much more intricate.
First and foremost, there is \emph{both}
intrinsic~\cite{Tsui:1982kq,Laughlin:1983hk,Kitaev:2006ik,Levin:2004p117,Yan:2011kt,Zhang:2012jc,Parameswaran:2013uf}
\emph{and} symmetry protected topological
order~\cite{Schnyder:2009tg,Kitaev:2009vc,Hasan:2010ku,Qi:2011hb,Chen:2011p3277},
as well as nontrivial blends of these kinds of order~\cite{Mesaros:2013es,Hung:2012ud}.
Second, there is no area law for gapped quantum spin systems in 2D (or higher)
which follows from the gap alone.
Third, proving the very existence of a gap is notoriously
hard in 2D.
Finally, tensor network methods generalizing \textsc{MPS} to 2D, like
projected entangled pair states (\textsc{PEPS}), turn out to be considerably
more challenging than in 1D, if only because contracting \textsc{PEPS}
is not efficient in general~\cite{Schuch:2007wq} and a condition known as injectivity
does not automatically imply a gap for the parent
Hamiltonian of a \textsc{PEPS}.
As far as understanding universality classes is concerned the lack of a normal form for \textsc{PEPS}
(or for any other local description of ground states) arguably poses the most formidable challenge.  

Much more can be said once we restrict our attention to a remarkable
class of 2D quantum spin systems in which the gap is provably stable against \emph{any} small
enough perturbation~\cite{Bravyi:2010jn,Michalakis:2011p3299}. In fact, this unconditional stability
under quasi-adiabatic evolution
may be taken as a definition of intrinsic topological order.
It then follows that a whole range of low energy features,
both \emph{local} and \emph{global}, will also survive
the perturbation, i.e. are characteristic of the universality class.
Unfortunately, locally computable invariants of the ground
states~\cite{Bravyi:2010jn,Kitaev:2006,Levin:2006ij} reveal only partial
information about the kind of topological order~\cite{Flammia:2009p2617}.
Not surprisingly, global invariants~\cite{Hastings:2004cd,Hastings:2005cs}
are potentially much more valuable for identifying the universality class
of a system.
Along these lines, representations of the modular group~\cite{Kitaev:2006ik} are
believed to completely describe intrinsic topological order,
and compelling numerical evidence~\cite{Zhang:2012jc,Cincio:2013ku,Zaletel:2013fw,Zhu:2013wt}
suggests that these are
indeed stable.
However, it is far from clear how these global invariants \emph{emerge} from
the microscopic interactions, or from a yet to be discovered
\emph{local} normal form for low energy states.
As far as \textsc{PEPS} are concerned, a first step towards solving this problem has been made
by studying \textsc{PEPS} with a virtual group symmetry and
a certain standard form derived from it~\cite{Schuch:2010p2806}. 

In this article we generalize this approach and present a) a large subclass of \textsc{PEPS}
which is based on a new standard form and b) a continuous family of 
corresponding parent Hamiltonians which are gapped, frustration-free
and include fixed points of the renormalization group flow.
We advance
the classification of 2D gapped quantum spin systems by
showing how this new standard form for \textsc{PEPS}
determines the emergent topological order in the system.

\subsection{Results}

More precisely, our new standard form for \textsc{PEPS}
fits into the existing picture as follows. Decompose
any \textsc{PEPS} tensor~$A$ as $A=QW$ where $Q$ is positive and
$W$ is a partial isometry, called the isometric
form in~\cite{Schuch:2011p3031}. A major part of classifying
all \textsc{PEPS} consists in identifying all possible isometric (or standard) forms~$W$
in order to eventually obtain a complete normal form for \textsc{PEPS}.
The only standard forms~$W$ identified so far are
the \emph{block-diagonal} and the \emph{$G$-isometric} form
where $G$ is a finite group~\cite{Schuch:2011p3031}. While the former standard form describes
some kind of local symmetry breaking, the latter describes the
universality class of quantum double models $\mathrm{D}(G)$~\cite{Kitaev:2003}.
Here we extend the virtual group symmetry underlying the $G$-isometric standard
form~\cite{Schuch:2010p2806} to a \emph{twisted} virtual group symmetry where
the twist~$\omega$ is a 3-cocycle of the group~$G$. As our first main result
we obtain the subclass of $(G,\omega)$-injective tensors together with
the new $(G,\omega)$-isometric standard form for \textsc{PEPS}.

Furthermore we employ this new standard form for \textsc{PEPS}
to construct a continuous family of gapped, frustration-free Hamiltonians
whose interaction terms do \emph{not} commute. Nevertheless we
can show rigorously that this family
lies in the universality class of twisted gauge theory (twisted quantum
double models)~\cite{Dijkgraaf:1990p2241,deWildPropitius:1997p3081,Hung:2012vw,Hu:mIFulyz6}.
We proceed in three steps. First, we show that
the parent Hamiltonian of an arbitrary $(G,\omega)$-injective \textsc{PEPS}
on a torus has a ground state degeneracy which is given
by the number of so called $c^\omega$-regular
pair conjugacy classes of~$G$. Second, we show that parent Hamiltonians
of $(G,\omega)$-isometric \textsc{PEPS} describe
time slices in \textsc{Dijkgraaf-Witten} topological quantum
field theory (\textsc{TQFT})~\cite{Dijkgraaf:1990p2241}, only depend on the cohomology
class of the twist~$\omega$, and have commuting projections as
interaction terms. This clearly identifies the above universality
class. Finally, we use quasi-adiabatic evolution to lift all
interesting features of the fixed point Hamiltonians of the previous
step to the so called \emph{almost}
$(G,\omega)$-isometric \textsc{PEPS}. As our second main result,
this procedure yields the desired family of gapped, frustration-free
Hamiltonians whose intrinsic topological order is determined
by our new standard form for \textsc{PEPS}.

\subsection{Structure}

This article is structured as follows. In Section~\ref{sec:twisted_injectivity}
we introduce $(G,\omega)$-injective tensors via a virtual symmetry
expressed as matrix product operators (\textsc{MPO}).
We also study the space of quantum states that
emerges from arbitrary $(G,\omega)$-injective tensors on a torus within the usual
\textsc{PEPS} formalism.
In
Section~\ref{sec:hamiltonians} we construct the parent
Hamiltonians corresponding to $(G,\omega)$-injective tensors.
In Section~\ref{sec:twisted_isometry} we identify our new $(G,\omega)$-isometric standard
form for \textsc{PEPS} and study the properties of their parent Hamiltonians. In
Section~\ref{sec:almost_twisted_isometry} we turn to
\emph{almost} $(G,\omega)$-isometric tensors and obtain the desired
family of gapped, frustration-free Hamiltonians.
We conclude with a discussion
and an outlook on open questions in Section~\ref{sec:discussion}.

\section{Twisted Injectivity}
\label{sec:twisted_injectivity}

\subsection{Branched Tensors}

\begin{definition}[Branched polygon]
    A \emph{branching structure} on a polygon is an acyclic
    orientation of its edges.
\end{definition}

\begin{definition}[Branched tensor]
    Let $G$ be a finite group and $p$ be an $n$-polygon with a branching
    structure.
    A \emph{branched tensor}~$(A_i)$ is a
    tensor with elements~$(A_i)_{\alpha_1\dots\alpha_n}\in\mathbb{C}$ for $\alpha_k\in G$
    and $i=1,\dots,N$ together with the branching structure of~$p$.
\end{definition}

For reasons to be become clear later we will usually duplicate each index~$\alpha_k$.
For example, this gives a particular branched triangle tensor:
    \begin{equation}
        \label{eq:branched_triangle_tensor}
        \vcorrect{34bp}{\includegraphics[scale=0.25]{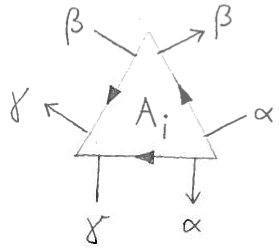}}
        \coloneqq
         (A_i)_{\alpha_1
                \alpha_2
                \alpha_3}.
    \end{equation}

Given a branched tensor~$(A_i)_{\alpha_1\dots\alpha_n}$, we may naturally associate
with it the linear map
\begin{equation*}
    A
    \coloneqq
     \sum_i
     \sum_{\alpha_k\in
           G}
     (A_i)_{\alpha_1
            \dots
            \alpha_n}\mkern2mu
     \ketbra{i}
            {\alpha_1,
             \dots,
             \alpha_n}
\end{equation*}
from the \emph{virtual} space~$\mathbb{C}G\otimes\dots\otimes\mathbb{C}G$
to the local \emph{physical} space~$\mathcal{H}_p$ (of the polygon~$p$).
We will use
\begin{equation*}
    \mathcal{L}_p
    \coloneqq
     \image(A)
\end{equation*}
to denote the physical subspace generated from the virtual level.%
\footnote{While we may always choose~$\mathcal{H}_p=\mathcal{L}_p$
    to make $A$ surjective locally we will have
    $\mathcal{L}_{p\cup p'}\subsetneq\mathcal{H}_{p\cup p'}$
    once we contract tensors~$(A_i)$ and~$(B_j)$ corresponding
    to~$p$ and~$p'$.}
Frequently we
will also consider branched tensors~$(\tilde{A}^i)_{\alpha_1\dots\alpha_n}$ in order to
describe linear maps from the local physical space to the virtual space:
\begin{equation*}
    \tilde{A}
    \coloneqq
     \sum_i
     \sum_{\alpha_k\in
           G}
     (\tilde{A}^i)_{\alpha_1
                    \dots
                    \alpha_n}\mkern2mu
     \ketbra{\alpha_1,
             \dots,
             \alpha_n}
            {i}.
\end{equation*}
As is customary, we will usually not
distinguish between the tensors and their corresponding maps.

\subsection{Twisted Symmetry as Matrix Product Operator}

We first define a twisted group action on the virtual boundary of 
a branched tensor in terms of an \textsc{MPO}.

\begin{definition}[Twisted symmetry \textsc{MPO}]
    Let $\omega\in H^3\bigl(G,\mathrm{U}(1)\bigr)$.
    For each $g\in G$ the tensors~$T_+^\omega(g)$ and~$T_-^\omega(g)$
    whose only non-vanishing elements read
    \begin{align}
        \vcorrect{28bp}{\includegraphics[scale=0.25]{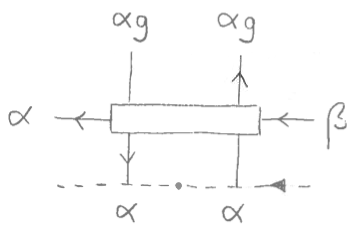}}\mkern3mu
        & \coloneqq
           \omega(\alpha
                  \beta^{-1}\mkern-3mu,
                  \beta,
                  g)^{-1}, \\
        \vcorrect{30bp}{\includegraphics[scale=0.25]{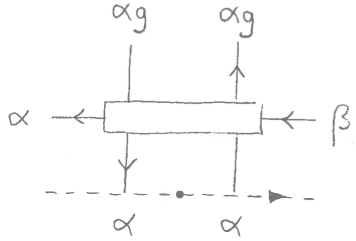}}\mkern3mu
        & \coloneqq
           \omega(\beta
                  \alpha^{-1}\mkern-3mu,
                  \alpha,
                  g),
    \end{align}
    respectively
    define an \textsc{MPO}~$V^\omega(g)$ acting on the virtual boundary
    of a branched tensor via
    \begin{equation}
        \expval{\alpha_1,
                \dots,
                \alpha_n}
               {V^\omega(g)}
               {\beta_1,
                \dots,
                \beta_n}
        \coloneqq
         \mkern3mu
         \vcorrect{38bp}{\includegraphics[scale=0.25]{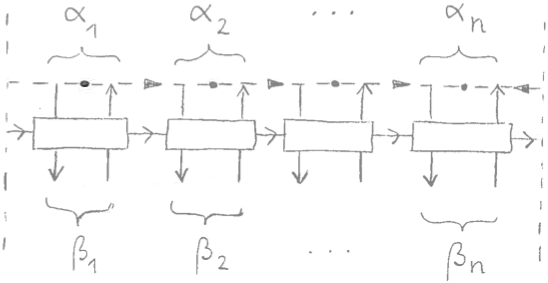}}\mkern6mu.
    \end{equation}
\end{definition}

\begin{remark}
    We can write the \textsc{MPO}~$V^\omega(g)$ in
    a slightly different way so that its \enquote{input}
    and \enquote{output} become more apparent:
    \begin{equation}
        \vcorrect{39bp}{\includegraphics[scale=0.25]{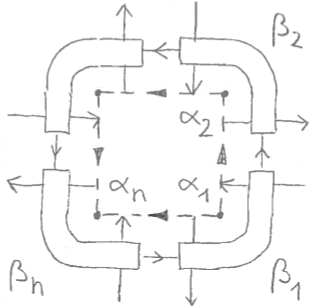}}\mkern3mu
        =\vcorrect{43bp}{\includegraphics[scale=0.25]{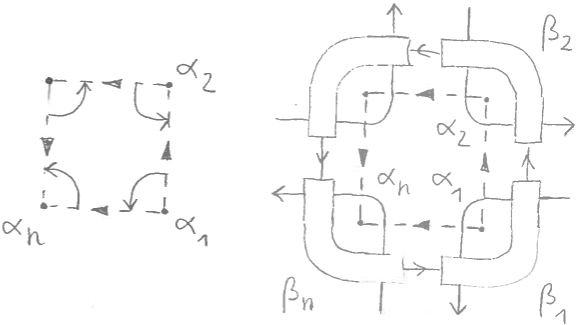}}\mkern6mu.
    \end{equation}
    This naturally extends to all other branched polygons beyond the one
    shown here.\qed
\end{remark}

\begin{proposition}
    The \textsc{MPO}s $\{V^\omega(g)\mid g\in G\}$ form a unitary
    representation of~$G$.
\end{proposition}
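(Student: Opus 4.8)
The plan is to verify the three defining properties of a unitary representation: that $V^\omega(e)$ is the identity, that $V^\omega(g)\,V^\omega(h)=V^\omega(gh)$, and that each $V^\omega(g)$ is unitary. Throughout we fix a normalized representative of the class $\omega\in H^3\bigl(G,\mathrm{U}(1)\bigr)$, so that $\omega(g_1,g_2,g_3)=1$ whenever some $g_i=e$; every cohomology class admits such a representative. The normalization makes the first property immediate: for $g=e$ the last slot of every $\omega$ in the definitions of $T^\omega_\pm(e)$ carries $e$, hence all nonvanishing tensor elements equal $1$, the \textsc{MPO} degenerates to the trivial one built from Kronecker deltas, and $V^\omega(e)$ acts as the identity on the virtual boundary.

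The homomorphism law is the substance of the proposition. I would establish it by first proving a \emph{local} statement about the \textsc{MPO} tensors: stacking $T^\omega_\pm(g)$ on top of $T^\omega_\pm(h)$ along their physical legs and re-fusing the resulting doubled bond reproduces, edge by edge, the tensor $T^\omega_\pm(gh)$ up to a gauge transformation $X(g,h)$ on the \textsc{MPO} bonds. On a $+$-oriented edge carrying the two group labels $g_1,g_2$ this comes down to the phase identity
\begin{equation*}
    \omega(g_1,g_2,g)\,\omega(g_1,g_2g,h)\,\omega(g_2,g,h)
    = \omega(g_1g_2,g,h)\,\omega(g_1,g_2,gh),
\end{equation*}
which is nothing but the normalized $3$-cocycle condition with its last two arguments set equal to $(g,h)$: on the left are the two original phases of $V^\omega(g)$ and $V^\omega(h)$ on that edge together with the factor produced by re-fusing the doubled bond, and on the right the target phase $\omega(\cdot,\cdot,gh)$ of $V^\omega(gh)$ times a further bond factor, the two bond factors $\omega(g_2,g,h)$ and $\omega(g_1g_2,g,h)$ being precisely the entries of $X(g,h)$ at the two endpoints of the edge; a $-$-oriented edge simply swaps the two sides of the identity.

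It then remains to close this local relation up around the polygon. Because the gauge transformation $X(g,h)$ lives on the \textsc{MPO} bonds, which are shared between consecutive edges, its contributions telescope when the \textsc{MPO} is contracted around the closed polygon: the factor introduced at each bond by one edge is cancelled by the matching factor of the neighbouring edge, and nothing survives but the identity $V^\omega(g)\,V^\omega(h)=V^\omega(gh)$ — with no leftover projective phase, which is exactly the point of the twist being a genuine $3$-cocycle rather than a $2$-cocycle. The acyclic branching structure is what makes this cyclic cancellation well defined, by supplying a consistent global ordering of the edges and their shared bonds; since the argument is local in the edges it applies verbatim to an arbitrary $n$-polygon and to any assignment of $\pm$ orientations to its edges. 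I expect the identification and cyclic bookkeeping of the bond gauge transformations $X(g,h)$ to be the main obstacle, the rest being a routine application of the $3$-cocycle identity. Unitarity is then a corollary: every nonzero matrix element of $V^\omega(g)$ is a product of unit-modulus values of $\omega$ while the bond contraction only relabels the virtual basis, so $V^\omega(g)$ is a monomial matrix with unimodular entries and hence unitary; alternatively it follows from the homomorphism law once one checks $V^\omega(g)^\dagger=V^\omega(g^{-1})$ directly from the defining formulas.
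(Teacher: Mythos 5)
Your proposal is correct, and it reaches the result by a route that is organized differently from the paper's. The paper works directly with the contracted operator $V^\omega(g)$ in its fully explicit form (a sum over group-labelled basis kets with an $\omega$-phase in front) and verifies $V^\omega(g_1)\mkern2mu V^\omega(g_2)=V^\omega(g_1g_2)$ by appealing to the $3$-cocycle relation~\eqref{eq:3-cocycle_definition}; for unitarity it computes $V^\omega(g^{-1})$ explicitly and matches it to $V^\omega(g)^\dagger$, using $\omega(e,\beta,e)=1$ along the way. You instead work one level down, at the \textsc{MPO} tensors $T^\omega_\pm$, and phrase the homomorphism law as a local bond-gauge relation $T^\omega(g)\ast T^\omega(h)\simeq X(g,h)\mkern2mu T^\omega(gh)\mkern2mu X(g,h)^{-1}$ whose gauge factors telescope around the closed boundary; the core identity is, as in the paper, exactly~\eqref{eq:3-cocycle_definition}, and your identification of the residual factors as the $\omega(\cdot,g,h)$-type bond entries, cancelling between consecutive edges, is precisely what happens when one traces the paper's computation edge by edge. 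Your unitarity argument is genuinely simpler and cleaner: observing that $V^\omega(g)$ is a monomial matrix (a $g$-shift of the group basis times a diagonal of unit-modulus $\omega$'s) makes unitarity immediate and sidesteps the paper's explicit $V^\omega(g^{-1})=V^\omega(g)^\dagger$ computation. One small mismatch with the paper: you fix a normalized representative of $\omega$ from the outset, whereas the paper's statement and proof hold for an arbitrary $3$-cocycle and only invoke normalization indirectly (to establish $\omega(e,\beta,e)=1$). Since changing $\omega$ by a coboundary conjugates each $V^\omega(g)$ by a unitary, this is harmless, but if you want the proposition verbatim for all $\omega$ you should either say so or drop the normalization and argue as the paper does. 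You also flag, honestly, that the bond-gauge bookkeeping is the part you would still need to nail down; that is indeed where the real work is, and once it is done for one edge type the $\pm$-orientation swap you describe is the right way to handle the rest.
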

\begin{proof}
    We may assume that the \textsc{MPO}
    acts on the virtual boundary of the branched triangle tensor~\eqref{eq:branched_triangle_tensor}
    without loss of generality. Suppressing duplicate indices we have
    \begin{equation}
        V^\omega(g)
        =\smashoperator[l]{\sum_{\alpha,
                                   \beta,
                                   \gamma\in
                                   G}}
           \frac{\omega(\gamma
                        \alpha^{-1}\mkern-3mu,
                        \alpha
                        g^{-1}\mkern-3mu,
                        g)}
                {\omega(\beta
                        \alpha^{-1}\mkern-3mu,
                        \alpha
                        g^{-1}\mkern-3mu,
                        g)\mkern2mu
                 \omega(\gamma
                        \beta^{-1}\mkern-3mu,
                        \beta
                        g^{-1}\mkern-3mu,
                        g)}\mkern2mu
           \ketbra{\alpha
                   g^{-1}\mkern-3mu,
                   \beta
                   g^{-1}\mkern-3mu,
                   \gamma
                   g^{-1}}
                  {\alpha,
                   \beta,
                   \gamma}.
    \end{equation}
    
    From the 3-cocycle relation~\eqref{eq:3-cocycle_definition}
    it is not difficult to show that $V^\omega(g_1)\mkern2mu V^\omega(g_2)=V^\omega(g_1g_2)$
    for all~$\omega$ and $g_i\in G$.
    
    Finally we obtain
    \begin{align*}
        V^\omega(g^{-1})
        & =\smashoperator[l]{\sum_{\alpha,
                                   \beta,
                                   \gamma\in
                                   G}}
           \frac{\omega(\gamma
                        \alpha^{-1}\mkern-3mu,
                        \alpha
                        g,
                        g^{-1})}
                {\omega(\beta
                        \alpha^{-1}\mkern-3mu,
                        \alpha
                        g,
                        g^{-1})\mkern2mu
                 \omega(\gamma
                        \beta^{-1}\mkern-3mu,
                        \beta
                        g,
                        g^{-1})}\mkern2mu
           \ketbra{\alpha
                   g,
                   \beta
                   g,
                   \gamma
                   g}
                  {\alpha,
                   \beta,
                   \gamma} \\
        & =\smashoperator[l]{\sum_{\alpha,
                                   \beta,
                                   \gamma\in
                                   G}}
           \frac{\omega(\beta
                        \alpha^{-1}\mkern-3mu,
                        \alpha,
                        g)\mkern2mu
                 \omega(\gamma
                        \beta^{-1}\mkern-3mu,
                        \beta,
                        g)}
                {\omega(\gamma
                        \alpha^{-1}\mkern-3mu,
                        \alpha,
                        g)}\mkern2mu
           \frac{\omega(\gamma
                        \alpha^{-1}\mkern-3mu,
                        \alpha,
                        e)}
                {\omega(\beta
                        \alpha^{-1}\mkern-3mu,
                        \alpha,
                        e)\mkern2mu
                 \omega(\gamma
                        \beta^{-1}\mkern-3mu,
                        \beta,
                        e)} \\
        & \hphantom{=\smashoperator[l]{\sum_{\alpha,
                                             \beta,
                                             \gamma\in
                                             G}}{}}
           \ketbra{\alpha
                   g,
                   \beta
                   g,
                   \gamma
                   g}
                  {\alpha,
                   \beta,
                   \gamma} \\
        & =\smashoperator[l]{\sum_{\alpha,
                                   \beta,
                                   \gamma\in
                                   G}}
           \frac{\omega(\beta
                        \alpha^{-1}\mkern-3mu,
                        \alpha,
                        g)\mkern2mu
                 \omega(\gamma
                        \beta^{-1}\mkern-3mu,
                        \beta,
                        g)}
                {\omega(\gamma
                        \alpha^{-1}\mkern-3mu,
                        \alpha,
                        g)}\mkern2mu
           \ketbra{\alpha
                   g,
                   \beta
                   g,
                   \gamma
                   g}
                  {\alpha,
                   \beta,
                   \gamma} \\
        & =V^\omega(g)^\dagger
    \end{align*}
    for any~$\omega$ and $g\in G$. Here we repeatedly
    used~\eqref{eq:3-cocycle_definition} to derive
    \begin{equation*}
        \frac{\omega(\gamma
                     \alpha^{-1}\mkern-3mu,
                     \alpha,
                     e)}
             {\omega(\beta
                     \alpha^{-1}\mkern-3mu,
                     \alpha,
                     e)\mkern2mu
              \omega(\gamma
                     \beta^{-1}\mkern-3mu,
                     \beta,
                     e)}
        =\frac{1}
              {\omega(\beta,
                      e,
                      e)\mkern2mu
               \omega(\beta^{-1}\mkern-3mu,
                      \beta,
                      e)}
        =\frac{1}
              {\omega(e,
                      \beta,
                      e)}
        =1.
    \end{equation*}
    The last equality is indeed true because every 3-cocycle~$\omega$
    is equivalent to a \emph{normalized} 3-cocycle~$\tilde{\omega}$
    for some 2-cochain~$\phi$, hence:
    \begin{equation*}
        \omega(e,
               \beta,
               e)
        =\tilde{\omega}(e,
                        \beta,
                        e)\mkern2mu
         \frac{\phi(e,
                    \beta)\mkern2mu
               \phi(\beta,
                    e)}
              {\phi(e,
                    \beta)\mkern2mu
               \phi(\beta,
                    e)}
        =1.
    \end{equation*}
\end{proof}

\begin{corollary}
    \label{cor:twisted_projection}
    Let $n$ be the length of the virtual boundary.
    The map
    \begin{equation}
        \label{eq:projection}
        \mathcal{P}^\omega
        \coloneqq
         \frac{1}
              {\abs{G}}
         \sum_{g\in
               G}
         V^\omega(g)
    \end{equation}
    is a Hermitian projection onto the $(G,\omega)$-symmetric subspace
    with rank~$\abs{G}^{n-1}$.
\end{corollary}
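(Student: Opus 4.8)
The plan is to recognise $\mathcal{P}^\omega$ as the group-averaging (Reynolds) operator attached to the unitary representation $\{V^\omega(g)\mid g\in G\}$ established in the preceding proposition, to verify the projection properties by the standard rearrangement arguments, and then to pin down the rank by a trace computation that exploits the monomial structure of each $V^\omega(g)$.

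First I would check Hermiticity and idempotence. Hermiticity is immediate: $(\mathcal{P}^\omega)^\dagger=\abs{G}^{-1}\sum_g V^\omega(g)^\dagger=\abs{G}^{-1}\sum_g V^\omega(g^{-1})=\mathcal{P}^\omega$ after relabelling $g\mapsto g^{-1}$. For idempotence, expand $(\mathcal{P}^\omega)^2=\abs{G}^{-2}\sum_{g,h}V^\omega(g)\mkern1mu V^\omega(h)=\abs{G}^{-2}\sum_{g,h}V^\omega(gh)$ using the homomorphism property; for each fixed $h$ the substitution $g'=gh$ turns the inner sum into $\sum_{g'}V^\omega(g')$, and summing the resulting $\abs{G}$ identical copies over $h$ yields $\mathcal{P}^\omega$.

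Next I would identify $\image(\mathcal{P}^\omega)$ with the common fixed-point subspace of the representation, i.e.\ the $(G,\omega)$-symmetric subspace. Using $V^\omega(e)=I$ (immediate from unitarity together with the homomorphism property) one gets $V^\omega(h)\mkern1mu\mathcal{P}^\omega=\abs{G}^{-1}\sum_g V^\omega(hg)=\mathcal{P}^\omega$ for every $h$, so every vector in the image is fixed by all $V^\omega(g)$; conversely such a vector is obviously left unchanged by the average, hence lies in the image. Being a Hermitian idempotent, $\mathcal{P}^\omega$ is the orthogonal projection onto exactly this subspace.

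Finally, the rank. Since $\mathcal{P}^\omega$ is a projection, $\mathrm{rank}\,\mathcal{P}^\omega=\mathrm{tr}\,\mathcal{P}^\omega=\abs{G}^{-1}\sum_g\mathrm{tr}\,V^\omega(g)$. From the explicit form of $V^\omega(g)$ displayed in the proof of the previous proposition, it maps each virtual basis vector $\ket{\alpha_1,\dots,\alpha_n}$ to a phase times $\ket{\alpha_1 g^{-1},\dots,\alpha_n g^{-1}}$; since right translation by $g^{-1}$ on $G$ is fixed-point-free unless $g=e$, all diagonal entries of $V^\omega(g)$ vanish for $g\neq e$, whence $\mathrm{tr}\,V^\omega(g)=0$ there, while $\mathrm{tr}\,V^\omega(e)=\mathrm{tr}\,I=\abs{G}^{n}$, the dimension of the virtual space (once the duplicate indices are suppressed it is $(\mathbb{C}G)^{\otimes n}$). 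Hence $\mathrm{tr}\,\mathcal{P}^\omega=\abs{G}^{-1}\abs{G}^{n}=\abs{G}^{n-1}$. None of these steps presents a genuine obstacle; the only points worth care are confirming $V^\omega(e)=I$ and correctly accounting for the virtual dimension $\abs{G}^n$.
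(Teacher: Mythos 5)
Your proof is correct and is precisely the argument the paper implicitly invokes (the result is stated as a corollary of the preceding proposition with no written proof): Hermiticity and idempotence follow from the standard group-averaging (Reynolds) manipulation applied to a unitary representation, the image is identified with the common fixed subspace in the usual way, and the rank follows from the trace of the monomial operators $V^\omega(g)$, which act on $(\mathbb{C}G)^{\otimes n}$ by right translation of all $n$ indices by $g^{-1}$ up to a phase and hence are traceless for $g\neq e$.
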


\begin{definition}[Virtual symmetry]
    A branched tensor~$(A_i)$ is
    called $(G,\omega)$-symmetric if
    \begin{equation}
        \vcorrect{51bp}{\includegraphics[scale=0.25]{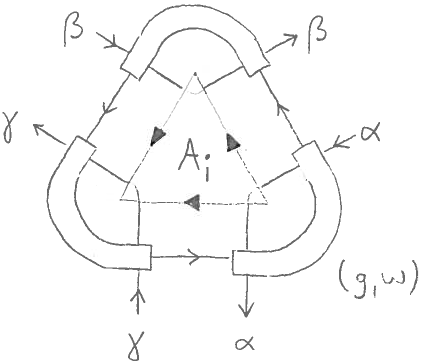}}
        =\mkern3mu
         \vcorrect{34bp}{\includegraphics[scale=0.25]{figure1}}
    \end{equation}
    for all $g\in G$, or equivalently, if
    \begin{equation}
        A
        \mathcal{P}^\omega
        =A
    \end{equation}
    where the underlying branching structure of~$\mathcal{P}^\omega$ is
    that of the virtual boundary of~$A$.
\end{definition}

We can contract branched tensors along common boundary edges (by
contracting the respective virtual indices), as long
as those edges have the same orientation.

\begin{lemma}[Concatenation of twisted symmetry]
    Any contraction $(P_{ij\dots})\coloneqq(A_iB_j\dots)$
    of compatible $(G,\omega)$-symmetric tensors~$(A_i)$,
    $(B_j)$, … is again $(G,\omega)$-symmetric
    and the twisted symmetry \textsc{MPO} only depends on the
    virtual boundary of~$P$.
\end{lemma}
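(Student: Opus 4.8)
The plan is to reduce the statement to the simplest possible gluing and then to track how the symmetry \textsc{MPO} behaves under it. First I would note that it suffices to treat the contraction of two compatible $(G,\omega)$-symmetric tensors $(A_i)$ and $(B_j)$ along a \emph{single} common boundary edge~$e$: a general contraction $(A_iB_j\dots)$ is obtained by iterating such elementary gluings (in any order, since tensor contraction is associative), and if two polygons share several compatible edges these are removed one at a time. So put $(P_{ij})\coloneqq(A_iB_j)$ and write $\partial A$, $\partial B$, $\partial P$ for the respective virtual boundaries, where as a branched polygon boundary $\partial P$ is obtained from $\partial A$ and $\partial B$ by deleting~$e$ and joining the two resulting arcs.

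Next I would recall that, by the definition of virtual symmetry together with the preceding proposition and Corollary~\ref{cor:twisted_projection}, a branched tensor~$C$ is $(G,\omega)$-symmetric if and only if $C\mkern1mu V^\omega(g)=C$ for all $g\in G$ (the nontrivial implication uses $\mathcal{P}^\omega V^\omega(h)=\mathcal{P}^\omega$, which is immediate from $V^\omega(g)V^\omega(h)=V^\omega(gh)$). Hence the entire lemma follows once I prove the single identity
\begin{equation*}
    P\mkern1mu V_{\partial P}^\omega(g)
    =\bigl(A\mkern1mu V_{\partial A}^\omega(g)\bigr)\circ_e\bigl(B\mkern1mu V_{\partial B}^\omega(g)\bigr)
    \qquad(g\in G),
\end{equation*}
with $\circ_e$ denoting contraction of the virtual index carried by~$e$: indeed, the right-hand side then equals $A\circ_e B=P$ by the symmetry of~$A$ and~$B$, which at once shows that $P$ is $(G,\omega)$-symmetric and that the \textsc{MPO} implementing its symmetry is $V_{\partial P}^\omega(g)$. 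Because $V^\omega(g)$ is a genuine unitary $G$-representation on any virtual boundary, this is in turn equivalent to the purely combinatorial assertion that $V_{\partial A}^\omega(g)$ and $V_{\partial B}^\omega(g)$, contracted along~$e$, agree with $V_{\partial P}^\omega(g)$ contracted along~$e$.

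The heart of the matter is therefore a local identity for the building blocks~$T_\pm^\omega(g)$. Unfolding each \textsc{MPO} into its per-edge tensors, the data attached to edges other than~$e$ survive the gluing up to a rerouting of bond indices, so the only genuinely new ingredient lives at~$e$: I must show that the tensor $\partial A$ places on~$e$, contracted with the tensor $\partial B$ places on~$e$ (over the virtual index on~$e$, summed over~$G$, and over the matching bond legs), reproduces exactly what $\partial P$ attaches to the two edges that become adjacent once~$e$ is deleted. Since~$e$ is traversed in opposite senses along the two boundary loops, one side contributes a $T_+^\omega(g)$-tensor and the other a $T_-^\omega(g)$-tensor, so after substituting the definitions the required equality becomes an identity among products of values of~$\omega$, which I would derive from the 3-cocycle relation~\eqref{eq:3-cocycle_definition} (together with the normalization of~$\omega$ already invoked above). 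This has to be checked for each local pattern of the branching structure at the two endpoints of~$e$---which vertex is a source and which a sink---but these are finitely many cases of one and the same flavour.

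Finally, the claim that the symmetry \textsc{MPO} of~$P$ depends only on~$\partial P$ is then automatic: the local identity rewrites the glued \textsc{MPO} entirely in terms of the edges of~$\partial P$ and never refers to~$e$ or to the interior of~$P$, so iterating it over an arbitrary contraction always returns the same operator, namely the \textsc{MPO} $V_{\partial P}^\omega(g)$ associated with the branched polygon~$\partial P$. I expect the main obstacle to be exactly this last local identity---chasing the several $\omega$-factors through the 3-cocycle relation in every branching configuration at~$e$---while the reduction to a single edge and the deduction of $(G,\omega)$-symmetry from it are routine.
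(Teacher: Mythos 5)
Your proposal follows essentially the same strategy as the paper: apply the $(G,\omega)$-symmetry to each constituent tensor, then reduce to a local cancellation of the MPO building blocks at the shared edge and verify that cancellation via the 3-cocycle relation. This is exactly what the paper encodes in its pictorial identity~\eqref{eq:sym_rule_1}, and your key equation $P\mkern1mu V_{\partial P}^\omega(g)=(A\mkern1mu V_{\partial A}^\omega(g))\circ_e(B\mkern1mu V_{\partial B}^\omega(g))$ is the abstract form of it, with the intuition that since $e$ is traversed in opposite senses a $T_+^\omega(g)$ meets a $T_-^\omega(g)$ and they annihilate.

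There is, however, one place where you are a bit too quick, and it is worth flagging because the paper devotes a separate identity to it. You assert that once the single-edge identity is established, the claim that the resulting MPO depends only on $\partial P$ is \emph{automatic} by iteration. The paper instead uses a second pictorial rule~\eqref{eq:sym_rule_2} precisely for this: after pushing the symmetry to the shared edge with \eqref{eq:sym_rule_1}, there remain residual $T$-tensors attached to what used to be boundary edges but are now interior (e.g.\ when three triangles meet around a vertex, as in Figure~15, or when your ``one edge at a time'' removal temporarily leaves a non-simple boundary that visits the same edge twice). In those configurations the two MPO tensors sitting on a now-interior edge do not simply reroute the bond index --- they must be shown to cancel, which is an additional 3-cocycle computation of the same flavour as \eqref{eq:sym_rule_1} but in a different local pattern. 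Your proposal subsumes this under ``rerouting of bond indices,'' but that is not quite accurate: a genuine cancellation identity is needed, not merely a relabelling. The iteration scheme you describe is fine, but each iterate can land you in the \eqref{eq:sym_rule_2}-type situation, so the case analysis you flag as ``finitely many cases of one and the same flavour'' genuinely has two qualitatively different flavours. With that second local identity in hand, your reduction is complete and coincides with the paper's argument.
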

\begin{proof}
    Let us first focus on two $(G,\omega)$-symmetric triangle
    tensors $(A_i)$ and $(B_j)$ contracted along a common oriented
    edge. Without loss of generality we assume
    \begin{equation}
        \label{eq:square_tensor}
        (P_{i
            j})
        =(A_i
          B_j)
        =\mkern3mu
         \vcorrect{37bp}{\includegraphics[scale=0.25]{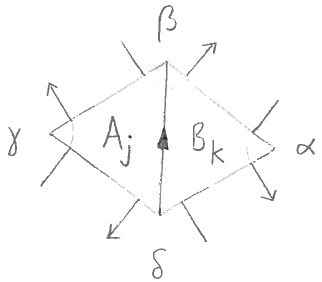}}
    \end{equation}
    and leave the orientation of the boundary edges implicit.
    Using the $(G,\omega)$-symmetry of the individual tensors
    and the identity
    \begin{equation}
        \label{eq:sym_rule_1}
        \vcorrect{52bp}{\includegraphics[scale=0.25]{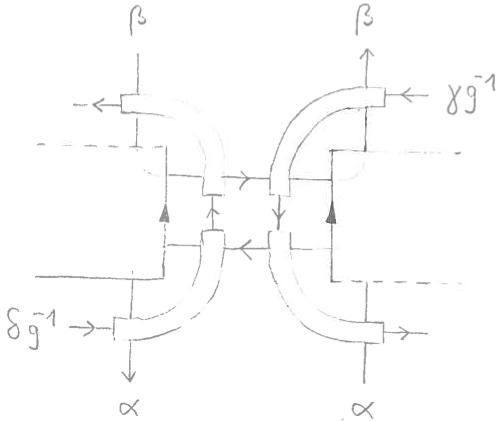}}
        =\vcorrect{52bp}{\includegraphics[scale=0.25]{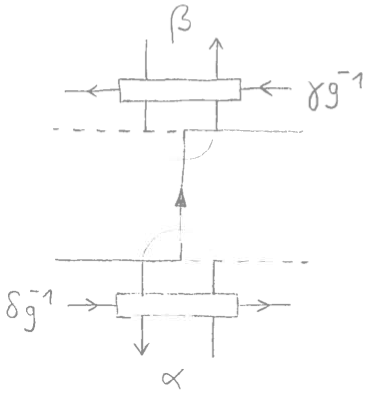}}
    \end{equation}
    we obtain
    \begin{equation*}
        (P_{i
            j})
        =\vcorrect{50bp}{\includegraphics[scale=0.25]{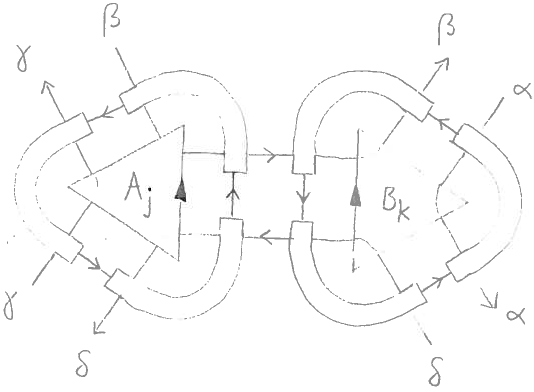}}
        =\vcorrect{47bp}{\includegraphics[scale=0.25]{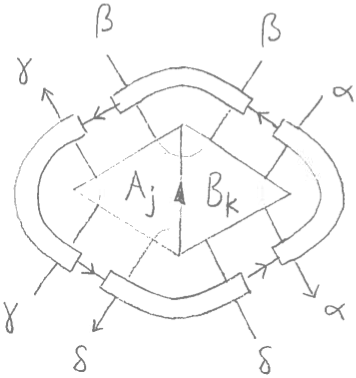}}
    \end{equation*}
    where all virtual symmetry tensors are of type~$(g,\omega)$.
    Note that \eqref{eq:sym_rule_1} equally holds for a downward
    interior edge and arbitrarily oriented boundary edges. This
    is because the only virtual symmetry tensors that depend on
    the orientation of the interior edge are the upper left and
    lower right ones. It is not difficult to see that these
    always cancel.
    
    Let us now show that the virtual symmetry only depends on \emph{open}
    indices at the virtual boundary. Suppressing irrelevant edge
    orientations we have the identity
    \begin{equation}
        \label{eq:sym_rule_2}
        \vcorrect{18bp}{\includegraphics[scale=0.25]{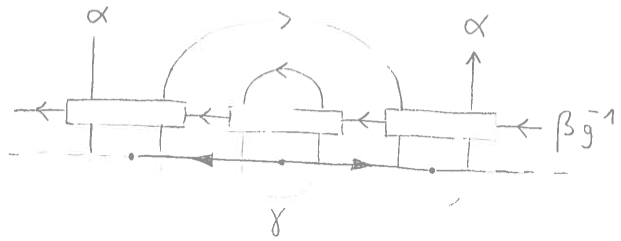}}
        =\mkern3mu
         \vcorrect{51bp}{\includegraphics[scale=0.25]{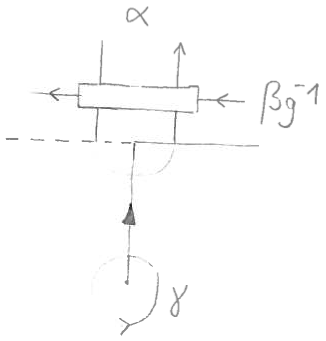}}
    \end{equation}
    because the two leftmost virtual symmetry tensors always cancel.
    It is immediately clear that this also holds for a downward interior
    edge as well as arbitrarily oriented boundary edges.
    
    Without loss of generality we may now assume that the contracted tensor~$(P_{ijk})$
    has the form
    \begin{equation}
        \vcorrect{35bp}{\includegraphics[scale=0.25]{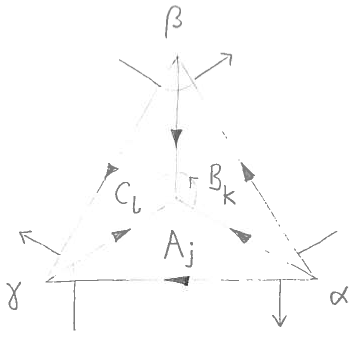}}
    \end{equation}
    After applying the $(G,\omega)$-symmetry to each triangle tensor
    individually we can push the virtual symmetry to the boundary by
    employing \eqref{eq:sym_rule_1} twice, followed by one application
    of~\eqref{eq:sym_rule_2}. This easily generalizes to arbitrary
    contracted tensors~$P$.
\end{proof}

\begin{definition}[Twisted injectivity]
    A $(G,\omega)$-symmetric tensor~$(A_i)$ is called
    $(G,\omega)$-injective if there exists a tensor~$(\tilde{A}^i)$
    such that
    \begin{equation}
        \label{eq:injectivity}
        \sum_i\mkern6mu
        \vcorrect{22bp}{\includegraphics[scale=0.25]{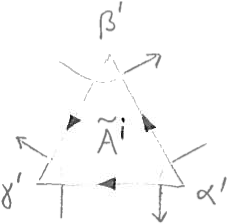}}\mkern12mu
        \vcorrect{22bp}{\includegraphics[scale=0.25]{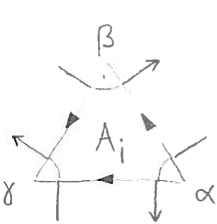}}
        =\expval{\alpha'\mkern-2mu,
                 \beta'\mkern-2mu,
                 \gamma'}
                {\mathcal{P}^\omega}
                {\alpha_1,
                 \alpha_2,
                 \alpha_3},
    \end{equation}
    or equivalently,
    \begin{equation}
        \tilde{A}
        A
        =\mathcal{P}^\omega
    \end{equation}
    where the underlying branching structure of both~$\tilde{A}$ and~$\mathcal{P}^\omega$
    is that of the virtual boundary of~$A$.
\end{definition}

\begin{lemma}[Concatenation of twisted injectivity]
    Any contraction $(P_{ij\dots})\coloneqq(A_iB_j\dots)$ of compatible
    $(G,\omega)$-injective tensors~$(A_i)$, $(B_j)$, … is again
    $(G,\omega)$-injective. 
\end{lemma}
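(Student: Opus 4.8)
The plan is to reduce the concatenation of twisted injectivity to the concatenation of twisted symmetry (the previous lemma) plus a local gluing argument for the inverse tensors $\tilde A^i$. First I would observe that the contracted tensor $P = AB\cdots$ is already known to be $(G,\omega)$-symmetric by the preceding lemma, so $P\mathcal P^\omega = P$, and that the twisted symmetry MPO on $P$ only involves the open (boundary) indices of the glued region. What remains is to exhibit a tensor $\tilde P$ with $\tilde P P = \mathcal P^\omega$, where now $\mathcal P^\omega$ acts on the virtual boundary of $P$.

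The natural candidate for $\tilde P$ is the contraction of the individual inverse tensors $\tilde A^i, \tilde B^j, \dots$ along the same internal edges. Concretely, for two triangles glued along one edge as in~\eqref{eq:square_tensor}, I would write $\tilde P \coloneqq \sum_{ij}\tilde A^i \otimes \tilde B^j$ contracted on the shared physical/virtual leg, and compute $\tilde P P$. Plugging in $\tilde A A = \mathcal P^\omega$ and $\tilde B B = \mathcal P^\omega$ for the two triangles leaves a product of two projections sharing one virtual leg; the computation then turns into showing that gluing two copies of $\mathcal P^\omega$ along a single virtual index, and summing over that index, reproduces a single $\mathcal P^\omega$ on the combined boundary. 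This is where the MPO structure of $V^\omega(g) = \sum$ (the explicit triangle formula in the proof of the proposition) does the work: summing $\frac1{|G|}\sum_g V^\omega(g)$ on each triangle and contracting the shared leg forces the group elements on the two tensors to agree, collapsing the double sum over $(g_1,g_2)$ to a single sum over $g$, with the 3-cocycle weights recombining exactly as in the concatenation-of-symmetry identities~\eqref{eq:sym_rule_1} and~\eqref{eq:sym_rule_2}. In effect, $\mathcal P^\omega \cdot_{\text{glue}} \mathcal P^\omega = \mathcal P^\omega$ on the merged boundary, which is just the statement that $\{V^\omega(g)\}$ is a representation whose MPO form is consistent across the cut.

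The main obstacle will be bookkeeping the 3-cocycle phases across the glued edge: one must check that the $\omega$-factors carried by the inner vertices of the two $\mathcal P^\omega$'s cancel or merge into the single $\omega$-factor expected on the boundary, rather than leaving a residual coboundary. This is precisely the content already proved as~\eqref{eq:sym_rule_1} (interior-edge factors cancel) and~\eqref{eq:sym_rule_2} (a leftmost pair cancels), so I would invoke those identities directly rather than redo the cocycle manipulation. Once the two-triangle case is settled, the general case follows by the same induction used in the concatenation-of-symmetry lemma: given any contracted $(P_{ij\dots})$, decompose the gluing into a sequence of single-edge contractions, set $\tilde P$ to be the corresponding contraction of the $\tilde A^i$'s, apply $\tilde A A = \mathcal P^\omega$ triangle-by-triangle, and then use~\eqref{eq:sym_rule_1}–\eqref{eq:sym_rule_2} repeatedly to push all virtual symmetry tensors to the boundary and collapse the internal sums, yielding $\tilde P P = \mathcal P^\omega$ with the branching structure of $\partial P$. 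Finally I would note that the construction is manifestly compatible with further contractions, so $(P_{ij\dots})$ is itself a legitimate building block, which is what the lemma asserts.
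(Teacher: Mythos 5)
Your proposal follows essentially the same route as the paper's proof: take $\tilde{P}$ to be the contraction of the local inverses $\tilde{A}^i,\tilde{B}^j,\dots$, plug in $\tilde{A}A=\mathcal{P}^\omega$ and $\tilde{B}B=\mathcal{P}^\omega$ to reduce to a gluing of two twisted-symmetry projections, observe that the shared virtual index forces the group elements $g_1,g_2$ in the two \textsc{MPO}s to coincide so the double sum collapses to a single one, and invoke the already-established cancellation of interior \textsc{MPO} tensors to recover the boundary projection. The inductive extension to arbitrary contractions is also the same.

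One bookkeeping point you do not address, and which matters for the literal equality $\tilde{P}P=\mathcal{P}^\omega$: the paper's candidate inverse carries an explicit normalization,
\begin{equation*}
    (\tilde{P}^{ij})
    \coloneqq
     \abs{G}\mkern2mu
     (\tilde{A}^i
      \tilde{B}^j)_{\text{contracted}} ,
\end{equation*}
rather than the bare contraction you write down. With just $\tilde{P}=\sum_{ij}\tilde{A}^i\tilde{B}^j$, the two $\frac{1}{\abs{G}}$ prefactors from the two inserted copies of $\mathcal{P}^\omega$, combined with the sum over the internal virtual index, leave you a constant off, so the tensor would not satisfy the definition of $(G,\omega)$-injectivity on the nose. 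This is exactly the kind of residual factor your \enquote{main obstacle} paragraph worries about, but the fix is a scalar renormalization of $\tilde{P}$, not an extra cocycle identity. Also note the paper's proof of this lemma invokes only \eqref{eq:sym_rule_1}; \eqref{eq:sym_rule_2} is used in the symmetry-concatenation lemma but is not needed here. Neither point changes the structure of your argument — once you insert the $\abs{G}$ prefactor and carry through, your plan reproduces the paper's proof.
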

\begin{proof}
    It suffices to prove the claim for two $(G,\omega)$-injective triangle
    tensor~$(A_i)$ and $(B_j)$ contracted along a common oriented edge
    as in~\eqref{eq:square_tensor}. Setting
    \begin{equation*}
        (\tilde{P}^{ij})_{\alpha_1
                          \dots
                          \alpha_4}
        \coloneqq
         \abs{G}\mkern4mu
         \vcorrect{33bp}{\includegraphics[scale=0.25]{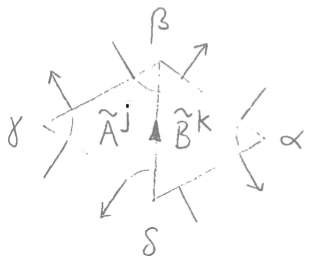}}
    \end{equation*}
    we obtain
    \begin{align*}
        \tilde{P}
        P
        & \simeq
           \abs{G}
           \sum_{j,
                 k}\mkern6mu
           \vcorrect{29bp}{\includegraphics[scale=0.25]{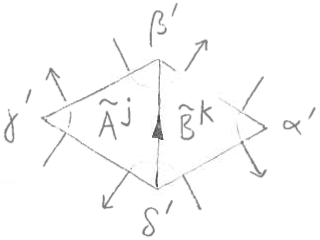}}\mkern9mu
           \vcorrect{35bp}{\includegraphics[scale=0.25]{figure9}}
           \displaybreak[0] \\
        & =\frac{1}
                {\abs{G}}
           \sum_{g,
                 h\in
                 G}\mkern3mu
           \vcorrect{35bp}{\includegraphics[scale=0.25]{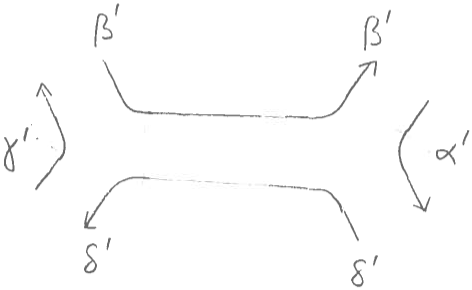}}\mkern12mu
           \vcorrect{49bp}{\includegraphics[scale=0.25]{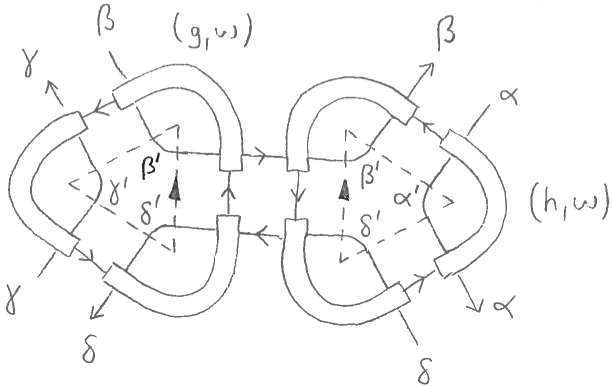}}
           \displaybreak[0] \\
        & =\frac{1}
                {\abs{G}}
           \sum_{g\in
                 G}\mkern3mu
           \vcorrect{48bp}{\includegraphics[scale=0.25]{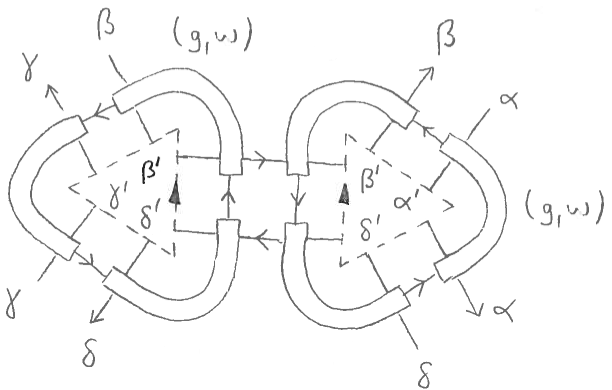}}
           \displaybreak[0] \\
        & =\frac{1}
                {\abs{G}}
           \sum_{g\in
                 G}\mkern3mu
           \vcorrect{48bp}{\includegraphics[scale=0.25]{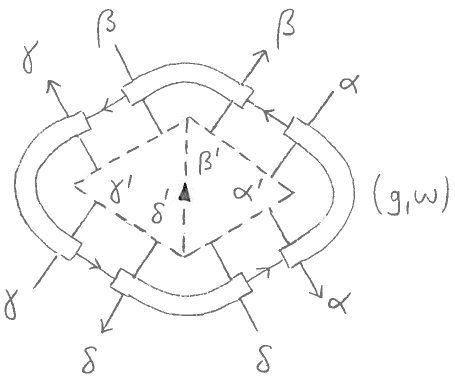}} \\
        & =\mathcal{P}^\omega
    \end{align*}
    where we used~\eqref{eq:injectivity} in the second step and~\eqref{eq:sym_rule_1} in the last one.
\end{proof}

\subsection{Closure on a Torus}

There are different ways to close a (minimal) torus and obtain a quantum state:
\begin{align}
    \ket{\psi(M)}
    & \coloneqq
       \sum_{j,
             k}
       \vcorrect{34bp}{\includegraphics[scale=0.25]{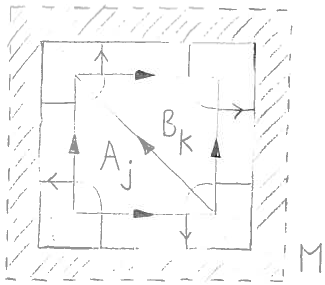}}
       \ket{j,
            k}, \\
    \ket{\psi(P)}
    & \coloneqq
       \sum_{j,
             k}
       \vcorrect{46bp}{\includegraphics[scale=0.25]{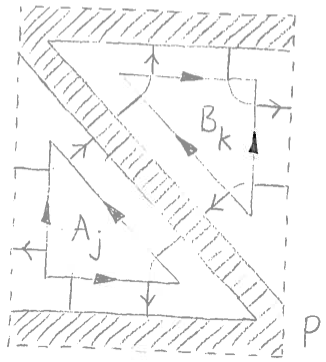}}
       \ket{j,
            k}, \\
    \ket{\psi(Q)}
    & \coloneqq
       \sum_{j,
             k}
       \vcorrect{36bp}{\includegraphics[scale=0.25]{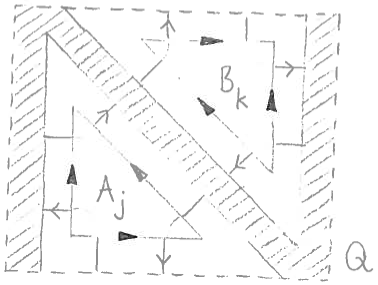}}
       \ket{j,
            k}.
\end{align}

\begin{figure}
    \centering
    \includegraphics[scale=0.25]{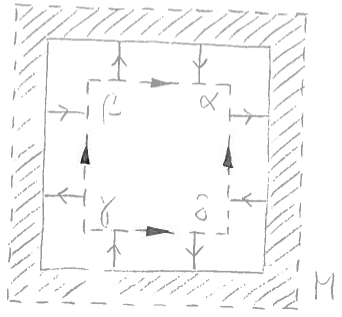}\hfil
    \includegraphics[scale=0.25]{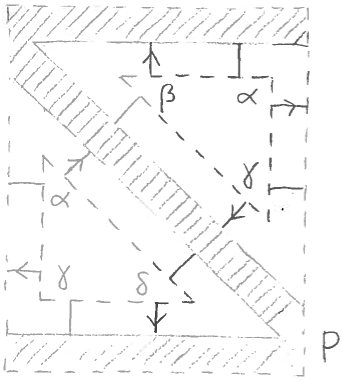}\hfil
    \includegraphics[scale=0.25]{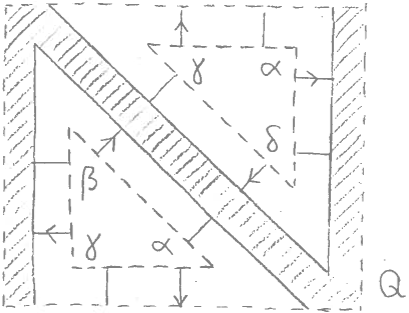}
    \caption
        [Closures]
        {The closure tensor elements~$\braket{\alpha_1,\dots,\alpha_4}{M}$,
         $\braket{\alpha_1,\dots,\alpha_4}{P}$ and
         $\braket{\alpha_1,\dots,\alpha_4}{Q}$.}
    \label{fig:closures}
\end{figure}

We are interested in states whose closure cannot be \enquote{detected} locally. In other
words, we would like to study those states which can be simultaneously expressed using \emph{any}
of the possible closures.

\begin{definition}
    For all $\alpha,g,h\in G$ with $hg=gh$ let $\lambda_X(\alpha;g,h)\in\mathbb{C}$.
    Define tensors $M(g,h)$, $P(g,h)$ and~$Q(g,h)$ by
    \begin{align}
        \braket{\alpha,
                \alpha
                g,
                \alpha
                g
                h,
                \alpha
                h}
               {M(g,
                  h)}
        & \coloneqq
           \lambda_M(\alpha;
                     g,
                     h),
           \displaybreak[0] \\
        \braket{\alpha,
                \alpha
                g,
                \alpha
                g
                h,
                \alpha
                h}
               {P(g,
                  h)}
        & \coloneqq
           \lambda_P(\alpha;
                     g,
                     h),
           \displaybreak[0] \\
        \braket{\alpha,
                \alpha
                g,
                \alpha
                g
                h,
                \alpha
                h}
               {Q(g,
                  h)}
        & \coloneqq
           \lambda_Q(\alpha;
                     g,
                     h),
    \end{align}
    all other elements being zero.
\end{definition}

\begin{remark}
    So far we have only specified which tensor elements
    do \emph{not} vanish, but left their actual value completely
    undetermined. In particular, we
    have not specified any relation between the three 
    functions $\lambda_X$. We will do so in due course.\qed
\end{remark}

\begin{lemma}
    \label{lem:closure_minimal_torus_span}
    \begin{multline}
        \label{eq:closure}
        \{\ket{\psi(M)}
          \mid
          M\}\cap
        \{\ket{\psi(P)}
          \mid
          P\}\cap
        \{\ket{\psi(Q)}
          \mid
          Q\} \\
        =\mathrm{span}\bigl\{\ket[\big]{\psi\bigl(M(g,
                                                    h)\bigr)}\bigm\vert
                             g,
                             h\in
                             G
                             \wedge
                             h
                             g
                             =g
                              h\bigr\}.
    \end{multline}
\end{lemma}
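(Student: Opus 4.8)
The plan is to establish the two inclusions in~\eqref{eq:closure} separately; in both directions the engine is the same triple of inputs: the concatenation lemmas (so that the tensor obtained from the entire minimal torus, before closing, is again $(G,\omega)$-symmetric and $(G,\omega)$-injective), the explicit form of $\mathcal{P}^\omega$ and of the MPOs $V^\omega(g)$ from Corollary~\ref{cor:twisted_projection}, and the 3-cocycle identity~\eqref{eq:3-cocycle_definition} for tracking the $\omega$-phases. Throughout I use that the three patterns $M$, $P$, $Q$ correspond to three distinct diagrams for identifying the virtual boundary of the minimal torus (the three pictures of Figure~\ref{fig:closures}), so that $\{\ket{\psi(X)}\mid X\}$ is, for each $X\in\{M,P,Q\}$, the image of a \emph{linear} map $X\mapsto\ket{\psi(X)}$ on closure tensors, hence a linear subspace; consequently the left-hand side of~\eqref{eq:closure} is a linear subspace.

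For the inclusion~$\supseteq$ it therefore suffices to show that each generator $\ket[\big]{\psi\bigl(M(g,h)\bigr)}$ with $hg=gh$ lies in all three sets. Membership in the first set is immediate, and for the other two I would exhibit explicit closure tensors $P(g,h)$ and $Q(g,h)$ and verify $\ket[\big]{\psi\bigl(M(g,h)\bigr)}=\ket[\big]{\psi\bigl(P(g,h)\bigr)}=\ket[\big]{\psi\bigl(Q(g,h)\bigr)}$ by direct computation. Passing from one boundary-identification diagram to another is a local rearrangement of the branching structure on the torus, and since $M(g,h)$ is supported on the flat $G$-connection $(\alpha,\alpha g,\alpha gh,\alpha h)$, this rearrangement merely transports the value $\lambda_M(\alpha;g,h)$ to $\lambda_P(\alpha;g,h)$, resp.\ $\lambda_Q(\alpha;g,h)$, up to $\omega$-factors that collapse via~\eqref{eq:3-cocycle_definition}. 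In particular the condition $hg=gh$ is preserved, so these are bona fide closure tensors of $P$- and $Q$-type.

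For the inclusion~$\subseteq$, let $\ket{\psi}$ lie in the intersection, with witnesses $M$, $P$, $Q$, one per pattern. First I would reduce to the symmetric subspace: since the contracted minimal-torus tensor is $(G,\omega)$-symmetric, contracting it against any $X$ is unchanged after inserting $\mathcal{P}^\omega$ on the length-four virtual boundary loop, and as $\mathcal{P}^\omega$ is Hermitian this gives $\ket{\psi(X)}=\ket[\big]{\psi\bigl(\mathcal{P}^\omega X\bigr)}$, so we may assume $M,P,Q\in\image\mathcal{P}^\omega$. Next I would invert using $(G,\omega)$-injectivity: applying the companion tensors $\tilde{A}$ to $\ket{\psi}$ collapses the \textsc{PEPS} part to $\mathcal{P}^\omega$, so within $\image\mathcal{P}^\omega$ the physical state and the closure tensor determine each other up to the torus gluing. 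Transporting the three witnesses to a common branching structure via the invertible rearrangement maps used in the $\supseteq$ part turns $\ket{\psi(M)}=\ket{\psi(P)}=\ket{\psi(Q)}$ into a single identity among closure tensors in $\image\mathcal{P}^\omega$; expanding these in the virtual vectors $\ket{\alpha,\alpha g,\alpha gh,\alpha h}$ with arbitrary $g,h\in G$ (using the explicit action of $V^\omega(g)$), the three patterns differ only in how the holonomies $g$, $h$ are distributed over the two cycles of the torus, so demanding that one and the same vector be of $M$-, $P$- and $Q$-type at once forces $hg=gh$ for every contributing pair and the coefficient to depend only on $(\alpha;g,h)$. Thus $M$ is a linear combination of the flat-connection closure tensors $M(g,h)$ over commuting pairs, and linearity of $X\mapsto\ket{\psi(X)}$ gives $\ket{\psi}\in\mathrm{span}\bigl\{\ket[\big]{\psi\bigl(M(g,h)\bigr)}\bigm\vert hg=gh\bigr\}$.

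The main obstacle is the last step of the $\subseteq$ direction: making precise the invertible maps relating the three boundary-identification conventions on the minimal torus, controlling the $\omega$-phases they generate through repeated use of~\eqref{eq:3-cocycle_definition}, and then verifying that the only elements of $\image\mathcal{P}^\omega$ which are simultaneously ``diagonal'' for all three conventions are those supported on flat $G$-connections. This is precisely where the commutativity constraint $hg=gh$ is produced, since a single closure pattern only pins down one of the two holonomies.
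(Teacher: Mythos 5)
Your plan matches the paper's proof in both structure and technique: for $\supseteq$, exhibit explicit $P$- and $Q$-type closures reproducing $\ket{\psi(M(g,h))}$ via local rebranching and 3-cocycle identities; for $\subseteq$, reduce to $\image\mathcal{P}^\omega$ by $(G,\omega)$-symmetry, use $(G,\omega)$-injectivity to invert the bulk, and extract a support constraint on the closure tensor from the simultaneous consistency of the three gluing conventions. The one place where the paper is sharper than your sketch is the mechanism that produces the flat-connection constraint: rather than ``expanding'' $M$ in vectors of the form $\ket{\alpha,\alpha g,\alpha gh,\alpha h}$ (which are not a basis of $\image\mathcal{P}^\omega$ and on which a general closure need not be supported), the paper composes the three ``inverse'' maps $M\to P\to Q\to M$ cyclically, obtaining a self-consistency identity $\mathcal{T}\ket{M}=\ket{M}$ for an explicit diagonal projection $\mathcal{T}$; it is $\mathcal{T}$ that enforces the support on flat commuting pairs, after which the $\ket{M(g,h)}$ visibly span $\image\mathcal{T}$ — which is exactly the obstacle you flag at the end.
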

\begin{proof}
    Without loss of generality we study a minimal torus consisting
    of two $(G,\omega)$-injective triangle tensors~$(A_j)$ and~$(B_k)$.
    
    Let us first show that the right hand side is contained in the left
    hand side. Indeed, given the closure tensor~$M(g,h)$ we have
    \begin{equation*}
        \vcorrect{34bp}{\includegraphics[scale=0.25]{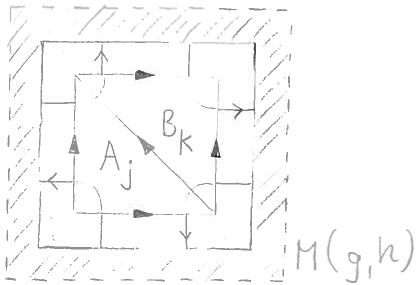}}
        =\vcorrect{50bp}{\includegraphics[scale=0.25]{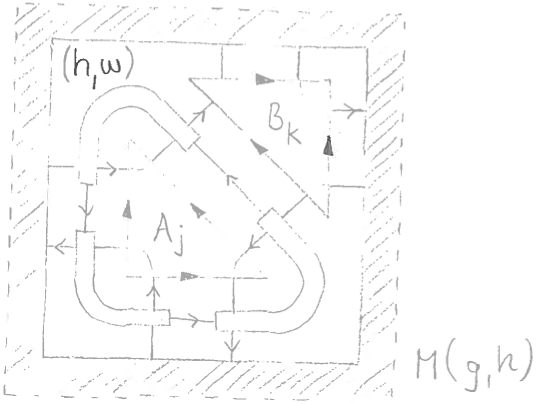}}
        =\vcorrect{36bp}{\includegraphics[scale=0.25]{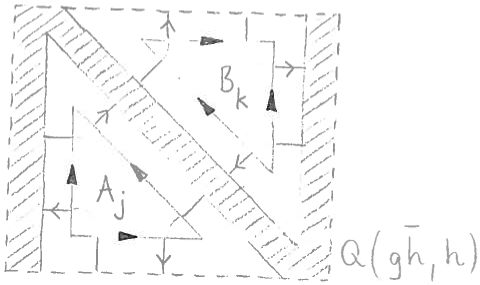}}
    \end{equation*}
    if we set
    \begin{equation*}
        \lambda_Q(\alpha;
                  g
                  h^{-1}\mkern-3mu,
                  h)
        \coloneqq
         \lambda_M(\alpha;
                   g,
                   h)\mkern2mu
         \frac{\omega\bigl((h^{-1})^\alpha\mkern-2mu,
                           \alpha
                           g,
                           h\bigr)}
              {\omega\bigl((g
                            h^{-1})^\alpha\mkern-2mu,
                           \alpha,
                           h\bigr)\mkern2mu
               \omega\bigl((g^{-1})^\alpha\mkern-2mu,
                           \alpha
                           g,
                           h\bigr)}.
    \end{equation*}
    Similarly one can show that
    $\ket[\big]{\psi\bigl(M(g,h)\bigr)}=\ket[\big]{\psi\bigl(P(g,g^{-1}h)\bigr)}$
    for some appropriately chosen~$\lambda_P(\alpha;g,h)$.
    
    \bigskip
    Now we want to show that the left hand side is contained in the right hand
    side. We may assume that all closure tensors~$M$, $P$,
    and~$Q$ are $(G,\omega)$-symmetric, i.e.
    $\mathcal{P}^\omega\mkern2mu\ket{X}=\ket{X}$,
    because the triangle tensors are. 
    Suppose $\ket{\psi(M)}=\ket{\psi(P)}=\ket{\psi(Q)}$
    for some closures~$M$, $P$ and~$Q$.
    We can access the closure~$M$ and express it in terms of~$P$ by applying
    a certain \enquote{inverse} map
    to $\ket{\psi(M)}=\ket{\psi(P)}$. Then we obtain
    \begin{align}
        \braket{\alpha,
                \beta,
                \gamma,
                \delta}
               {M}
        & =\frac{1}
                {\abs{G}}
           \sum_{g\in
                 G}
           \vcorrect{54bp}{\includegraphics[scale=0.25]{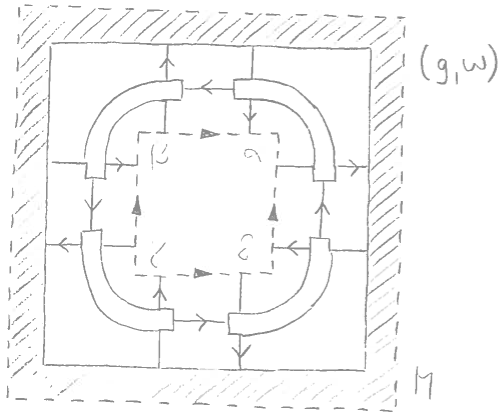}}
           \nonumber
           \displaybreak[0] \\
        & =\abs{G}
           \sum_{j,
                 k}
           \vcorrect{30bp}{\includegraphics[scale=0.25]{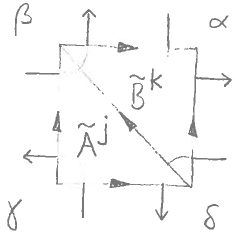}}\mkern12mu
           \vcorrect{34bp}{\includegraphics[scale=0.25]{figure24}}
           \nonumber
           \displaybreak[0] \\
        & =\abs{G}
           \sum_{j,
                 k}\mkern4mu
           \vcorrect{41bp}{\includegraphics[scale=0.25]{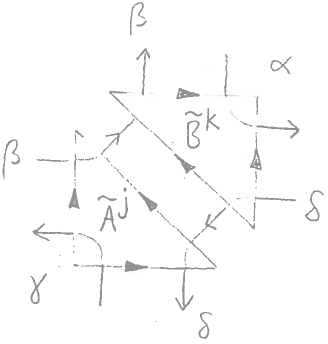}}\mkern12mu
           \vcorrect{46bp}{\includegraphics[scale=0.25]{figure25}}
           \nonumber
           \displaybreak[0] \\
        & =\frac{1}
                {\abs{G}}
           \sum_{g,
                 h\in
                 G}
           \delta_{\alpha
                   h,
                   \beta
                   g}\mkern2mu
           \delta_{\gamma
                   g,
                   \delta
                   h}\mkern4mu
           \vcorrect{75bp}{\includegraphics[scale=0.25]{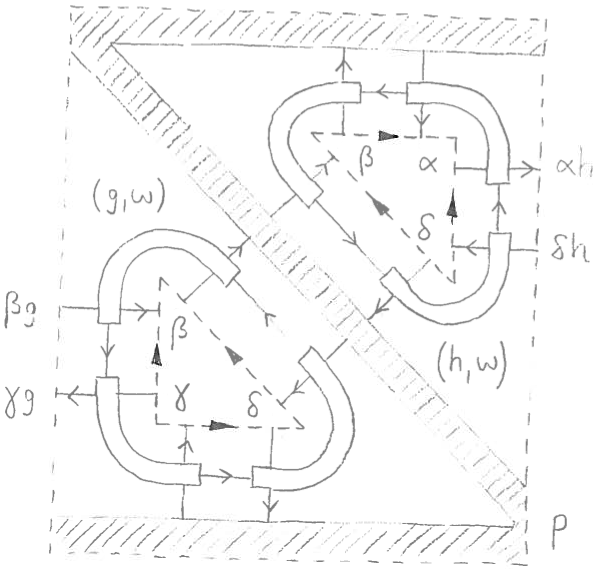}}
           \nonumber
           \displaybreak[0] \\
        & =\sum_{h\in
                 G}
           \delta_{\beta,
                   \alpha
                   h}\mkern2mu
           \delta_{\gamma,
                   \delta
                   h}\mkern2mu
           \frac{\omega(\alpha
                        \beta^{-1}\mkern-3mu,
                        \beta,
                        h)\mkern2mu
                 \omega(\beta
                        \delta^{-1}\mkern-3mu,
                        \delta,
                        h)}
                {\omega(\alpha
                        \delta^{-1}\mkern-3mu,
                        \delta,
                        h)}
           \nonumber \\
        & \hphantom{=\sum_{h\in
                           G}{}}
           \frac{1}
                {\abs{G}}
           \sum_{g\in
                 G}
           \frac{\omega(\alpha
                        \beta^{-1}\mkern-3mu,
                        \beta
                        h,
                        g)\mkern2mu
                 \omega(\beta
                        h
                        \gamma^{-1}\mkern-3mu,
                        \gamma,
                        g)}
                {\omega(\alpha
                        h
                        \delta^{-1}\mkern-3mu,
                        \delta,
                        g)\mkern2mu
                 \omega(\delta
                        \gamma^{-1}\mkern-3mu,
                        \gamma,
                        g)}\mkern2mu
           \braket{\alpha
                   h
                   g,
                   \beta
                   h
                   g,
                   \gamma
                   g,
                   \delta
                   g}
                  {P}
           \nonumber
           \displaybreak[0] \\
        & =\sum_{h\in
                 G}
           \delta_{\beta,
                   \alpha
                   h}\mkern2mu
           \delta_{\gamma,
                   \delta
                   h}\mkern2mu
           \frac{\omega(\alpha
                        \beta^{-1}\mkern-3mu,
                        \beta,
                        h)\mkern2mu
                 \omega(\beta
                        \delta^{-1}\mkern-3mu,
                        \delta,
                        h)}
                {\omega(\alpha
                        \delta^{-1}\mkern-3mu,
                        \delta,
                        h)}\mkern2mu
           \braket{\alpha
                   h,
                   \beta
                   h,
                   \gamma,
                   \delta}
                  {P}.
           \label{eq:closure_M_from_P}
    \end{align}
    
    We would like to express the closure~$P$ in terms of~$Q$ now. This
    can be achieved by applying a slightly different \enquote{inverse}
    map to $\ket{\psi(P)}=\ket{\psi(Q)}$:
    \begin{align}
        \braket{\alpha,
                \beta,
                \gamma,
                \delta}
               {P}
        & =\abs{G}
           \sum_{j,
                 k}\mkern3mu
           \vcorrect{48bp}{\includegraphics[scale=0.25]{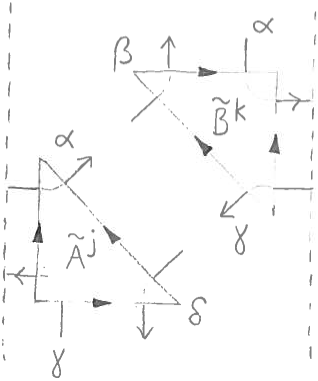}}\mkern6mu
           \vcorrect{46bp}{\includegraphics[scale=0.25]{figure25}}
           \nonumber
           \displaybreak[0] \\
        & =\abs{G}
           \sum_{j,
                 k}\mkern3mu
           \vcorrect{48bp}{\includegraphics[scale=0.25]{figure35}}\mkern6mu
           \vcorrect{36bp}{\includegraphics[scale=0.25]{figure26}}
           \nonumber
           \displaybreak[0] \\
        & =\frac{1}
                {\abs{G}}
           \sum_{g,
                 h\in
                 G}
           \delta_{\alpha
                   h,
                   \delta
                   g}\mkern2mu
           \delta_{\beta
                   h,
                   \gamma
                   g}
           \vcorrect{85bp}{\includegraphics[scale=0.25]{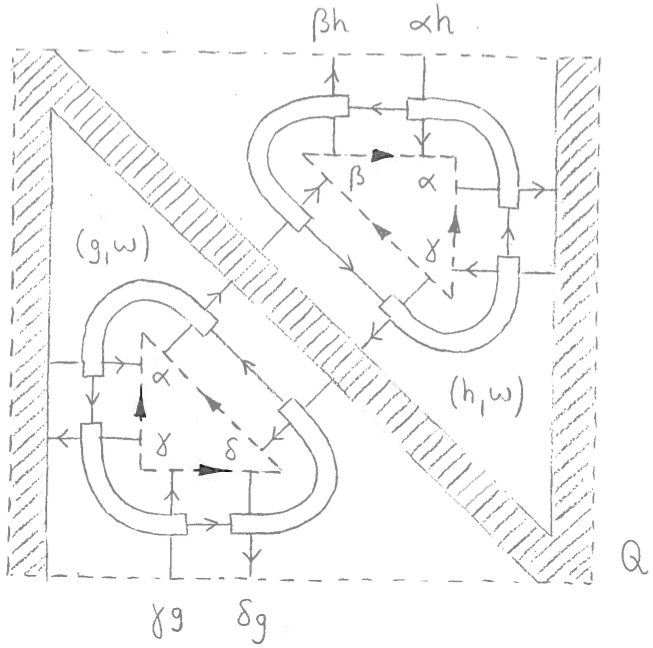}}
           \nonumber
           \displaybreak[0] \\
        & =\sum_{h\in
                 G}
           \delta_{\delta,
                   \alpha
                   h}\mkern2mu
           \delta_{\gamma,
                   \beta
                   h}\mkern2mu
           \frac{\omega(\alpha
                        \beta^{-1}\mkern-3mu,
                        \beta,
                        h)\mkern2mu
                 \omega(\beta
                        \gamma^{-1}\mkern-3mu,
                        \gamma,
                        h)}
                {\omega(\alpha
                        \gamma^{-1}\mkern-3mu,
                        \gamma,
                        h)}\mkern2mu
           \braket{\alpha
                   h,
                   \alpha,
                   \gamma,
                   \gamma
                   h}
                  {Q}.
           \label{eq:closure_P_from_Q}
    \end{align}
    
    Finally, we would like to express~$Q$ in terms of~$M$. As before, we
    can do this by applying a certain \enquote{inverse} map to
    $\ket{\psi(Q)}=\ket{\psi(M)}$:
    \begin{align}
        \braket{\alpha,
                \beta,
                \gamma,
                \delta}
               {Q}
        & =\abs{G}
           \sum_{j,
                 k}\mkern3mu
           \vcorrect{42bp}{\includegraphics[scale=0.25]{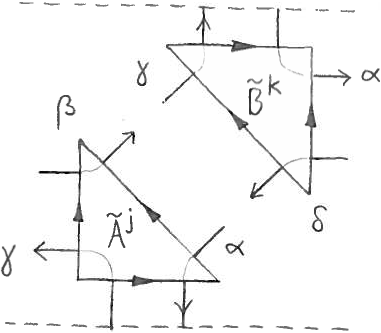}}\mkern6mu
           \vcorrect{34bp}{\includegraphics[scale=0.25]{figure24}}
           \nonumber
           \displaybreak[0] \\
        & =\frac{1}
                {\abs{G}}
           \sum_{g,
                 h\in
                 G}
           \delta_{\gamma
                   h,
                   \beta
                   g}\mkern2mu
           \delta_{\delta
                   h,
                   \alpha
                   g}\mkern4mu
           \vcorrect{62bp}{\includegraphics[scale=0.25]{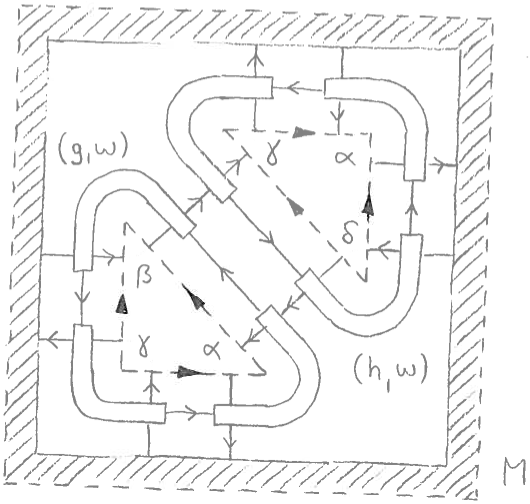}}
           \nonumber
           \displaybreak[0] \\
        & =\sum_{g\in
                 G}
           \delta_{\gamma,
                   \beta
                   g}\mkern2mu
           \delta_{\delta,
                   \alpha
                   g}\mkern2mu
           \frac{\omega(\beta
                        \gamma^{-1}\mkern-3mu,
                        \gamma,
                        g)}
                {\omega(\alpha
                        \gamma^{-1}\mkern-3mu,
                        \gamma,
                        g)\mkern2mu
                 \omega(\beta
                        \alpha^{-1}\mkern-3mu,
                        \alpha,
                        g)}\mkern2mu
           \braket{\alpha,
                   \beta
                   g,
                   \gamma
                   g,
                   \delta}
                  {M}.
           \label{eq:closure_Q_from_M}
    \end{align}
    
    Substituting~\eqref{eq:closure_P_from_Q} and~\eqref{eq:closure_Q_from_M}
    in~\eqref{eq:closure_M_from_P} we obtain
    \begin{equation}
        \label{eq:closure_M_final}
        \braket{\alpha,
                \beta,
                \gamma,
                \delta}
               {M}
        =\smashoperator[l]{\sum_{g,
                                 h\in
                                 G}}
         \delta_{h
                 g,
                 g
                 h}\mkern2mu
         \delta_{\beta,
                 \alpha
                 g}\mkern2mu
         \delta_{\gamma,
                 \alpha
                 g
                 h}\mkern2mu
         \delta_{\delta,
                 \alpha
                 h}\mkern2mu
         \braket{\alpha,
                 \alpha
                 g,
                 \alpha
                 g
                 h,
                 \alpha
                 h}
                {M}.
    \end{equation}
    In other words, we have found that any $(G,\omega)$-symmetric
    closure tensor~$M$ must obey the condition
    $\mathcal{T}\mkern2mu\ket{M}=\ket{M}$
    where the projection~$\mathcal{T}$ is given by
    \begin{equation}
        \mathcal{T}
        \coloneqq
         \smashoperator[l]{\sum_{g,
                                 h\in
                                 G}}
         \delta_{h
                 g,
                 g
                 h}
         \sum_{\alpha\in
               G}
         \ketbra{\alpha,
                 \alpha
                 g,
                 \alpha
                 g
                 h,
                 \alpha
                 h}
                {\alpha,
                 \alpha
                 g,
                 \alpha
                 g
                 h,
                 \alpha
                 h}.
    \end{equation}
    Thus $\ket{M}$ lives in the common eigenspace%
    \footnote{Note that the projections~$\mathcal{P}^\omega$
        and~$\mathcal{T}$ do \emph{not} necessarily
        commute in the entire vector space of closures, however, they
        \emph{do} commute in their common eigenspace (trivially).}
    of the two
    projections~$\mathcal{P}^\omega$ and~$\mathcal{T}$:
    \begin{equation}
        \mathcal{T}
        \mathcal{P}^\omega
        \ket{M}
        =\ket{M}
        =\mathcal{P}^\omega
         \mathcal{T}\mkern2mu
         \ket{M}.
    \end{equation}
    It is not difficult to verify that the vectors
    \begin{equation}
        \ket{M(g,h)}
        =\sum_{\alpha\in
               G}
         \lambda_M(\alpha;
                   g,
                   h)\mkern2mu
         \ket{\alpha,
              \alpha
              g,
              \alpha
              g
              h,
              \alpha
              h}
    \end{equation}
    span
    the image of the projection~$\mathcal{T}$ \emph{independently} of
    their coefficients~$\lambda_M(\alpha;g,h)$.
    Indeed, one has $\mathcal{T}\ket{M(g,h)}=\ket{M(g,h)}$ and
    $\braket{M(g,h)}{M(g'\mkern-2mu,h')}=\delta_{g,g'}\mkern2mu\delta_{h,h'}\mkern2mu\abs{G}$.
    Since the projection~$\mathcal{P}^\omega$
    will be applied by the triangle tensors automatically, we are done.
\end{proof}

\begin{definition}[$c$-regularity]
    Let $c$ be a twisted 2-cocycle on a finite group~$G$.
    \begin{enumerate}
        \item A pair~$(g,h)\in G\times G$ is called \emph{$c$-regular} if $h\in Z(g)$
            and $c_g(h,x)=c_g(x,h)$ for all $x\in Z(g,h)$.
        \item A pair conjugacy class $\mathcal{C}(g,h)=\{(g^t,h^t)\mid t\in G\}$
            is called \emph{$c$-regular}
            if $(g,h)$ is a $c$-regular pair.
    \end{enumerate}
\end{definition}

\begin{remark}
    This definition of a $c$-regular pair conjugacy class is well-defined:
    if the representative~$(g,h)$ is a $c$-regular pair, then so are all
    other pairs~$(g^t,h^t)$ [Lemma~\ref{lem:c_pair_conjugation}].
    Furthermore, if the twisted 2-cocycle~$c'$ differs from~$c$ by a twisted 2-coboundary
    \begin{equation}
        (\mathrm{d}
         \epsilon_g)(x,
                     y)
        \coloneqq
         \frac{\epsilon_g(x)\mkern2mu
               \epsilon_{x^{-1}
                         g
                         x}(y)}
              {\epsilon_g(x
                          y)}
    \end{equation}
    then $(g,h)$ is a $c$-regular pair if and only if it is $c'$-regular
    [Lemma~\ref{lem:twisted_cohomology}]. Hence the notion of a $c$-regular pair
    conjugacy class only depends on the (twisted) cohomology class of~$c$.
    Obviously, $hg=gh$ holds for any $c$-regular
    pair~$(g,h)$. Thus if the (twisted) cohomology class of~$c$ is trivial 
    then the above definition reduces
    to the commuting pair conjugacy classes of~\cite{Schuch:2010p2806}.\qed
\end{remark}

Let us now focus on the particular twisted 2-cocycle~$c^\omega$ which we
obtain from the original 3-cocycle~$\omega$ by setting
\begin{equation}
    \label{eq:c_omega_definition}
    c_g^\omega(x,
                   y)
    \coloneqq
     \frac{\omega(g,
                  x,
                  y)\mkern2mu
           \omega(x,
                  y,
                  \prescript{xy\mkern-2mu}{}{g})}
          {\omega(x,
                  \prescript{x\mkern-2mu}{}{g},
                  y)}.
\end{equation}
One can easily check that $\omega'\sim\omega$ as 3-cocycles
implies $c^{\omega'}\sim c^\omega$ as twisted 2-cocycles
[Lemma~\ref{lem:c_special}], however, the converse is \emph{not} true%
\footnote{For example, all nontrivial 3-cocycles of~$\mathbb{Z}_n$,
    $\mathbb{Z}_n\times\mathbb{Z}_n$ and $D_m$ for $m$~odd yield
    trivial 2-cocycles~$c^\omega\sim1$.}.

Also one has that $(g,h)$ is a
$c^\omega$-regular pair if and only if $(h,g)$ is, hence the definition
is symmetric [Lemma~\ref{lem:c_omega_regularity}].

\begin{proposition}
    \label{prop:closure_minimal_torus}
    A basis of the closure space of $(G,\omega)$-injective
    tensors on a \emph{minimal} torus corresponds bijectively
    to $c^\omega$-regular pair conjugacy classes~$\mathcal{C}(g,h)$.
\end{proposition}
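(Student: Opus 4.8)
The plan is to start from Lemma~\ref{lem:closure_minimal_torus_span}, which already presents the closure space as $\{\,\ket{\psi(M)}\mid\mathcal{T}\ket{M}=\ket{M}\,\}$, the projection $\mathcal{P}^\omega$ being applied for free by the $(G,\omega)$-injective triangle tensors. Decompose $\image\mathcal{T}=\bigoplus_{\mathcal{C}}W_{\mathcal{C}}$, the sum running over pair conjugacy classes $\mathcal{C}=\mathcal{C}(g,h)$ of commuting pairs, with $W_{\mathcal{C}}:=\mathrm{span}\{\ket{\alpha,\alpha g',\alpha g'h',\alpha h'}\mid(g',h')\in\mathcal{C},\ \alpha\in G\}$. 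Since $V^\omega(t)$ maps a commuting pair to a conjugate commuting pair, each $W_{\mathcal{C}}$ is $V^\omega$-invariant, so $\mathcal{P}^\omega$ respects the decomposition and so does the map $\ket{M}\mapsto\ket{\psi(M)}$. The task is then, for each $\mathcal{C}$, to compute the dimension of $\ket{\psi}(W_{\mathcal{C}})$ and to exhibit a spanning vector when it is nonzero.

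The decisive extra ingredient available on a torus is that a closed loop of the symmetry \textsc{MPO} $V^\omega(t)$ threaded around either non-contractible cycle can be absorbed into the $(G,\omega)$-symmetric triangle tensors without changing the physical state: this follows from the concatenation lemma for twisted symmetry (the \textsc{MPO} depends only on the \emph{open} virtual boundary, and a loop has none), in the same spirit in which $\mathcal{P}^\omega$ is absorbed in Lemma~\ref{lem:closure_minimal_torus_span}. Rewriting each loop-absorption as an operation on the four-leg closure tensor, one sees that these loops act on $\ket{M(g,h)}$ by conjugating the holonomy pair, $(g,h)\mapsto(g^t,h^t)$, up to a rescaling of $\lambda_M$. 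Iterating over all such loops shows that $\ket{\psi(M(g,h))}$ depends on $(g,h)$ only through its class $\mathcal{C}(g,h)$, up to a scalar; hence $\ket{\psi}(W_{\mathcal{C}})$ is at most one-dimensional, and the closure space has dimension at most the number of commuting pair conjugacy classes. Moreover, applying the $(G,\omega)$-injective inverse (the torus analogue of the concatenation-of-injectivity construction) to a vanishing linear combination returns a sum of vectors lying in the pairwise complementary subspaces $W_{\mathcal{C}}$, so the nonzero contributions of distinct classes are automatically linearly independent.

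It remains to decide which classes contribute. Fix a representative $(g,h)$ and restrict to loops of $V^\omega(s)$ with $s$ in the stabilizer $Z(g,h)$, which leave the holonomy pair fixed. Absorbing such a loop yields an identity $\ket{\psi(M(g,h))}=\chi_{g,h}(s)\,\ket{\psi(M(g,h))}$ for a phase $\chi_{g,h}(s)\in\mathrm{U}(1)$; since loops compose ($V^\omega(s_1)V^\omega(s_2)=V^\omega(s_1s_2)$), the function $s\mapsto\chi_{g,h}(s)$ is multiplicative, so either it is trivial and a $Z(g,h)$-symmetrized choice of $\lambda_M$ produces a nonzero $\ket{\psi(M(g,h))}$, or else $\ket{\psi(M(g,h))}=0$. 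A direct computation of this phase from the explicit \textsc{MPO} tensors $T_\pm^\omega$, using the $3$-cocycle relation~\eqref{eq:3-cocycle_definition} repeatedly, expresses $\chi_{g,h}$ through $c^\omega$ (any coboundary ambiguity cancels around the closed loop); combined with Lemmas~\ref{lem:c_pair_conjugation},~\ref{lem:twisted_cohomology} and~\ref{lem:c_omega_regularity}, the triviality of $\chi_{g,h}$ is exactly the condition $c^\omega_g(h,x)=c^\omega_g(x,h)$ for all $x\in Z(g,h)$, i.e.\ $(g,h)$ is a $c^\omega$-regular pair, a property constant on $\mathcal{C}(g,h)$. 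Collecting one surviving state $\ket{\psi(M(g,h))}$ per $c^\omega$-regular class (for a chosen representative and a compatible $\lambda_M$) gives a basis of the closure space, which is the asserted bijection.

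The main obstacle is the cocycle bookkeeping in the last step: one must carefully track the $\mathrm{U}(1)$-phase accumulated as a stabilizer \textsc{MPO} loop is dragged once around the torus and reabsorbed, and then identify its triviality with the definition of $c^\omega$-regularity — this needs the $3$-cocycle identity together with the already-established transformation and cohomology properties of $c^\omega$, and one must be mindful that $Z(g,h)$ need not be abelian (so ``$\chi_{g,h}$ trivial'' has to be read through the referenced lemmas rather than as a naive symmetry of generators). A secondary point requiring care is to show that the kernel of $\ket{\psi}$ on $\mathcal{P}^\omega\image\mathcal{T}$ is \emph{exactly} generated by the loop-absorption relations — that the $(G,\omega)$-injectivity of the triangle tensors is saturated on the torus — so that no further collapse occurs and the dimension count is tight.
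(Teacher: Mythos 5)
Your proposal follows essentially the same route as the paper's proof: the physical content is identical in both cases. The workhorse is the observation that applying $V^\omega(s)$ to the four closure indices can be pushed onto the $(G,\omega)$-symmetric triangles, so the physical state is invariant while the virtual closure transforms by a conjugation of the holonomy pair $(g,h)\mapsto(g^s,h^s)$ accompanied by a $\mathrm{U}(1)$ phase. Your $\chi_{g,h}$ is the paper's $\eta_g(h,\cdot)$, your character-triviality criterion on $Z(g,h)$ is the paper's $c^\omega$-regularity condition (this is exactly what the norm formula $\norm{\ket{M'(g,h)}}^2$ encodes), and your conjugation argument is the paper's computation of $V^\omega(s)\ket{M(g,h)}=\eta_g(h,s)\ket{M(g^s,h^s)}$. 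The only genuine differences are in packaging: the paper fixes a convenient $\lambda_M$ (eq.~\eqref{eq:closure_DW_components}) to make the $\eta$-factor explicit and then computes the Gram matrix $\braket{M'(g,h)}{M'(j,k)}$ directly, whereas you prove linear independence by applying the $(G,\omega)$-injective inverse and exploiting orthogonality of the $W_{\mathcal{C}}$ subspaces. Your argument is a valid alternative and arguably more conceptual, but both hinge on the same algebraic input.

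Two points where your sketch is looser than the paper and would need tightening. First, the phrase \enquote{a closed loop of the symmetry \textsc{MPO} $V^\omega(t)$ threaded around either non-contractible cycle can be absorbed} is at best imprecise: what the concatenation lemma actually licenses absorbing is the $V^\omega$ loop along the \emph{boundary} of the contracted region (the closure seam), and the justification is that $A V^\omega(s) = A$ for the individual symmetric triangles, not a statement about non-contractible cycles per se. The algebraic consequence — $\ket{\psi(M)}=\ket{\psi(\mathcal{P}^\omega M)}$ — is what you actually use, and it is correct, but the geometric gloss as written would mislead a reader into thinking the topologically nontrivial Wilson loops are being trivialized. Second, you assert without proof that when $\chi_{g,h}\equiv 1$ on $Z(g,h)$ a \enquote{$Z(g,h)$-symmetrized choice of $\lambda_M$ produces a nonzero $\ket{\psi(M(g,h))}$}. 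That nonvanishing is not automatic from the character being trivial; it is exactly what the paper's norm computation establishes, and it requires the specific bookkeeping of how $\eta_g(h,\cdot)$ interacts with the coset decomposition $G/Z(g,h)$. Your outline acknowledges that the cocycle bookkeeping is delicate, but this particular step should be flagged as requiring the same explicit computation rather than following by general principle.
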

\begin{proof}
    We will verify the claim
    in three steps. First, we choose certain coefficients~$\lambda_M(\alpha;g,h)$
    such that the virtual $(G,\omega)$-action on the closure states~$\ket{M(g,h)}$
    becomes particularly simple.
    As noted above this does not affect the space spanned by the~$\ket{M(g,h)}$ at all
    (but merely yields different spanning sets for the image of~$\mathcal{T}$).
    Second, we will use these particular states to build manifestly $(G,\omega)$-symmetric
    closure states~$\ket{M'(g,h)}$, i.e. implement the second
    projection~$\mathcal{P}^\omega$ explicitly.
    Third, we will show that any two such states from pairs~$(g,h)$ and~$(g^t,h^t)$
    are identical, that they vanish unless $(g,h)$ is a $c^\omega$-regular pair
    and that states from distinct $c^\omega$-regular pair conjugacy classes are
    linearly independent.
    
    For the first step it will be convenient
    to set
    \begin{equation}
        \label{eq:closure_DW_components}
        \lambda_M(\alpha;
                  g,
                  h)
        \coloneqq
         \frac{\omega(\alpha,
                      g,
                      h)\mkern2mu
               \omega\bigl((g^{-1})^\alpha,
                           \alpha
                           g,
                           h\bigr)}
              {\omega(\alpha,
                      h,
                      g)\mkern2mu
               \omega\bigl((h^{-1})^\alpha,
                           \alpha
                           h,
                           g\bigr)}
    \end{equation}
    for all $hg=gh$.
    Indeed, for this choice we can show that
    \begin{equation}
        \label{eq:closure_action}
        V^\omega(s)\mkern2mu
        \ket{M(g,
               h)}
        =\eta_g(h,
                s)\mkern2mu
         \ket{M(g^s\mkern-2mu,
                h^s)}
    \end{equation}
    holds for any $s\in G$.
    Here we defined for all $x\in Z(g)$ and $y\in G$
    \begin{equation}
        \label{eq:eta_definition}
        \eta_g(x,
               y)
        \coloneqq
         \frac{c_g^\omega(y^{-1}\mkern-3mu,
                              x^y)}
              {c_g^\omega(x,
                              y^{-1})}
        =\frac{\omega(g,
                      y^{-1}\mkern-3mu,
                      x^y)\mkern2mu
               \omega(y^{-1}\mkern-3mu,
                      x^y\mkern-2mu,
                      g^y)\mkern2mu
               \omega(x,
                      g,
                      y^{-1})}
              {\omega(g,
                      x,
                      y^{-1})\mkern2mu
               \omega(y^{-1}\mkern-3mu,
                      g^y,
                      x^y)\mkern2mu
               \omega(x,
                      y^{-1}\mkern-3mu,
                      g^y)}.
    \end{equation}
    The validity of~\eqref{eq:closure_action} can be seen from
    the non-vanishing matrix elements
    \begin{equation*}
        \expval{\alpha,
                \alpha
                g^s\mkern-2mu,
                \alpha
                (g
                 h)^s\mkern-2mu,
                \alpha
                h^s}
               {V^\omega(s)}
               {M(g,
                  h)}
        =\mu(\alpha;
             g,
             h,
             s)\mkern2mu
         \lambda_M(\alpha
                   s;
                   g,
                   h)
    \end{equation*}
    where
    \begin{equation*}
        \mu(\alpha;
            g,
            h,
            s)
        =\frac{\omega\bigl((g^{-1})^{\alpha
                                     s}\mkern-2mu,
                           \alpha
                           g^s\mkern-2mu,
                           s\bigr)\mkern2mu
               \omega\bigl((h^{-1})^{\alpha
                                     s}\mkern-2mu,
                           \alpha
                           (g
                            h)^s\mkern-2mu,
                           s\bigr)}
              {\omega\bigl((h^{-1})^{\alpha
                                     s}\mkern-2mu,
                           \alpha
                           h^s\mkern-2mu,
                           s\bigr)\mkern2mu
               \omega\bigl((g^{-1})^{\alpha
                                     s}\mkern-2mu,
                           \alpha
                           (g
                            h)^s\mkern-2mu,
                           s\bigr)},
    \end{equation*}
    and the fact that
    \begin{equation*}
        \frac{\mu(\alpha;
                  g,
                  h,
                  s)\mkern2mu
              \lambda_M(\alpha
                        s;
                        g,
                        h)}
             {\lambda_M(\alpha;
                        g^s\mkern-2mu,
                        h^s)}
        =\eta_g(h,
                s)
    \end{equation*}
    [Lemma~\ref{lem:eta_neutral}].
    
    As a second step, we may simply use~\eqref{eq:closure_action} in order to
    apply the second projection~$\mathcal{P}^\omega$ and obtain the manifestly $(G,\omega)$-symmetric
    states
    \begin{equation}
        \ket{M'(g,
                h)}
        \coloneqq
         \frac{1}
              {\abs{G}}
         \sum_{s\in
               G}
         \eta_g(h,
                s)\mkern2mu
         \ket{M(g^s\mkern-2mu,
                h^s)}.
    \end{equation}
    
    Turning to the third step, we find that
    \begin{equation*}
        \ket{M'(g^t\mkern-2mu,
                h^t)}
        =\frac{1}
              {\eta_g(h,
                      t)}\mkern2mu
         \ket{M'(g,
                 h)}
    \end{equation*}
    for all $t\in G$ [Lemma~\ref{lem:eta_property1}],
    hence all states within the same commuting pair conjugacy
    class are identical. As a trivial consequence, should $\ket{M'(g,h)}$
    vanish for a certain commuting pair~$(g,h)$ then it does so for the entire
    pair conjugacy class~$\mathcal{C}(g,h)$.
    But are there states~$\ket{M'(g,h)}$ which actually vanish?
    We obtain
    \begin{equation}
        \label{eq:overlap}
        \braket{M'(g,
                   h)}
               {M'(j,
                   k)}
        =\sum_{s\in
               G}
         \frac{1}
              {\eta_g(h,
                      s)}\mkern2mu
         \delta_{j,
                 g^s}\mkern2mu
         \delta_{k,
                 h^s}
    \end{equation}
    for all $hg=gh$ and $kj=jk$ [Lemma~\ref{lem:eta_property1}]. In particular, we have the norm
    \begin{equation*}
        \norm[\big]{\ket{M'(g,
                            h)}}^2
        =\smashoperator[l]{\sum_{s\in
                                 Z(g,
                                   h)}}
         \frac{1}
              {\eta_g(h,
                      s)} \\
        =\begin{cases}
             \abs{Z(g,h)} & \text{if $(g,h)$ is $c^\omega$-regular}, \\
                  0       & \text{else},
         \end{cases}
    \end{equation*}
    where the last equality follows from~\cite{Dijkgraaf:1990p2241,Hu:mIFulyz6}.
    In other words, the state~$\ket{M'(g,h)}$ vanishes
    unless the pair~$(g,h)$ is $c^\omega$-regular.
    Hence focussing
    on $c^\omega$-regular pairs~$(g,h)$ and~$(j,k)$ it is obvious
    from~\eqref{eq:overlap} that closure states from distinct
    $c^\omega$-regular pair conjugacy classes are orthogonal, thus
    linearly independent.
\end{proof}

\begin{theorem}[Closure space]
    \label{thm:closure_torus}
    A basis of the closure space of $(G,\omega)$-injective
    tensors on an \emph{arbitrary} torus corresponds bijectively
    to $c^\omega$-regular pair conjugacy classes~$\mathcal{C}(g,h)$.
    
    Furthermore, the dimension of this closure space only depends on the twisted cohomology
    class of~$c^\omega$.
\end{theorem}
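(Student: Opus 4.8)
The plan is to reduce the case of an arbitrary torus to the minimal torus already handled in Proposition~\ref{prop:closure_minimal_torus}. The key observation is that a triangulation of the torus by branched triangle tensors can always be coarse-grained: by the concatenation lemmas for twisted symmetry and twisted injectivity, any contraction of $(G,\omega)$-injective triangle tensors along common oriented edges is again $(G,\omega)$-injective, with a twisted symmetry \textsc{MPO} that only depends on the virtual boundary of the contracted region. Thus, for any triangulation of the torus I would group the triangles into two pieces whose contraction reproduces the combinatorics of the minimal torus (two triangles glued along their boundaries in the standard way), obtaining a single effective $(G,\omega)$-injective tensor whose closure space is the one we want to compute.

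The main step is to show that coarse-graining does \emph{not} change the closure space, i.e. that the intersection of the three closure families in Lemma~\ref{lem:closure_minimal_torus_span} is insensitive to how finely the torus is triangulated. For the inclusion ``$\supseteq$'' one checks that each basis state $\ket{\psi(M'(g,h))}$ constructed from the coarse-grained tensor, with the coefficients~\eqref{eq:closure_DW_components}, is simultaneously a valid $M$-, $P$- and $Q$-closure of the \emph{fine} triangulation as well; this follows because the identities~\eqref{eq:sym_rule_1} and~\eqref{eq:sym_rule_2} used in the concatenation lemmas let us pass the closure tensor through the interior edges, and the coefficient relations exhibited in the proof of Lemma~\ref{lem:closure_minimal_torus_span} are purely combinatorial in the boundary group elements. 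For ``$\subseteq$'' one runs the argument of Lemma~\ref{lem:closure_minimal_torus_span} verbatim on the coarse-grained tensor: any state expressible in all three fine closures is, after contracting the interior, expressible in all three coarse closures, hence lies in $\image\mathcal{T}$, hence in the span of the $\ket{M(g,h)}$. Therefore the closure space of the arbitrary torus equals that of the minimal torus, and Proposition~\ref{prop:closure_minimal_torus} gives the claimed bijection with $c^\omega$-regular pair conjugacy classes.

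For the final sentence, note that the number of $c^\omega$-regular pair conjugacy classes is a purely group-theoretic quantity once the twisted $2$-cocycle $c^\omega$ is fixed, and by the remark following the definition of $c$-regularity (relying on Lemma~\ref{lem:twisted_cohomology}) the set of $c$-regular pairs — and hence the count of $c$-regular pair conjugacy classes — depends only on the twisted cohomology class of $c$. Applying this to $c = c^\omega$ shows the dimension of the closure space depends only on the twisted cohomology class of $c^\omega$; in particular, by Lemma~\ref{lem:c_special}, cohomologous $3$-cocycles $\omega$ yield the same dimension, even though (as noted in the footnote) inequivalent $\omega$ may give the same $c^\omega$.

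I expect the genuine obstacle to be a clean bookkeeping of the branching structures: an arbitrary triangulation of the torus carries an acyclic edge orientation that need not match the minimal one, so before coarse-graining I must argue that the concatenation lemmas apply for \emph{every} admissible branching structure (they are stated for arbitrary oriented boundary edges, so this should go through) and that the resulting effective two-triangle complex can be chosen with the standard branching used in Lemma~\ref{lem:closure_minimal_torus_span}. Verifying that the three ``inverse'' maps in the proof of Lemma~\ref{lem:closure_minimal_torus_span} still close up correctly once interior edges are present — rather than merely invoking the concatenation lemmas as black boxes — is where the diagrammatic calculation actually has to be done, though it is routine given~\eqref{eq:sym_rule_1} and~\eqref{eq:sym_rule_2}.
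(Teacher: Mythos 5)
Your proposal takes a genuinely different route from the paper's. The paper does \emph{not} coarse-grain back to the minimal torus: instead it works directly on the fine triangulation, introduces a dedicated ``cross'' tensor~$m(g,h)$ and explicit closure tensors~$M_0(g,h)$, $M_1(g,h)$ (Figure~\ref{fig:closure_tensors}), shows by deformation that all such $M$-, $P$-, $Q$-like closures agree, and then recomputes the $(G,\omega)$-action from scratch to verify that it again produces the factor~$\eta_g(h,s)$ as in~\eqref{eq:closure_action}; only after that does it re-use the symmetrization-and-basis step from Proposition~\ref{prop:closure_minimal_torus}. Your strategy of contracting interior edges and invoking the minimal-torus result is cleaner conceptually, but as you yourself flag, it rests on the claim that the proofs of Lemma~\ref{lem:closure_minimal_torus_span} and Proposition~\ref{prop:closure_minimal_torus} apply ``verbatim'' to the coarse-grained object. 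That is not automatic: the coarse-grained tensor is a $(G,\omega)$-injective \emph{polygon}, not a pair of branched triangles, and the cut-open torus produced by the contraction need not have the four-leg boundary that the $M$-, $P$-, $Q$-closures of the minimal torus are wired to; one also has to reconcile the branching structure of the coarse region with the specific orientations assumed in the minimal-torus pictures. So the ``routine diagrammatic calculation'' you defer is essentially the same work the paper does explicitly with $m(g,h)$ and $M_0$, $M_1$ — neither approach escapes it, but your reduction obscures where it actually lives. Your treatment of the second sentence (dimension depends only on the twisted cohomology class of $c^\omega$, via the remark and Lemma~\ref{lem:twisted_cohomology}) is correct and matches what the paper's result rests on, even though the paper's proof does not spell this out.
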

\begin{proof}
    We first show the analogue of Lemma~\ref{lem:closure_minimal_torus_span}.
    Define the \enquote{cross} tensor~$m(g,h)$ by
    \begin{equation}
        \lambda_m(\alpha,
                  \beta,
                  \gamma;
                  g,
                  h)
        \coloneqq
         \vcorrect{48bp}{\includegraphics[scale=0.25]{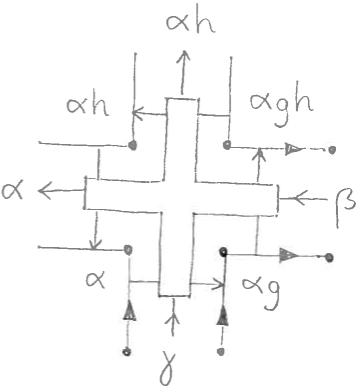}}
        \coloneqq
         \frac{\omega(\alpha,
                      g,
                      h)}
              {\omega(\alpha,
                      h,
                      g)}\mkern2mu
         \frac{\omega(\beta
                      g^{-1}
                      \alpha^{-1}\mkern-2mu,
                      \alpha
                      g,
                      h)}
              {\omega(\alpha
                      \gamma^{-1}\mkern-2mu,
                      \gamma,
                      g)},
    \end{equation}
    with all other elements being zero. Note that this tensor depends
    on the orientation of the virtual boundary at the input side of
    each \textsc{MPO} ring. It is possible to define this tensor
    consistently for other orientations than the one shown here.
    
    \begin{figure}
        \centering
        \includegraphics[scale=0.25]{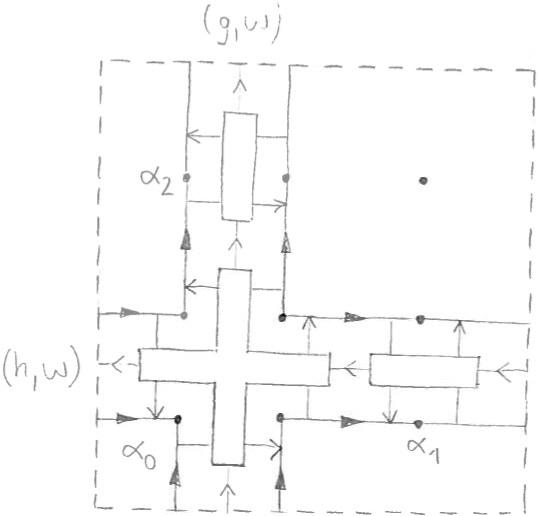}\hfil
        \includegraphics[scale=0.25]{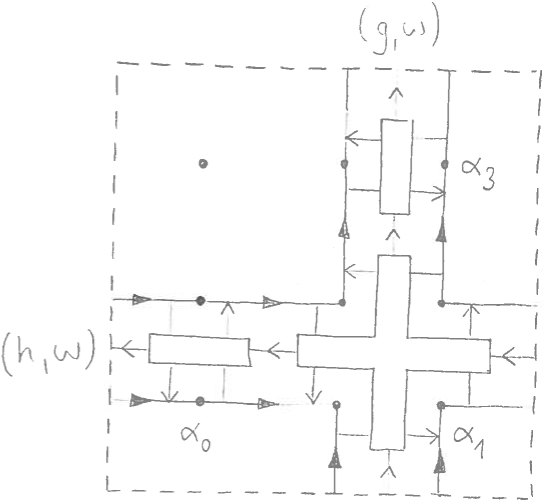}
        \caption
            [Transparency]
            {The closure tensors~$M_0(g,h)$ and~$M_1(g,h)$.}
        \label{fig:closure_tensors}
    \end{figure}
    
    We can evaluate the elements of the closure tensor~$M_0(g,h)$
    in Figure~\ref{fig:closure_tensors}
    by contracting its building blocks. Let us call the result
    $\lambda_{M_0}(\alpha_0,\alpha_1,\alpha_2;g,h)$.
    Using the twisted virtual symmetry one may check that~$M_0(g,h)$ can be deformed freely to other
    closure tensors like~$M_1(g,h)$ \emph{without} any change
    to its building blocks.
    Similarly, one can define $P$- and $Q$-like closures and
    show that any closure can be deformed freely into any other,
    for any fixed cellulation of the torus with $(G,\omega)$-injective
    tensors. This proves the
    $\supset$~part. We leave the $\subset$~part to the reader.
    
    \begin{figure}
        \centering
        \includegraphics[scale=0.25]{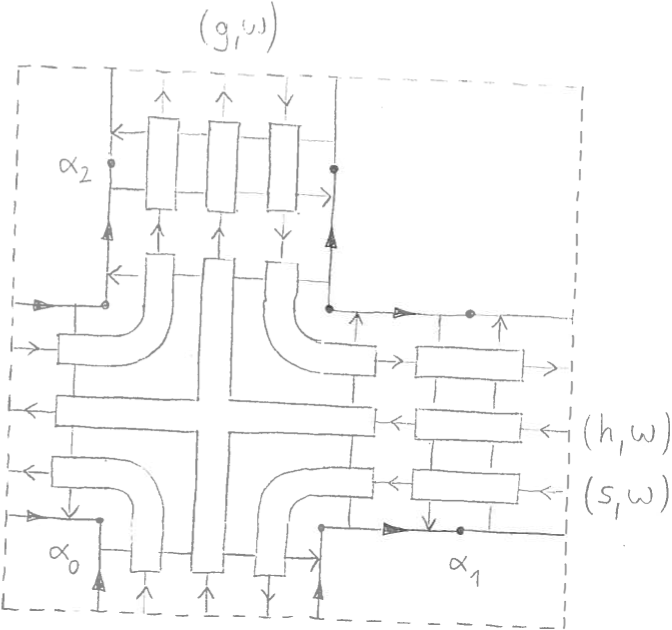}
        \caption
            {The $(G,\omega)$-action on the closure state~$\ket{M_0(g,h)}$.}
        \label{fig:closure_action}
    \end{figure}
    
    We now extend Proposition~\ref{prop:closure_minimal_torus}, i.e.
    we show that the $(G,\omega)$-action on the closure states
    of an arbitrary torus is identical to~\eqref{eq:closure_action},
    which immediately yields the claimed basis of the closure space.
    Without loss of generality, consider the closure state
    \begin{equation*}
        \ket{M_0(g,
                 h)}
        =\smashoperator[l]{\sum_{\alpha_i\in
                                 G}}
         \lambda_{M_0}(\alpha_0,
                       \alpha_1,
                       \alpha_2;
                       g,
                       h)\mkern2mu
         \ket{\alpha_0,
              \alpha_1,
              \alpha_0
              g,
              \alpha_2
              g,
              \alpha_0
              g
              h,
              \alpha_1
              h,
              \alpha_0
              h,
              \alpha_2},
    \end{equation*}
    see Figure~\ref{fig:closure_tensors}. Applying the twisted virtual symmetry as
    in Figure~\ref{fig:closure_action}
    one may show that
    \begin{equation*}
        \frac{\expval{\alpha_0,
                      \alpha_1,
                      \alpha_0
                      g^s\mkern-2mu,
                      \alpha_2
                      g^s\mkern-2mu,
                      \alpha_0
                      (g
                       h)^s\mkern-2mu,
                      \alpha_1
                      h^s\mkern-2mu,
                      \alpha_0
                      h^s\mkern-2mu,
                      \alpha_2}
                     {V^\omega(s)}
                     {M_0(g,
                          h)}}
             {\lambda_{M_0}(\alpha_0,
                            \alpha_1,
                            \alpha_2;
                            g^s\mkern-2mu,
                            h^s)}
        =\eta_g(h,
                s),
    \end{equation*}
    which indeed implies~\eqref{eq:closure_action}.
    We can then construct the symmetric closure states~$\{\ket{M_0'(g,h)}\}$
    and select a basis like in Proposition~\ref{prop:closure_minimal_torus}.
\end{proof}

\begin{remark}
    Observe that we can obtain a closure of the minimal torus
    from the \enquote{cross} tensor~$m(g,h)$ via
    \begin{equation*}
        \vcorrect{44bp}{\includegraphics[scale=0.25]{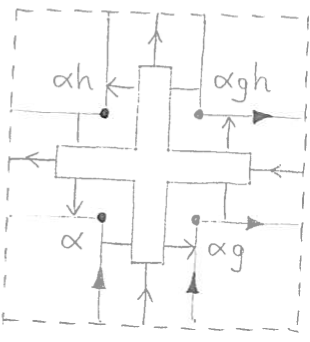}}
        =\lambda_m(\alpha,
                   \alpha,
                   \alpha
                   h;
                   g,
                   h)
        =\lambda_M(\alpha;
                   g,
                   h).
    \end{equation*}
    This illuminates our particular choice of coefficients in the proof
    of Proposition~\ref{prop:closure_minimal_torus}.\qed
\end{remark}

\begin{example}
    Let $G=\mathbb{Z}_2\times\mathbb{Z}_2\times\mathbb{Z}_2$
    and $\mathcal{L}^\omega$ the closure space of $(G,\omega)$-injective
    tensors on a torus. There exists a 3-cocycle~$\omega$ such that
    \begin{equation*}
        \dim
        \mathcal{L}^\omega
        =22
        <\abs{G}^2
        =\dim
         \mathcal{L}.
    \end{equation*}
    This clearly shows that the closure space of
    $(G,\omega)$-injective tensors may be significantly smaller
    than the one for $G$-injective tensors, i.e. the untwisted
    case.\qed
\end{example}

\section{Parent Hamiltonians}
\label{sec:hamiltonians}

Any \textsc{PEPS}~$\ket{\psi}$ whose local tensor for region~$R$
is~$(A_i)$ has a reduced density operator~$\rho_R$ with
$\mathrm{supp}(\rho_R)\subset\mathcal{L}_R$.
It is then clear that every Hamiltonian
\begin{equation}
    \label{eq:parent_hamiltonian}
    H
    =\sum_v
     h_v
\end{equation}
where $h_v\geq0$ acts on the region~$R_v$, i.e. all polygons around the vertex~$v$
of the cellulation, is a parent Hamiltonian of the \textsc{PEPS}~$\ket{\psi}$
if $\mathcal{L}_v\subseteq\ker(h_v)$. Furthermore, such a Hamiltonian
is frustration-free, i.e. $h_v\mkern2mu\ket{\psi}=0$.

In order to grow local ground states into global ground states while keeping
their properties under tight control we need the following result. 

\begin{theorem}[Intersection]
    \label{thm:intersection}
    Let $R_1$ and $R_2$ be regions with a nontrivial
    intersection~$R=R_1\cap R_2$. Let $(A_i)$, $(B_j)$, $(C_k)$ be compatible
    $(G,\omega)$-injective tensors such that the contraction~$(A_iB_j)$
    corresponds to~$R_1$, $(B_jC_k)$ to~$R_2$ and~$B_j$ to~$R$.
    
    Then
    \begin{equation}
        (\mathcal{L}_{R_1}\otimes
         \mathcal{H}_{R_2\setminus
                      R})\cap
        (\mathcal{H}_{R_1\setminus
                      R}\otimes
         \mathcal{L}_{R_2})
        =\mathcal{L}_{R_1\cup
                      R_2}.
    \end{equation}
\end{theorem}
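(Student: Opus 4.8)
The plan is to prove the two inclusions separately. The inclusion $\supseteq$ is routine: writing $P\coloneqq ABC$ for the full contraction, so $\mathcal{L}_{R_1\cup R_2}=\image(P)$, and noting that $P$ arises from the region-$R_1$ tensor $AB$ by contracting the remaining shared edges against $C$, the partial inner product of any vector of $\image(P)$ with an arbitrary physical state on $R_2\setminus R$ is a vector of the form $AB(\,\cdot\,)$, which lies in $\mathcal{L}_{R_1}$. Hence $\image(P)\subseteq\mathcal{L}_{R_1}\otimes\mathcal{H}_{R_2\setminus R}$, and by the mirror argument $\image(P)\subseteq\mathcal{H}_{R_1\setminus R}\otimes\mathcal{L}_{R_2}$. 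All the work is in the reverse inclusion.

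For $\subseteq$, let $\ket\phi$ lie in the intersection. I would first strip the physical legs off each of the three tensors. Since every single tensor may be taken surjective onto its local physical space, the operators $A\tilde A$, $B\tilde B$, $C\tilde C$ are idempotent (by $\tilde X X=\mathcal{P}^\omega$ and $X\mathcal{P}^\omega=X$) with full image, hence equal to the identity; therefore $\ket\phi=(A\otimes B\otimes C)\ket w$ with $\ket w\coloneqq(\tilde A\otimes\tilde B\otimes\tilde C)\ket\phi$ purely virtual and, after the harmless replacement $\tilde X\mapsto\mathcal{P}^\omega\tilde X$, fixed by $\mathcal{P}^\omega$ on the virtual boundary of each of $A$, $B$, $C$ separately. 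At this stage $\ket w$ still carries two independent copies of every interior edge --- the edges $n$ between $A$ and $B$ and the edges $m$ between $B$ and $C$ --- whereas a vector of $\mathcal{L}_{R_1\cup R_2}=\image(ABC)$ is precisely the image under $A\otimes B\otimes C$ of a virtual state that is \emph{diagonal} along $n$ and $m$ (one obtained from a state on $\partial(R_1\cup R_2)$ by inserting $\sum_g\ket{g}\ket{g}$ along those edges). So it remains to bring $\ket w$ into this glued form.

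This is the crux, and the idea is to exploit the two membership hypotheses beyond the marginal content used so far. From $\ket\phi\in\mathcal{L}_{R_1}\otimes\mathcal{H}_{R_2\setminus R}=\image\bigl((AB)\otimes\mathrm{id}\bigr)$ one obtains a second representation $\ket\phi=\bigl((AB)\otimes C\bigr)\ket{X'}$ in which the $n$-edges are \emph{already} contracted inside $AB$; re-inverting $B$ and $C$ inside this representation (legitimate for the same reason as before, and using that $AB$ is again $(G,\omega)$-injective by the concatenation lemma) shows that $\ket w$ agrees with an $n$-glued state up to a twisted symmetry \textsc{MPO} inserted along the interior cut. By the rules governing concatenation of twisted symmetry, i.e.\ the pull-through identities \eqref{eq:sym_rule_1} and \eqref{eq:sym_rule_2} together with the injectivity relation \eqref{eq:injectivity}, that interior \textsc{MPO} can be deformed out to the outer boundary $\partial(R_1\cup R_2)$, where it is absorbed into the global projection $\mathcal{P}^\omega$ of \eqref{eq:projection}. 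The symmetric manipulation based on $\ket\phi\in\mathcal{H}_{R_1\setminus R}\otimes\mathcal{L}_{R_2}$ does the same for the $m$-edges. Carrying out both yields a virtual state $\ket z$ on $\partial(R_1\cup R_2)$ with $\ket\phi=(ABC)\ket z$, so $\ket\phi\in\mathcal{L}_{R_1\cup R_2}$.

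The step I expect to be the main obstacle is exactly this last one. The twisted symmetry \textsc{MPO} does \emph{not} factorize as a tensor product across an interior cut --- already in the proof of the concatenation-of-twisted-symmetry lemma only the ``upper left and lower right'' \textsc{MPO} tensors cancel on an interior edge --- so the two representations of $\ket\phi$ agree only modulo an \textsc{MPO} supported on the interior edges, and the substantive content is to show that this \textsc{MPO} can genuinely be slid to the boundary and re-absorbed. This is the twisted analogue of the corresponding step for $G$-injective \textsc{PEPS} in~\cite{Schuch:2010p2806}, and it is here that the structural lemmas on concatenation of twisted symmetry and injectivity, rather than plain index bookkeeping, are indispensable.
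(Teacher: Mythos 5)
Your argument for the inclusion $\supseteq$ is correct, and for $\subseteq$ you correctly locate the key ingredients: the injectivity inverses $\tilde A,\tilde B,\tilde C$, the fact that $\ker(X)=\ker(\mathcal P^\omega)$ so that preimages at the virtual level are determined modulo $\mathcal P^\omega$, and the pull-through identities~\eqref{eq:sym_rule_1}, \eqref{eq:sym_rule_2} together with~\eqref{eq:injectivity}. That is indeed the right toolkit, and your diagnosis that the whole theorem lives in one ``slide the interior \textsc{MPO} to the boundary'' step is faithful to what the paper does.

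However, the step you yourself flag as ``the main obstacle'' is not merely hard, it is left as a claim, and your particular organization of the argument makes it a genuine gap rather than a deferred computation. You propose to process the two memberships \emph{independently}: from $\ket\phi\in\mathcal L_{R_1}\otimes\mathcal H_{R_2\setminus R}$ produce an $n$-glued preimage $\ket{w'}$, from $\ket\phi\in\mathcal H_{R_1\setminus R}\otimes\mathcal L_{R_2}$ produce an $m$-glued preimage $\ket{w''}$, then assert that ``carrying out both yields a virtual state $\ket z$ on $\partial(R_1\cup R_2)$.'' But $\ket{w'}$ and $\ket{w''}$ only agree modulo $\ker(A\otimes B\otimes C)=\ker(\mathcal P^\omega\otimes\mathcal P^\omega\otimes\mathcal P^\omega)$, and the set of virtual vectors that are simultaneously $n$-glued and $m$-glued is not $\mathcal P^\omega$-invariant. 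So nothing in your argument shows that there is a \emph{single} preimage that is glued along both cuts; in the abstract setting of ``two subspaces with equal $\mathcal P$-projections'' this kind of claim is simply false, and it is exactly the content of the theorem that it holds here. The paper sidesteps this by never separating the two steps: it fixes the free tensor $X_k$ parametrizing $\mathcal L_{R_1}\otimes\mathcal H_{R_2\setminus R_1}$, substitutes the $Y_i$-parametrization of the \emph{same} state into $\widetilde{(AB)}\ket\phi$, and then in one chain of applications of~\eqref{eq:injectivity} and the symmetry pull-through shows $X_k=C_kZ$ for an explicitly constructed boundary tensor $Z$. That single computation simultaneously handles both cuts and produces the $Z$ you would need. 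To repair your write-up you should abandon the two-independent-gluings structure and instead show directly, starting from $\widetilde{(AB)}\otimes\mathrm{id}$ applied to the two representations of $\ket\phi$, that the parametrizing tensor on one side factors through $C$ contracted with a boundary tensor.
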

\begin{proof}
    \begin{figure}
        \centering
        \includegraphics[scale=0.25]{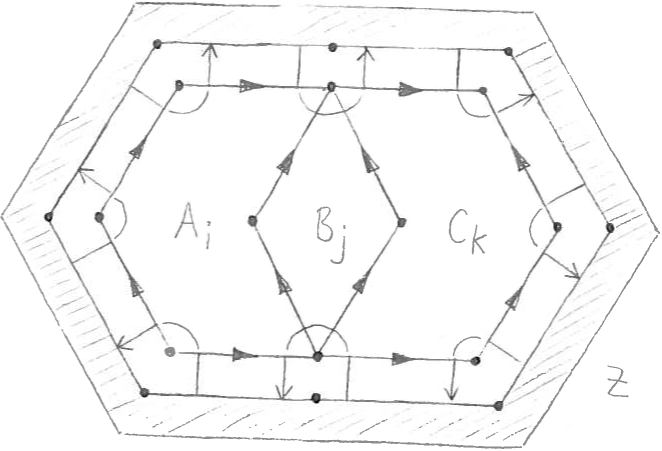}
        \caption
            [Intersection]
            {Intersecting regions~$R_1$ and $R_2$ corresponding to
             $(G,\omega)$-injective tensors~$(A_iB_j)$ and~$(B_jC_k)$
             respectively. Each such region~$R_v$ is naturally associated
             with a vertex~$v$ of the cellulation.}
        \label{fig:intersection}
    \end{figure}

    Without loss of generality let us focus on regions~$R_v$
    as shown in Figure~\ref{fig:intersection}. We will only prove the
    $\subset$~part of the claim since the other inclusion is obvious.
    
    Let $\ttr(A_i,B_j,C_k,Z)$ denote
    the element of the contracted tensor of Figure~\ref{fig:intersection}.
    Then we clearly have
    \begin{align}
        \mathcal{L}_{R_1\cup
                     R_2}
        & =\Bigl\{\sum_{i,
                        j,
                        k}
                  \ttr(A_i,
                       B_j,
                       C_k,
                       Z)\mkern2mu
                  \ket{i,
                       j,
                       k}\Bigm\vert
                  Z\Bigr\}, \\
        \intertext{and}
        \mathcal{L}_{R_1}\otimes
        \mathcal{H}_{R_2\setminus
                     R_1}
        & =\Biggl\{\sum_{i,
                         j,
                         k}
                   \vcorrect{56bp}{\includegraphics[scale=0.25]{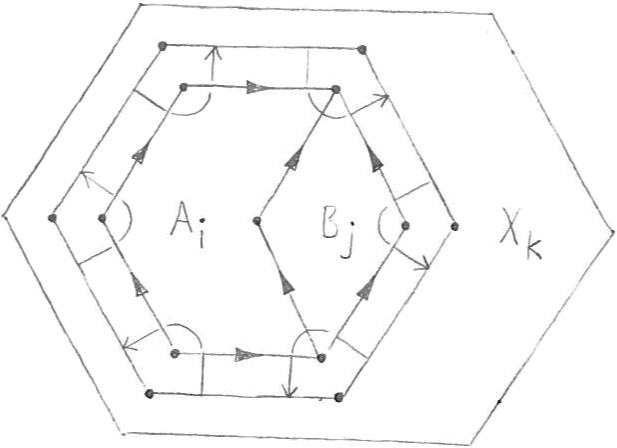}}\mkern3mu
                   \ket{i,
                        j,
                        k}\Biggm\vert
                   X_k\Biggr\},
           \label{eq:subspace1} \\
        \mathcal{H}_{R_1\setminus
                     R_2}\otimes
        \mathcal{L}_{R_2}
        & =\Biggl\{\sum_{i,
                         j,
                         k}
                   \vcorrect{56bp}{\includegraphics[scale=0.25]{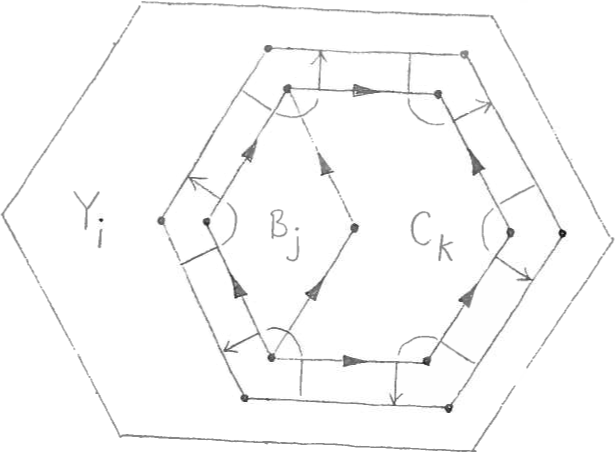}}\mkern3mu
                   \ket{i,
                        j,
                        k}\Biggm\vert
                   Y_i\Biggr\}.
           \label{eq:subspace2}
    \end{align}
    As before we may assume that
    $(X_k)$, $(Y_i)$ and~$Z$ are $(G,\omega)$-symmetric. In order to
    simplify notation set
    \begin{align*}
        \braket{\alpha_1,
                \dots,
                \alpha_6}
               {X_k}
        & \coloneqq
           \vcorrect{56bp}{\includegraphics[scale=0.25]{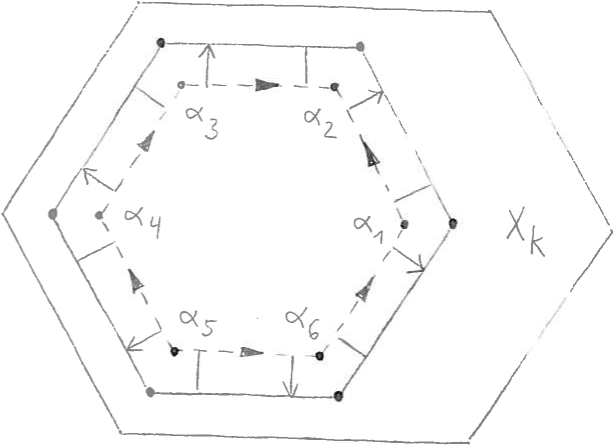}}\mkern3mu.
    \end{align*}
    Intersecting the subspaces~\eqref{eq:subspace1} and~\eqref{eq:subspace2}
    we obtain
    \begin{align*}
        & \braket{\alpha_1,
                  \dots,
                  \alpha_6}
                 {X_k} \\
        & =\sum_{i,
                 j}\mkern3mu
           \vcorrect{48bp}{\includegraphics[scale=0.25]{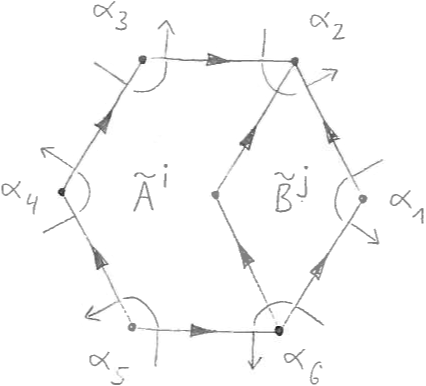}}\mkern9mu
           \vcorrect{56bp}{\includegraphics[scale=0.25]{figure44}}
           \displaybreak[0] \\
        & =\sum_{i,
                 j}\mkern3mu
           \vcorrect{49bp}{\includegraphics[scale=0.25]{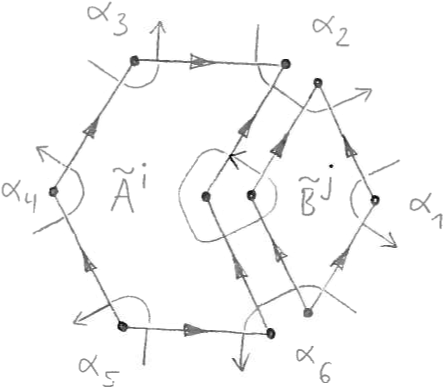}}\mkern9mu
           \vcorrect{56bp}{\includegraphics[scale=0.25]{figure45}}
           \displaybreak[0] \\
        & =\sum_{i,
                 \beta}\mkern3mu
           \vcorrect{49bp}{\includegraphics[scale=0.25]{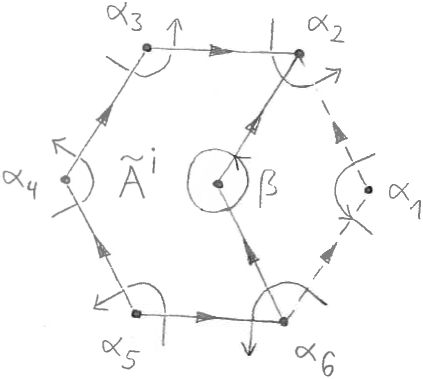}}\mkern9mu
           \vcorrect{70bp}{\includegraphics[scale=0.25]{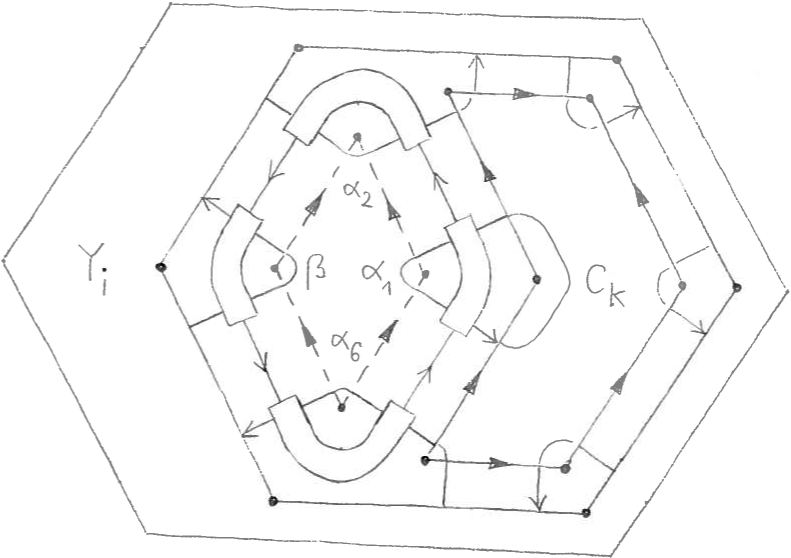}}
           \displaybreak[0] \\
        & =\sum_{i,
                 \beta}\mkern3mu
           \vcorrect{49bp}{\includegraphics[scale=0.25]{figure49}}\mkern9mu
           \vcorrect{70bp}{\includegraphics[scale=0.25]{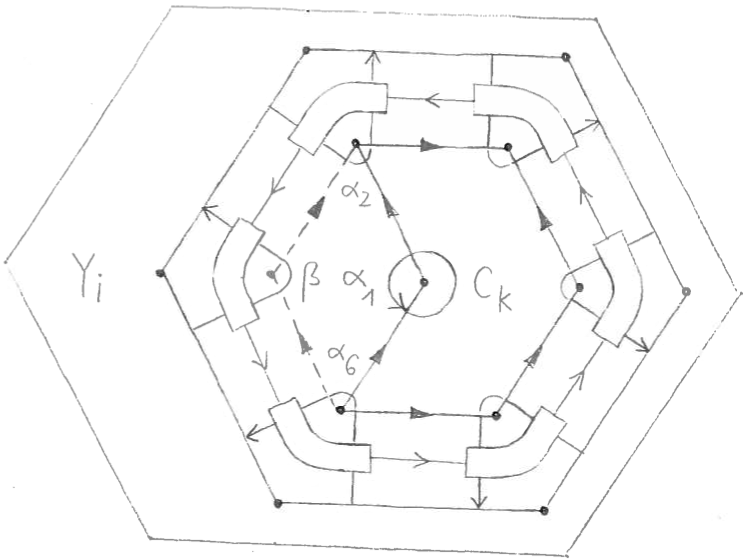}}
           \displaybreak[0] \\
        & =\sum_{i,
                 \beta}\mkern3mu
           \vcorrect{49bp}{\includegraphics[scale=0.25]{figure49}}\mkern9mu
           \vcorrect{56bp}{\includegraphics[scale=0.25]{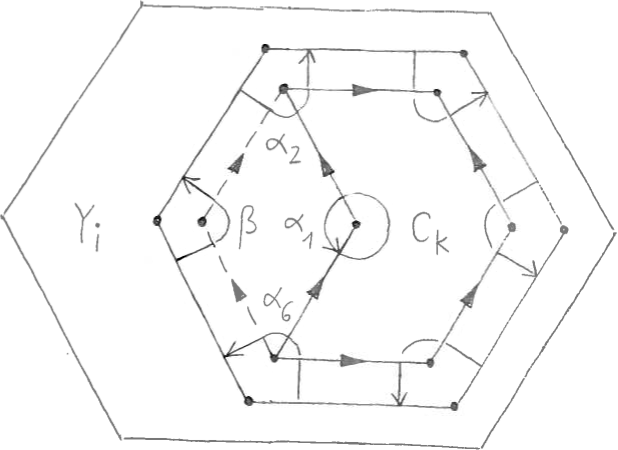}} \\
        & =\braket{\alpha_1,
                   \dots,
                   \alpha_6}
                  {(C_k
                    Z)}
    \end{align*}
    where we defined~$Z$ via
    \begin{multline*}
        \braket{\alpha_1,
                \dots,
                \alpha_8}
               {Z}
        \coloneqq
         \sum_{i,
               \beta}\mkern3mu
         \vcorrect{52bp}{\includegraphics[scale=0.25]{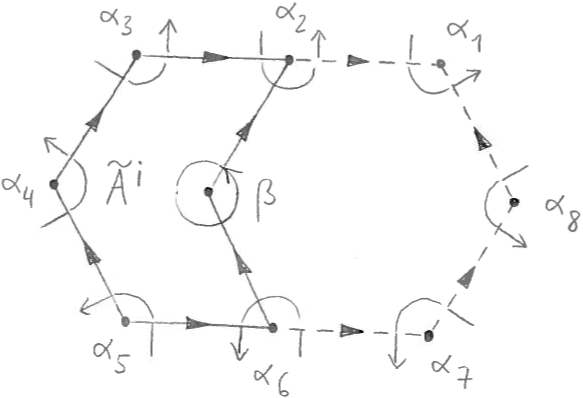}} \\
         \vcorrect{56bp}{\includegraphics[scale=0.25]{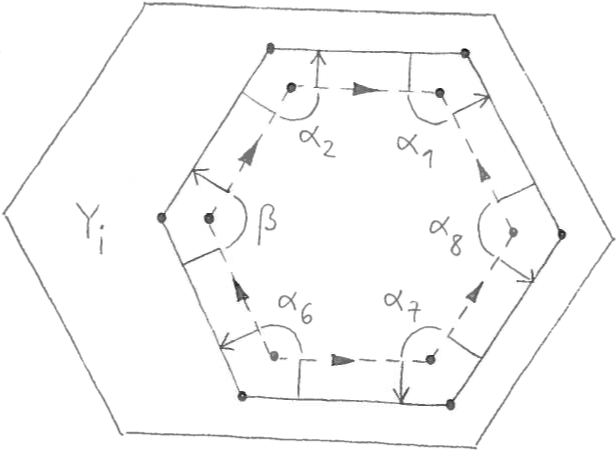}}.
    \end{multline*}
\end{proof}

Now if the local kernels of the Hamiltonian \emph{coincide} with
the local subspaces generated by the tensors, i.e. $\ker(h_v)=\mathcal{L}_v$, then
Theorem~\ref{thm:intersection} ensures that no undesired states
occur in the (local) ground state space as we grow a region~$R_v$
step by step until we cover the whole surface. Hence on a torus
the ground states of such a parent Hamiltonian are exactly
given by the closure space of Theorem~\ref{thm:closure_torus}.

\section{Twisted Isometry}
\label{sec:twisted_isometry}

\subsection{Standard Form}

As has been observed in~\cite{Schuch:2011p3031,Schwarz:2012vn},
any \textsc{PEPS} tensor can be decomposed into a partial isometry~$W$
and a deformation~$Q$. In the context of twisted virtual symmetry this means

\begin{lemma}
    Every $(G,\omega)$-injective tensor~$A$ can be written as
    \begin{equation}
        \label{eq:isometric_form}
        A
        =Q
         \mathcal{P}^\omega
    \end{equation}
    where $Q$ is a positive map.
\end{lemma}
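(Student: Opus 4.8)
The plan is to read the factorization off the polar decomposition of the linear map~$A$. The two hypotheses enter complementarily: $(G,\omega)$-symmetry ($A\mathcal{P}^\omega = A$) bounds $\ker A$ from above, while $(G,\omega)$-injectivity ($\tilde{A}A = \mathcal{P}^\omega$) bounds it from below, and together they pin it down exactly. Concretely: since $\mathcal{P}^\omega$ is a Hermitian projection (Corollary~\ref{cor:twisted_projection}), $A\mathcal{P}^\omega = A$ reads $A(\mathbf{1} - \mathcal{P}^\omega) = 0$, so $\ker A \supseteq (\image\mathcal{P}^\omega)^\perp$; conversely, if $v\in\image\mathcal{P}^\omega$ with $Av = 0$ then $v = \mathcal{P}^\omega v = \tilde{A}Av = 0$, so $\ker A \cap \image\mathcal{P}^\omega = 0$. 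Hence $\ker A = (\image\mathcal{P}^\omega)^\perp$, i.e.\ $(\ker A)^\perp = \image\mathcal{P}^\omega$, and moreover $A$ restricts to an isomorphism $\image\mathcal{P}^\omega \to \mathcal{L}_p$.

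Next I would assemble the decomposition in the usual way~\cite{Schuch:2011p3031,Schwarz:2012vn}. Set $Q\coloneqq(AA^\dagger)^{1/2}$, a positive map on $\mathcal{H}_p$ with $\ker Q = \ker A^\dagger = \mathcal{L}_p^\perp$, so that $Q$ is in fact invertible on $\mathcal{L}_p$; let $Q^{+}$ be its pseudo-inverse and put $W\coloneqq Q^{+}A$. Then $QW = (QQ^{+})A = A$ because $QQ^{+} = \Pi_{\mathcal{L}_p}$ and $\image A \subseteq \mathcal{L}_p$, and $W$ is a partial isometry with $W^\dagger W = A^\dagger (AA^\dagger)^{+} A = \Pi_{(\ker A)^\perp} = \mathcal{P}^\omega$ by the first step, and $WW^\dagger = \Pi_{\mathcal{L}_p}$. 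Thus $A = QW$ is the isometric form with the extra information that the source projection of $W$ is exactly the twisted symmetry projection.

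Finally I would turn $W$ into $\mathcal{P}^\omega$ verbatim by exploiting the freedom in the choice of local physical space noted after the definition of~$\mathcal{L}_p$: relabel the physical index via the unitary $W^\dagger\colon\mathcal{L}_p\to\image\mathcal{P}^\omega$, i.e.\ replace $\mathcal{H}_p$ by $\image\mathcal{P}^\omega$ (a subspace of the virtual space). Under this relabeling $W$ becomes the inclusion $\image\mathcal{P}^\omega\hookrightarrow\mathcal{H}_p$, which as a map on the virtual space is precisely $\mathcal{P}^\omega$; and $A$ becomes $W^\dagger A = (W^\dagger Q W)\mathcal{P}^\omega$, where $W^\dagger Q W$ (extended by zero off $\image\mathcal{P}^\omega$) is again positive. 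Renaming it~$Q$ yields $A = Q\mathcal{P}^\omega$. This relabeling acts only on the physical leg, so the virtual symmetry \textsc{MPO}~$V^\omega(g)$, and everything built on the virtual level, is untouched.

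The only genuinely delicate point is the first step: one must use \emph{both} hypotheses to conclude $(\ker A)^\perp = \image\mathcal{P}^\omega$ exactly, since $(G,\omega)$-symmetry alone allows a strict inclusion that would leave $W$ with a strictly smaller source projection. Everything after that is the standard polar-decomposition argument together with a cosmetic change of physical basis; in particular $Q$ may be taken strictly positive by choosing $\mathcal{H}_p = \mathcal{L}_p$, so that $A = Q\mathcal{P}^\omega$ with $Q$ invertible.
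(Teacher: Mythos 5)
Your argument is correct and takes essentially the same route as the paper: polar decomposition together with the freedom to absorb a unitary on the physical leg. The paper restricts $A$ to $\image\mathcal{P}^\omega$ and polar-decomposes the restriction as $\mathcal{A}=UQ$ (then drops $U$), whereas you polar-decompose $A=QW$ and conjugate by $W^\dagger$, but both hinge on the same fact — that $(G,\omega)$-injectivity makes $A$ invertible on $\image\mathcal{P}^\omega$ — and produce the same positive factor up to local unitary; your version is merely more explicit about where both hypotheses enter and about how the physical-leg relabeling is carried out.
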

\begin{proof}
    We may restrict the map
    $A\colon\mathbb{C}G\otimes\dots\otimes\mathbb{C}G\to\mathcal{H}$ to
    $\mathcal{A}\colon\image(\mathcal{P}^\omega)\to\mathcal{H}$ so that
    $A=\mathcal{A}\mathcal{P}^\omega$. Then $\mathcal{A}$ is clearly invertible.
    The polar decomposition~$\mathcal{A}=UQ$ with
    $Q=\sqrt{\mathcal{A}^\dagger\mathcal{A}}$ positive and
    $U=\mathcal{A}Q^{-1}$ unitary then yields $A\simeq Q\mathcal{P}^\omega$
    because we are free to ignore local unitaries at the physical level.
\end{proof}

\begin{definition}[Standard form]
    The projection~$\mathcal{P}^\omega$ is called the $(G,\omega)$-isometric
    standard form.
\end{definition}

Let us focus on $(G,\omega)$-isometric tensors~$A$ for the moment.

\subsection{Dijkgraaf-Witten Form}

The concept of $(G,\omega)$-\emph{isometric} \emph{triangle} tensors
has a very nice interpretation in terms of time slices in (discrete)
\textsc{Dijkgraaf-Witten} topological quantum field theory in $(2+1)$
dimensions~\cite{Dijkgraaf:1990p2241}.
Its central ingredient is the partition function which is constructed
from a branched triangulation of the space-time 3-manifold by colouring
the edges and assigning weights to the tetrahedra of the triangulation.
Let us briefly review how these weights are defined.

Let $\Delta_{ijkl}$ be a tetrahedron with ordered vertices $i<j<k<l$. Then
a branching structure is given as follows: for any two vertices $i<j$ draw
an oriented edge~$(i\to j)$ from~$i$ to $j$. Any such branched tetrahedron
can be assigned an orientation~$\sgn(\Delta_{ijkl})=\pm1$ by looking at the face~$\Delta_{jkl}$
formed by the vertices $j<k<l$ from the direction of the smallest vertex~$i$: if
the majority of edges points in counterclockwise direction we set $\sgn(\Delta_{ijkl})\coloneqq1$,
otherwise $\sgn(\Delta_{ijkl})\coloneqq-1$.
We can colour the tetrahedron~$\Delta_{ijkl}$ by assigning
$g_{ji}\in G$ to each oriented edge~$(i\to j)$.
Naturally we set $g_{ij}\coloneqq g_{ji}^{-1}$. Furthermore we require this colouring
to have \emph{flat connections} everywhere, i.e. each branched face~$\Delta_{ijk}$
with vertices $i$, $j$ and $k$ satisfies
\begin{equation}
    g_{ij}
    g_{jk}
    g_{ki}
    =e.
\end{equation}
Finally we assign the following amplitude to any branched, coloured
tetrahedron~$\Delta_{ijkl}$:
\begin{equation}
    \psi(\Delta_{ijkl})
    \coloneqq
     \omega(g_{lk},
            g_{kj},
            g_{ji})^{\sgn(\Delta_{ijkl})}.
\end{equation}

\bigskip
We can define a $(G,\omega)$-isometric triangle tensor via
the amplitude of its associated tetrahedron~$\Delta_{0ijk}$. In the following
we set $g_i\coloneqq g_{i0}$ and focus on a particular vertex
ordering without loss of generality.
\begin{definition}[\textsc{Dijkgraaf-Witten} form]
    Let $g_1,g_2,g_3\in G$. Then
    \begin{equation}
        \vcorrect{29bp}{\includegraphics[scale=0.25]{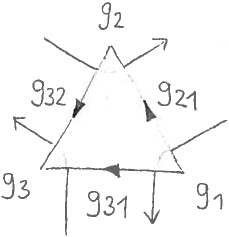}}
        \coloneqq
         \vcorrect{38bp}{\includegraphics[scale=0.25]{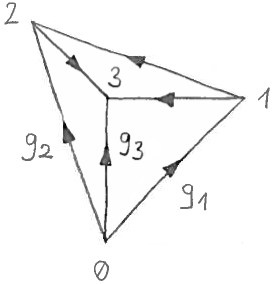}}.
    \end{equation}
    with the physical index $(g_{21},g_{32},g_{31})$ is called a
    \textsc{Dijkgraaf-Witten} triangle tensor.
\end{definition}

In other words, the above tensor defines the map
\begin{equation}
    A_\mathrm{DW}^\omega
    =\smashoperator[l]{\sum_{g_i\in
                             G}}
     \omega(g_3
            g_2^{-1}\mkern-3mu,
            g_2
            g_1^{-1}\mkern-3mu,
            g_1)^{-1}\mkern2mu
     \ketbra{g_2
             g_1^{-1}\mkern-3mu,
             g_3
             g_2^{-1}\mkern-3mu,
             g_3
             g_1^{-1}}
            {g_1,
             g_2,
             g_3}
\end{equation}
due to the flat connection condition $g_{ij}=g_ig_j^{-1}$ for each of the three side faces
of the tetrahedron. It is not difficult to prove that every such
tensor is indeed $(G,\omega)$-isometric.

\begin{lemma}
    Every \textsc{Dijkgraaf-Witten} triangle tensor is
    $(G,\omega)$-isometric.
\end{lemma}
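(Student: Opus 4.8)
The plan is to reduce the whole statement to a single use of the 3-cocycle relation~\eqref{eq:3-cocycle_definition} by computing the virtual operator $(A_\mathrm{DW}^\omega)^\dagger A_\mathrm{DW}^\omega$ in closed form. Writing $g_{ij}\coloneqq g_ig_j^{-1}$ as in the construction, note first that the physical ket $\ket{g_{21},g_{32},g_{31}}$ depends on the colouring $(g_1,g_2,g_3)$ only through the pair $(g_{21},g_{32})$, because the flat-connection condition on the face $\Delta_{123}$ fixes $g_{31}=g_{32}g_{21}$. Hence in the product $(A_\mathrm{DW}^\omega)^\dagger A_\mathrm{DW}^\omega$ the orthogonality of distinct physical basis vectors forces the two virtual colourings $(g_1,g_2,g_3)$ and $(g_1',g_2',g_3')$ entering $A_\mathrm{DW}^\omega$ and $(A_\mathrm{DW}^\omega)^\dagger$ to agree after a global right translation, i.e.\ $g_i'=g_it$ for a single $t\in G$, so the double sum over colourings collapses to one sum over $t$.

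Carrying this out, the only nonzero matrix elements are
\begin{equation*}
    \expval{g_1,g_2,g_3}
           {(A_\mathrm{DW}^\omega)^\dagger A_\mathrm{DW}^\omega}
           {g_1t,g_2t,g_3t}
    =\frac{\omega(g_{32},g_{21},g_1)}
          {\omega(g_{32},g_{21},g_1t)}.
\end{equation*}
On the other hand, reading $\mathcal{P}^\omega=\tfrac1{\abs{G}}\sum_{s\in G}V^\omega(s)$ off the explicit form of $V^\omega(s)$ from the proof of the Proposition, its only nonzero matrix elements of the same shape are
\begin{equation*}
    \expval{g_1,g_2,g_3}
           {\mathcal{P}^\omega}
           {g_1t,g_2t,g_3t}
    =\frac1{\abs{G}}\,
     \frac{\omega(g_{31},g_1,t)}
          {\omega(g_{21},g_1,t)\,\omega(g_{32},g_2,t)}.
\end{equation*}
Equating the two (after clearing $1/\abs{G}$) and substituting $g_2=g_{21}g_1$ and $g_{31}=g_{32}g_{21}$, the required identity becomes
\begin{equation*}
    \omega(g_{21},g_1,t)\,
    \omega(g_{32},g_{21}g_1,t)\,
    \omega(g_{32},g_{21},g_1)
    =\omega(g_{32}g_{21},g_1,t)\,
     \omega(g_{32},g_{21},g_1t),
\end{equation*}
which is exactly~\eqref{eq:3-cocycle_definition} evaluated on $(g_{32},g_{21},g_1,t)$. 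Therefore $(A_\mathrm{DW}^\omega)^\dagger A_\mathrm{DW}^\omega=\abs{G}\,\mathcal{P}^\omega$.

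The rest is routine. Since $\mathcal{P}^\omega$ is an orthogonal projection, every $v\in\ker\mathcal{P}^\omega$ satisfies $\norm{A_\mathrm{DW}^\omega v}^2=\abs{G}\,\expval{v}{\mathcal{P}^\omega}{v}=0$, so $A_\mathrm{DW}^\omega=A_\mathrm{DW}^\omega\mathcal{P}^\omega$ and the tensor is $(G,\omega)$-symmetric; taking $\tilde A\coloneqq\tfrac1{\abs{G}}(A_\mathrm{DW}^\omega)^\dagger$ gives $\tilde A A_\mathrm{DW}^\omega=\mathcal{P}^\omega$, so it is $(G,\omega)$-injective. Finally, restricting $A_\mathrm{DW}^\omega$ to $\image(\mathcal{P}^\omega)$ gives an injective map $\mathcal{A}$ with $\mathcal{A}^\dagger\mathcal{A}=\abs{G}\,\mathrm{id}$, whose polar decomposition $\mathcal{A}=UQ$ has positive part $Q=\sqrt{\abs{G}}\,\mathrm{id}$; hence $A_\mathrm{DW}^\omega\simeq Q\mathcal{P}^\omega\simeq\mathcal{P}^\omega$, i.e.\ the $(G,\omega)$-isometric standard form of $A_\mathrm{DW}^\omega$ is precisely $\mathcal{P}^\omega$.

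I do not expect a genuine obstacle: the computation is forced once the setup is right, and all the content sits in the single cocycle identity above. The one place to be careful is the bookkeeping with the duplicated virtual indices and the branching structure of the triangle, so that the physical-index constraint really reads $g_{31}=g_{32}g_{21}$ and $V^\omega(s)$ has the quoted matrix elements for the chosen vertex ordering; one should also note why the factor $\abs{G}$ is harmless, since it only rescales $Q$ by $\sqrt{\abs{G}}$, which is absorbed into the local unitary at the physical level. As a sanity check, $A_\mathrm{DW}^\omega$ and $\mathcal{P}^\omega$ manifestly have the same rank $\abs{G}^{\,n-1}=\abs{G}^2$, consistent with $A_\mathrm{DW}^\omega$ differing from $\mathcal{P}^\omega$ only by a physical-side unitary and a scalar.
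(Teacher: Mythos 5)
Your proof is correct, and it takes a genuinely different route from the paper's. The paper argues by Pachner moves in the \textsc{Dijkgraaf-Witten} \textsc{TQFT} picture: it proves $(G,\omega)$-symmetry separately via a $4$--$1$ move and then computes $(A_\mathrm{DW}^\omega)^\dagger A_\mathrm{DW}^\omega$ by gluing the tetrahedron to its mirror image along the physical face and applying a $2$--$3$ move. You instead compute the matrix elements of $(A_\mathrm{DW}^\omega)^\dagger A_\mathrm{DW}^\omega$ in closed form, observe that the flat-connection constraint collapses the double colouring sum to a single global right translation, and verify the resulting identity against the explicit matrix elements of $\mathcal{P}^\omega$; the required equality reduces to one instance of the $3$-cocycle relation, exactly as the paper's Pachner $2$--$3$ move does implicitly. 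Two places where your route improves the exposition: first, you deduce the $(G,\omega)$-symmetry \emph{for free} from positivity, since $(A_\mathrm{DW}^\omega)^\dagger A_\mathrm{DW}^\omega = \abs{G}\,\mathcal{P}^\omega$ forces $A_\mathrm{DW}^\omega$ to annihilate $\ker\mathcal{P}^\omega$, whereas the paper proves symmetry and injectivity as two separate topological steps; second, you make explicit the final polar-decomposition step, identifying $Q=\sqrt{\abs{G}}\,\mathrm{id}$ and noting that the scalar is absorbed, which the paper leaves to the reader. The trade-off is that the paper's version makes the \textsc{TQFT} interpretation manifest (the same diagrammatic calculus is used throughout that section), while yours is more self-contained and transparent for a reader checking the claim algebraically. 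I verified the sign/inverse bookkeeping in your matrix elements: applying $A_\mathrm{DW}^\omega$ to $\ket{g_1t,g_2t,g_3t}$ gives the phase $\omega(g_{32},g_{21},g_1t)^{-1}$, and $\bra{g_1,g_2,g_3}(A_\mathrm{DW}^\omega)^\dagger$ contributes $\omega(g_{32},g_{21},g_1)$, so your quotient is right, and the substitution $a=g_{32}$, $b=g_{21}$, $c=g_1$, $d=t$ in~\eqref{eq:3-cocycle_definition} indeed yields your displayed identity.
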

\begin{proof}
    The virtual $(G,\omega)$-symmetry of~$A_\mathrm{DW}^\omega$ is obvious from the Pachner 4--1 move
    \begin{equation*}
        \vcorrect{50bp}{\includegraphics[scale=0.25]{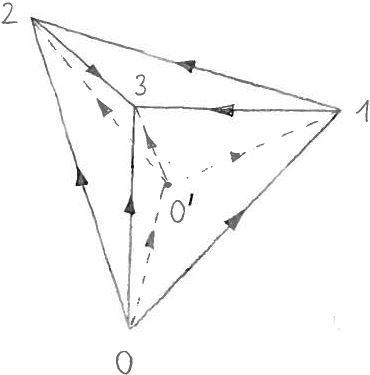}}
        =\vcorrect{38bp}{\includegraphics[scale=0.25]{figure56}}
        =\vcorrect{29bp}{\includegraphics[scale=0.25]{figure55}}.
    \end{equation*}
    Furthermore, we can invert~$A_\mathrm{DW}^\omega$ (on the image of~$\mathcal{P}^\omega$)
    using its adjoint. This is easily seen once we glue
    the tetrahedron representing~$A_\mathrm{DW}^\omega$ and
    its mirror image representing~$(A_\mathrm{DW}^\omega)^\dagger$ along their
    physical faces and apply a Pachner 2--3 move:
    \begin{align*}
        (A_\mathrm{DW}^\omega)^\dagger
        A_\mathrm{DW}^\omega
        & =\sum_{g_i,
                 g_i'\in
                 G}
           \vcorrect{60bp}{\includegraphics[scale=0.25]{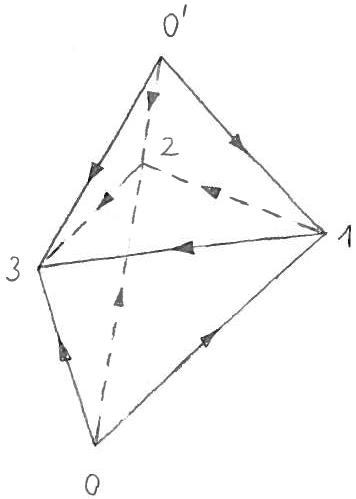}}\mkern6mu
           \ketbra{g_1',
                   g_2',
                   g_3'}
                  {g_1,
                   g_2,
                   g_3}
           \displaybreak[0] \\
        & =\sum_{g\in
                 G}
           \sum_{g_i\in
                 G}
           \vcorrect{60bp}{\includegraphics[scale=0.25]{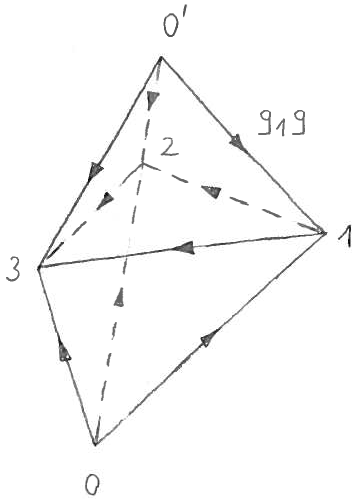}}\mkern6mu
           \ketbra{g_1
                   g,
                   g_2
                   g,
                   g_3
                   g}
                  {g_1,
                   g_2,
                   g_3} \\
        & =\sum_{g\in
                 G}
           \sum_{g_i\in
                 G}
           \vcorrect{60bp}{\includegraphics[scale=0.25]{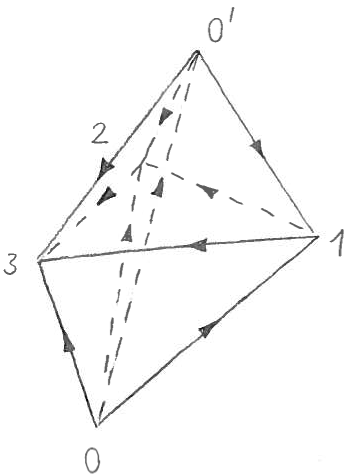}}\mkern6mu
           \ketbra{g_1
                   g,
                   g_2
                   g,
                   g_3
                   g}
                  {g_1,
                   g_2,
                   g_3}
           \displaybreak[0] \\
        & =\sum_{g\in
                 G}
           \sum_{g_i\in
                 G}
           \frac{\omega(g_3
                        g_1^{-1}\mkern-3mu,
                        g_1
                        g,
                        g^{-1})}
                {\omega(g_2
                        g_1^{-1}\mkern-3mu,
                        g_1
                        g,
                        g^{-1})\mkern2mu
                 \omega(g_3
                        g_2^{-1}\mkern-3mu,
                        g_2
                        g,
                        g^{-1})}\mkern2mu
           \ketbra{g_1
                   g,
                   g_2
                   g,
                   g_3
                   g}
                  {g_1,
                   g_2,
                   g_3} \\
        & =\abs{G}\mkern2mu
           \mathcal{P}^\omega
    \end{align*}
    where $g^{-1}\coloneqq g_{0'0}$. This proves the claim.
\end{proof}

We can now easily translate between the standard form~$\mathcal{P}^\omega$
and the form~$A_\mathrm{DW}^\omega$ by applying a unitary map at the
physical level. In the forward direction this unitary map is simply
$A_\mathrm{DW}^\omega$ itself since $A_\mathrm{DW}^\omega\mathcal{P}^\omega=A_\mathrm{DW}^\omega$,
in the backward direction it is~$(A_\mathrm{DW}^\omega)^\dagger/\abs{G}$ because
of the above.

\subsection{Cohomology Classes}

What happens if we change~$\omega\mapsto\omega'=\omega\mkern2mu\mathrm{d}\phi$ in
a $(G,\omega)$-isometric tensor?
Let us first look at a \textsc{Dijkgraaf-Witten} triangle
tensor for~$\omega'$.
As is well known,
the 3-coboundary~$\mathrm{d}\phi$ consists of four 2-cochains~$\phi$ which can naturally
be attached to the faces of the tetrahedron~$\Delta_{0ijk}$. Distinguishing
between the physical face and the virtual ones we obtain
\begin{equation}
    A_\mathrm{DW}^{\omega'}
    =U_\phi
     A_\mathrm{DW}^\omega
     V_\phi
\end{equation}
where
\begin{align}
    U_\phi
    & \coloneqq
       \smashoperator[l]{\sum_{\alpha,
                               \beta,
                               \gamma\in
                               G}}
       \phi(\beta,
            \alpha)\mkern2mu
       \ketbra{\alpha,
               \beta,
               \gamma}
              {\alpha,
               \beta,
               \gamma},
       \displaybreak[0] \\
    V_\phi
    & \coloneqq
       \smashoperator[l]{\sum_{\alpha,
                               \beta,
                               \gamma\in
                               G}}
       \frac{\phi(\gamma
                  \alpha^{-1}\mkern-3mu,
                  \alpha)}
            {\phi(\gamma
                  \beta^{-1}\mkern-3mu,
                  \beta)\mkern2mu
             \phi(\beta
                  \alpha^{-1}\mkern-3mu,
                  \alpha)}\mkern2mu
       \ketbra{\alpha,
               \beta,
               \gamma}
              {\alpha,
               \beta,
               \gamma},
\end{align}
are unitaries whose underlying branching structure is that
of~$A_\mathrm{DW}^\omega$.

It is now straightforward to check that the standard forms
for $(G,\omega)$- and $(G,\omega')$-isometric tensors are related by
\begin{equation}
    \label{eq:projection_gauge}
    \mathcal{P}^{\omega'}
    =V_\phi^\dagger
     \mathcal{P}^\omega
     V_\phi
    \simeq
     \mathcal{P}^\omega
     V_\phi
\end{equation}
since we are allowed to ignore local unitaries at the physical level.
Note that we cannot pull $V_\phi$ to the physical level since it does not
necessarily commute with~$\mathcal{P}^\omega$. However, upon contracting
the tensors with each other and the closure tensors, the $V_\phi$
will cancel at the virtual level.
This establishes the following
\begin{proposition}
    \label{prop:equivalence_isometric}
    Any transformation $\omega\mapsto\omega'\sim\omega$ corresponds
    to local unitary transformations of $(G,\omega)$-isometric \textsc{PEPS}.
\end{proposition}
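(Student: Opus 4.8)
The plan is to leverage the identity~\eqref{eq:projection_gauge}, $\mathcal{P}^{\omega'}=V_\phi^\dagger\,\mathcal{P}^\omega\,V_\phi\simeq\mathcal{P}^\omega V_\phi$, at the level of each local tensor. Since the closure space of Theorem~\ref{thm:closure_torus} does not depend on the chosen cellulation, I may assume throughout that the surface is triangulated and that every local tensor is in standard form $\mathcal{P}^\omega$ (the $(G,\omega)$-isometric case); the same analysis applies verbatim to the \textsc{Dijkgraaf-Witten} form via $A_\mathrm{DW}^{\omega'}=U_\phi A_\mathrm{DW}^\omega V_\phi$. Replacing $\omega$ by $\omega'=\omega\,\mathrm{d}\phi$ thus turns each triangle tensor into $V_\phi^\dagger\,\mathcal{P}^\omega\,V_\phi$, i.e. into the original $(G,\omega)$-isometric tensor with a unitary $V_\phi$ prepended on its three virtual legs and a unitary $V_\phi^\dagger$ applied on its three physical legs.

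The $V_\phi^\dagger$'s are supported on disjoint physical sites and together form the desired local unitary circuit $\bigotimes_{\text{triangles}}V_\phi^\dagger$; the whole content of the proof is therefore that the virtual factors $V_\phi$ do not alter the resulting state. I would establish this exactly as in the concatenation lemmas for the twisted symmetry: when two triangle tensors are contracted along a common oriented edge, the 2-cochain $\phi$ attached to the shared internal virtual face is traversed with opposite orientations by the two tetrahedra sitting over the two adjacent triangles, so its contributions cancel once the bond index is summed. Iterating over all internal edges of the triangulated torus, all bulk contributions cancel and the net virtual dressing reduces to a diagonal unitary acting only on the virtual legs that are contracted against the closure tensor.

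It then remains to absorb this residual diagonal dressing into the closure. By Lemma~\ref{lem:closure_minimal_torus_span} and the construction in the proof of Theorem~\ref{thm:closure_torus}, the admissible closure tensors span a fixed linear space whose spanning vectors $\ket{M(g,h)}$ may be taken with \emph{arbitrary} nonzero coefficients $\lambda_M(\alpha;g,h)$ without changing that span, and the surrounding triangle tensors reinstate $\mathcal{P}^\omega$ automatically. Since this dressing is diagonal in the group basis, applying it to a closure tensor only rescales its nonzero entries, hence maps the space of admissible closures for $\omega$ bijectively onto that for $\omega'$. Combining the three steps, the $(G,\omega')$-isometric \textsc{PEPS} with a given closure equals $\bigl(\bigotimes_{\text{triangles}}V_\phi^\dagger\bigr)$ applied to the $(G,\omega)$-isometric \textsc{PEPS} with a suitably rescaled closure, which is precisely the assertion that the transformation $\omega\mapsto\omega'$ is implemented by local unitaries.

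The step I expect to be the crux is the cancellation of the internal $V_\phi$'s: this is the tensor-network counterpart of the classical fact that the \textsc{Dijkgraaf-Witten} partition function depends only on the cohomology class $[\omega]\in H^3\bigl(G,\mathrm{U}(1)\bigr)$, and it requires careful bookkeeping — pairing each internal virtual face with its oppositely oriented partner and checking that the corresponding $\phi$ factors cancel once the shared virtual index is contracted — whereas the remaining steps are either already available from~\eqref{eq:projection_gauge} and Theorem~\ref{thm:closure_torus} or follow from the dressing being diagonal.
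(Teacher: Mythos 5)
Your proposal follows the paper's approach exactly: both hinge on the identity~\eqref{eq:projection_gauge}, $\mathcal{P}^{\omega'}=V_\phi^\dagger\,\mathcal{P}^\omega\,V_\phi$, take the physical factor $\bigotimes V_\phi^\dagger$ as the local unitary, and argue that the residual virtual $V_\phi$ factors cancel under contraction with neighbouring tensors and the closure. You simply flesh out what the paper compresses into the single sentence ``upon contracting the tensors with each other and the closure tensors, the $V_\phi$ will cancel at the virtual level,'' namely the pairwise cancellation via opposite orientations of shared virtual faces and the absorption of the boundary dressing via the $\lambda_M$-independence of Lemma~\ref{lem:closure_minimal_torus_span}.
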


\subsection{Parent Hamiltonians}

\begin{proposition}
    Parent Hamiltonians of $(G,\omega)$-isometric \textsc{PEPS}
    consist of commuting projections.
\end{proposition}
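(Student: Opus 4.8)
The plan is to construct the parent Hamiltonian terms explicitly from the $(G,\omega)$-isometric standard form $\mathcal{P}^\omega$ and then verify that they commute. First I would fix the cellulation and, for each vertex $v$, define the local interaction term $h_v$ on the region $R_v$ (the collection of polygons meeting at $v$) to be the projection onto the orthogonal complement of $\mathcal{L}_v = \image(A_{R_v})$, where $A_{R_v}$ is the contracted $(G,\omega)$-isometric tensor for $R_v$. By Theorem~\ref{thm:intersection} and the remark following it, these projections have exactly the right kernels to grow the global ground space correctly. The key point is that for a $(G,\omega)$-\emph{isometric} (as opposed to merely $(G,\omega)$-injective) \textsc{PEPS}, the contraction of the tensors over $R_v$ is, up to a local unitary on the physical legs, simply $\mathcal{P}^\omega$ restricted to the virtual boundary of $R_v$ — or equivalently, after conjugating by the global physical unitary that puts everything into \textsc{Dijkgraaf-Witten} form $A_\mathrm{DW}^\omega$, the term $h_v$ becomes a concrete projection built from $\omega$ acting on the physical degrees of freedom around $v$.

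Second I would pass to the \textsc{Dijkgraaf-Witten} picture via the unitary $A_\mathrm{DW}^\omega$ (and $(A_\mathrm{DW}^\omega)^\dagger/\abs{G}$ backward), as established in the preceding subsection. In this picture the physical Hilbert space carries group elements on the edges of the triangulation subject to flatness, and the ground space is the span of flat $G$-connections weighted by the \textsc{Dijkgraaf-Witten} amplitudes. The projection $h_v$ onto $\mathcal{L}_v^\perp$ then has the interpretation of ``$1$ minus the projection onto locally flat configurations near $v$, reweighted by $\omega$'' — this is precisely the vertex/plaquette-type term of the (Hamiltonian formulation of the) twisted gauge theory / twisted quantum double, which is known to be a sum of commuting projectors. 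Concretely, $h_v$ can be written as $1 - B_v$ where $B_v = \frac{1}{\abs{G}}\sum_{g} B_v^g$ and each $B_v^g$ acts by multiplying the group labels around $v$ by $g$ with an $\omega$-dependent phase; this is exactly $\mathcal{P}^\omega$ transported to the physical level by $A_\mathrm{DW}^\omega$.

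Third I would verify commutativity $[h_v, h_{v'}] = 0$ for all $v, v'$. For $v, v'$ with disjoint regions this is trivial. For overlapping regions, I would reduce to showing $[B_v, B_{v'}] = 0$; since each $B_v$ is an average of the local $G$-actions $B_v^g$, it suffices to show that $B_v^g$ and $B_{v'}^{g'}$ commute up to a consistent phase that disappears after averaging, or — cleaner — to invoke the fact already used in the proof of Proposition~\ref{prop:closure_minimal_torus} (and in \cite{Dijkgraaf:1990p2241,Hu:mIFulyz6}) that the $\omega$-phases satisfy the $2$-cocycle-type consistency condition making the twisted $G$-action well-defined on every local region. In other words, the twisted symmetry \textsc{MPO}s $V^\omega(g)$ form a genuine representation (the Proposition above) and their concatenation property (Lemma~\ref{lem:closure_minimal_torus_span}'s concatenation lemmas) guarantees that the local projections $\mathcal{P}^\omega$ for overlapping regions are built from commuting operators on the shared virtual boundary.

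The main obstacle I expect is the bookkeeping in the commutativity step: unlike the untwisted ($G$-isometric) case where the plaquette terms are literally products of commuting group actions, here the $\omega$-phases attached to overlapping regions must be shown to cancel or reassemble correctly, and this requires a careful application of the $3$-cocycle condition \eqref{eq:3-cocycle_definition} on the tetrahedra shared between $R_v$ and $R_{v'}$ — essentially a Pachner-move argument showing that the order in which one applies $A_\mathrm{DW}^\omega$-deformations at $v$ and at $v'$ does not matter. I would organize this by choosing the \textsc{Dijkgraaf-Witten} gauge, where the needed identity is the associativity of gluing tetrahedra, i.e. precisely the cocycle relation, so that $[h_v,h_{v'}]=0$ follows from $\omega$ being a cocycle rather than from any special choice of representative; combined with Proposition~\ref{prop:equivalence_isometric} this also shows the commuting-projector property is independent of the representative $\omega$ of its cohomology class.
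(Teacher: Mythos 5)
Your setup of $h_v=\id_v-A_vA_v^\dagger$ matches the paper's, but from there you take a genuinely different route: you propose to transport the $h_v$ into the \textsc{Dijkgraaf-Witten} picture via $A_\mathrm{DW}^\omega$, identify them with the vertex/plaquette terms of a twisted quantum double model, and then appeal to the known fact that such models are commuting-projector models (with the cocycle condition / Pachner moves as the underlying mechanism). The paper instead performs a self-contained tensor-network calculation: it expands $\tilde h_1\tilde h_2$ graphically, uses the twisted-injectivity relation $\tilde A A=\mathcal P^\omega$ and the concatenation lemmas to absorb the overlap region $B$ and push the resulting $V^\omega(g)$ to the boundary, and arrives at a form that is manifestly symmetric under $1\leftrightarrow 2$. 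This avoids any external appeal and handles the multi-polygon regions $R_v$ directly.

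Your route could be made to work but has gaps as written. First, the claim that after the $A_\mathrm{DW}^\omega$-conjugation $h_v$ ``is precisely the vertex/plaquette-type term'' of a twisted quantum double is not established in the paper (the \textsc{DW} correspondence there is set up for individual triangle tensors, not for the multi-polygon projections $\id_v-A_vA_v^\dagger$); making this identification precise is nontrivial. Second, the specific commutativity mechanisms you sketch do not hold as stated: $B_v^g$ and $B_{v'}^{g'}$ do not in general commute ``up to a consistent phase that disappears after averaging'' --- in the twisted quantum double one needs to re-index the sums and use the cocycle relation to absorb the phases, not just average them away --- and the fact that the $V^\omega(g)$ form a representation together with the concatenation lemmas does not by itself imply that $\mathcal{P}^\omega$ for overlapping regions yields commuting $h_v$. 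Your final paragraph correctly identifies that the 3-cocycle relation \eqref{eq:3-cocycle_definition} (equivalently, invariance under Pachner moves) is the engine of the argument, but that is precisely the bookkeeping the paper carries out explicitly; as it stands your proof defers the hard step to a citation and an unexamined phase-cancellation claim.
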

\begin{proof}
    Let $A_v$ be the $(G,\omega)$-isometric map associated with the
    region~$R_v$. Note that we may need to contract several
    $(G,\omega)$-isometric \textsc{PEPS} tensors to obtain~$A_v$.
    It is not difficult to show that the interaction terms~$h_v$
    \begin{equation*}
        h_v
        \coloneqq
         \id_v-
         A_v
         A_v^\dagger
    \end{equation*}
    are Hermitian projections with $\ker{h_v}=\mathcal{L}_v$.
    
    Now assume that two overlapping interaction terms~$h_1$ and~$h_2$
    are associated with the regions~$R_1$ and $R_2$ of Figure~\ref{fig:intersection}.
    Let us abbreviate $\tilde{h}_v\coloneqq A_vA_v^\dagger$.
    We have
    \begin{equation*}
        \expval{i'\mkern-3mu,
                j'\mkern-3mu,
                k'}
               {\tilde{h}_1}
               {i,
                j,
                k}
        =\delta_{k
                 k'}
         \sum_{\alpha_l,
               \gamma}\mkern3mu
         \vcorrect{47bp}{\includegraphics[scale=0.25]{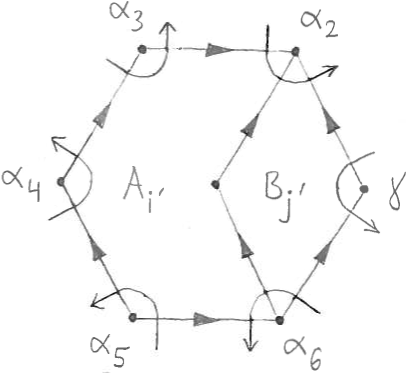}}\mkern12mu
         \vcorrect{47bp}{\includegraphics[scale=0.25]{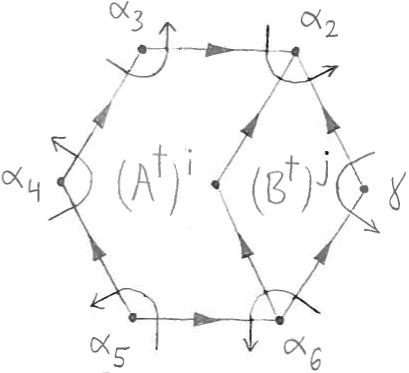}}
    \end{equation*}
    and $\expval{i'\mkern-3mu,j'\mkern-3mu,k'}{\tilde{h}_2}{i,j,k}$
    similarly. Then
    \begin{align*}
        & \expval{i'\mkern-3mu,
                  j'\mkern-3mu,
                  k'}
                 {\tilde{h}_1
                  \tilde{h}_2}
                 {i,
                  j,
                  k} \\
        & =\smashoperator[l]{\sum_{\tilde{\alpha}_l,
                                   \tilde{\beta},
                                   \tilde{\gamma}}}
           \sum_{\alpha_l,
                 \beta,
                 \gamma}
           \vcorrect{47bp}{\includegraphics[scale=0.25]{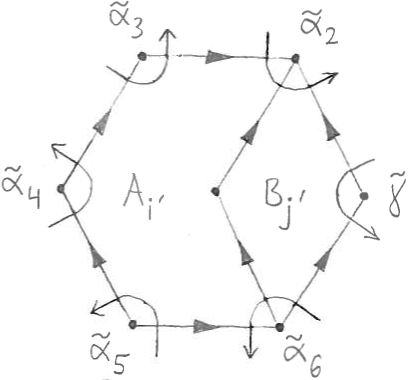}}\mkern9mu
           \vcorrect{49bp}{\includegraphics[scale=0.25]{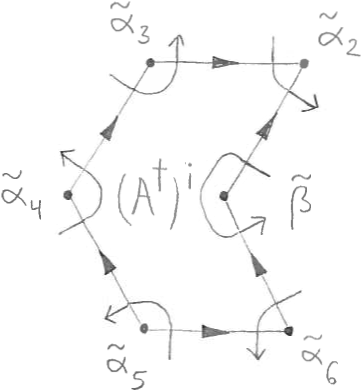}} \\
        & \hphantom{={}}
           \Biggl(\sum_l\mkern3mu
                  \vcorrect{54bp}{\includegraphics[scale=0.25]{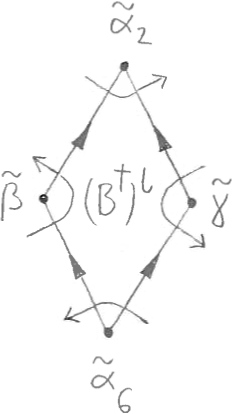}}\mkern12mu
                  \vcorrect{54bp}{\includegraphics[scale=0.25]{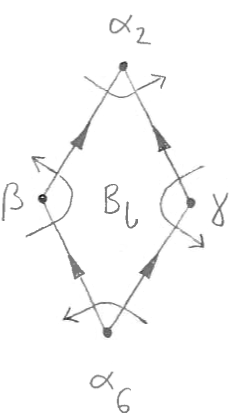}}\Biggr)
           \vcorrect{49bp}{\includegraphics[scale=0.25]{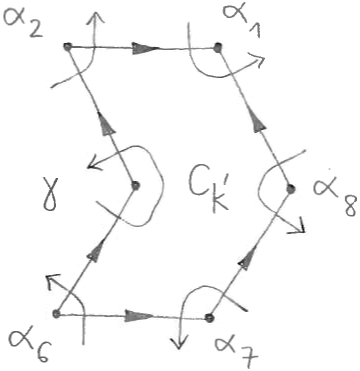}}\mkern9mu
           \vcorrect{48bp}{\includegraphics[scale=0.25]{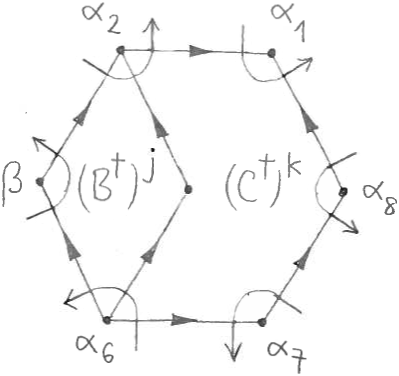}}
           \displaybreak[0] \\
        & =\frac{1}
                {\abs{G}}
           \sum_{\tilde{\alpha}_3,
                 \tilde{\alpha}_4,
                 \tilde{\alpha}_5}
           \sum_{\alpha_l,
                 \beta,
                 \gamma}
           \sum_{g\in
                 G}
           \vcorrect{47bp}{\includegraphics[scale=0.25]{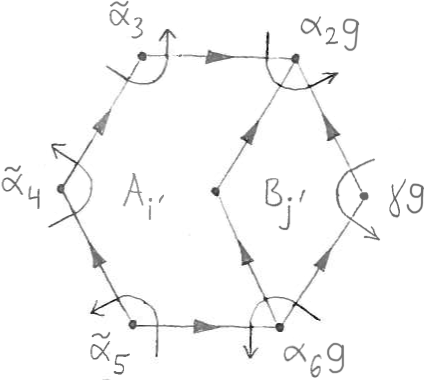}}
           \vcorrect{49bp}{\includegraphics[scale=0.25]{figure67}} \\
        & \hphantom{={}}
           \vcorrect{62bp}{\includegraphics[scale=0.25]{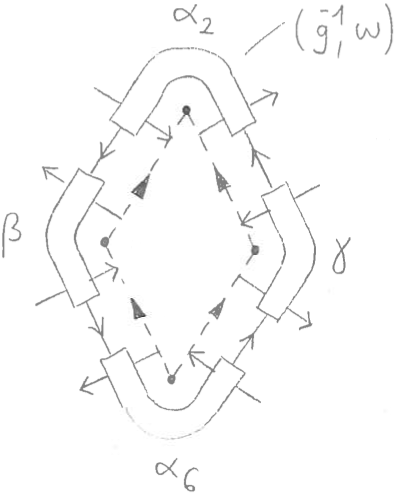}}
           \vcorrect{49bp}{\includegraphics[scale=0.25]{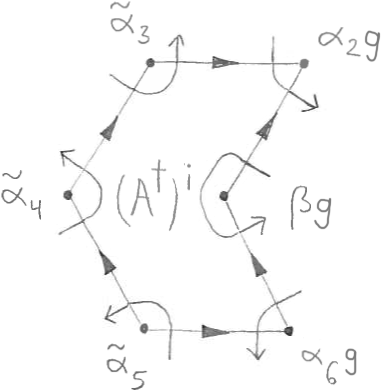}}
           \vcorrect{48bp}{\includegraphics[scale=0.25]{figure68}}
           \displaybreak[0] \\
        & =\frac{1}
                {\abs{G}}
           \sum_{\alpha_l,
                 \beta,
                 \gamma}
           \sum_{g\in
                 G}\mkern3mu
           \vcorrect{62bp}{\includegraphics[scale=0.25]{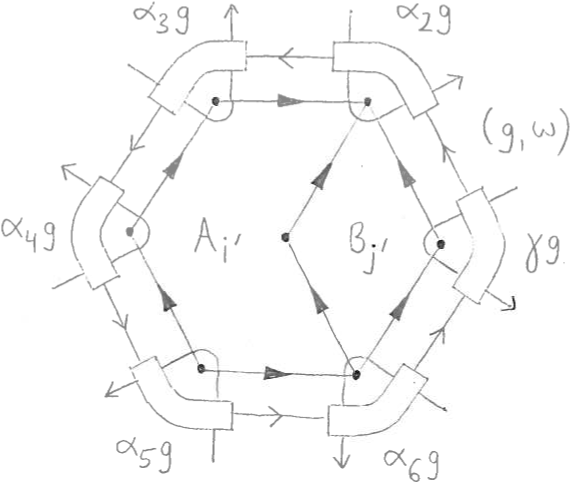}}\mkern12mu
           \vcorrect{49bp}{\includegraphics[scale=0.25]{figure67}} \\
        & \hphantom{={}}
           \vcorrect{62bp}{\includegraphics[scale=0.25]{figure70}}
           \vcorrect{73bp}{\includegraphics[scale=0.25]{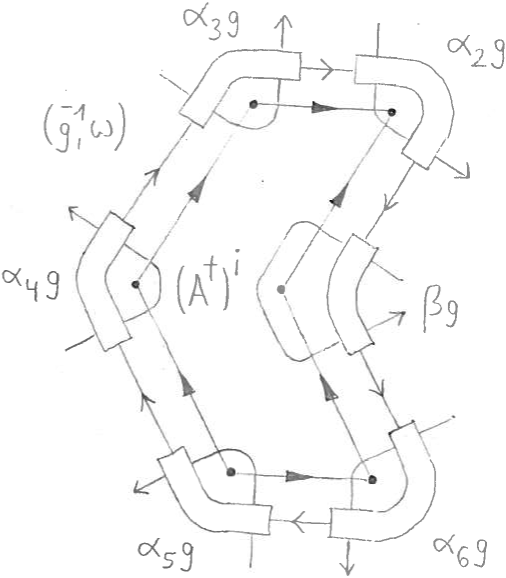}}
           \vcorrect{48bp}{\includegraphics[scale=0.25]{figure68}}\mkern9mu
           \displaybreak[0] \\
        & =\frac{1}
                {\abs{G}}
           \sum_{\alpha_l,
                 \beta}
           \sum_{g\in
                 G}\mkern3mu
           \vcorrect{48bp}{\includegraphics[scale=0.25]{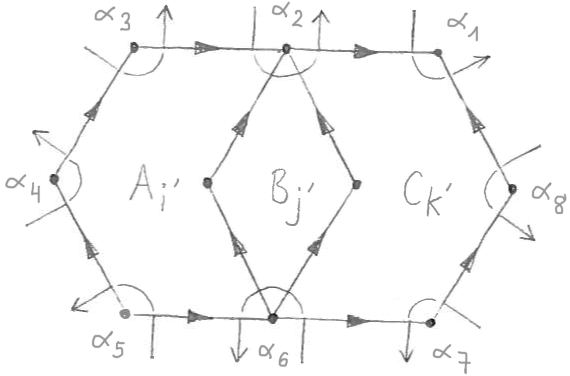}}\mkern9mu
           \vcorrect{72bp}{\includegraphics[scale=0.25]{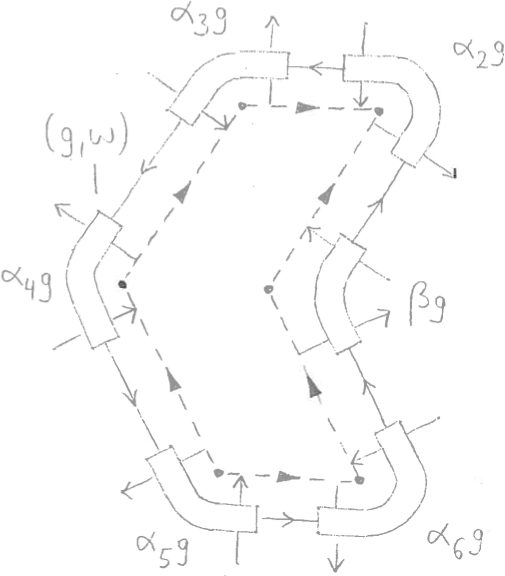}} \\
        & \hphantom{={}}
           \vcorrect{73bp}{\includegraphics[scale=0.25]{figure73}}
           \vcorrect{48bp}{\includegraphics[scale=0.25]{figure68}}
           \displaybreak[0] \\
        & =\sum_{\alpha_l}\mkern3mu
           \vcorrect{48bp}{\includegraphics[scale=0.25]{figure74}}\mkern9mu
           \vcorrect{53bp}{\includegraphics[scale=0.25]{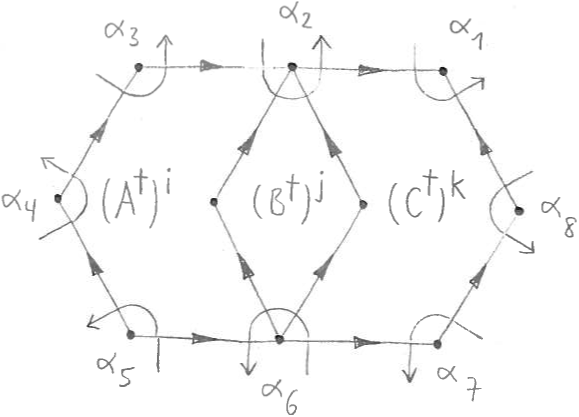}}.
    \end{align*}
    This same argument can readily be applied to
    $\expval{i'\mkern-3mu,j'\mkern-3mu,k'}{\tilde{h}_2\tilde{h}_1}{i,j,k}$, hence we have
    shown that $[\tilde{h}_1,\tilde{h}_2]=[h_1,h_2]=0$.
\end{proof}

\section{Gapped Paths and Almost Twisted Isometry}
\label{sec:almost_twisted_isometry}

Let $\{A,\dots\}$ be a set of compatible $(G,\omega)$-injective tensors.
Can we connect the parent Hamiltonians of
\begin{equation*}
    \ket{\psi(A,
              \dots)}
    =Q\otimes
     \cdots\mkern2mu
     \ket{\psi(\mathcal{P}^\omega,
               \dots)}
\end{equation*}
and $\ket{\psi(\mathcal{P}^\omega,\dots)}$
by quasi-adiabatic evolution, i.e. are they in the same universality class?
Under what circumstances can we remove the positive maps~$Q$?

It is known that this question hinges on the existence of an energy gap in the
thermodynamical limit which is notoriously difficult to prove in two
dimensions. Already for the case of completely trivial virtual symmetry, i.e.
for $(G,\omega)\simeq(\{e\},1)$ which is known as the injective case, there are examples of
\textsc{PEPS} whose parent Hamiltonians are gapless~\cite{Verstraete:2006jk,Hastings:2006gv},
in sharp contrast to injective \textsc{MPS}. As we have shown, parent
Hamiltonians of $(G,\omega)$-isometric \textsc{PEPS} tensors \emph{do} have a gap in the
thermodynamic limit, hence the deformations~$Q$ cannot be removed in general without
changing the nature of the gap. However, it was shown in~\cite{Schuch:2011p3031} that all
positive maps~$Q$ in a neighbourhood of the identity \emph{can} be removed
by quasi-adiabatic evolution.

\begin{definition}
    A $(G,\omega)$-injective tensor (with standard decomposition
    $Q\mathcal{P}^\omega$) is called \emph{almost $(G,\omega)$-isometric}
    if $Q$ is in a neighbourhood of the identity.%
    \footnote{This neighbourhood is defined via the spectrum
        of~$Q$~\cite{Schuch:2011p3031}.}
\end{definition}

\begin{theorem}
    \label{thm:Hamiltonians}
    The universality class of parent Hamiltonians
    associated with almost $(G,\omega)$-isometric
    \textsc{PEPS} only depends on the cohomology
    class of~$\omega$.
\end{theorem}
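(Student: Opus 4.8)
The plan is to reduce the statement to two ingredients already established: Proposition~\ref{prop:equivalence_isometric}, which realises the change of a representative of $[\omega]$ inside the $(G,\omega)$-\emph{isometric} standard form as a physical-level local unitary, and the theorem of~\cite{Schuch:2011p3031} that a positive deformation $Q$ in a neighbourhood of the identity can be switched off by quasi-adiabatic evolution without closing the gap. Concatenating the resulting paths of gapped, frustration-free local Hamiltonians proves the claim.

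\emph{Reduction to the isometric fixed point.} An almost $(G,\omega)$-isometric \textsc{PEPS} $\ket{\psi(A,\dots)}$ has standard decomposition $A=Q\mathcal{P}^\omega$ with $Q$ in the prescribed spectral neighbourhood of the identity. Hence~\cite{Schuch:2011p3031} produces a smooth interpolation, through gapped and frustration-free parent Hamiltonians, between the parent Hamiltonian of $\ket{\psi(A,\dots)}$ and that of the $(G,\omega)$-isometric \textsc{PEPS} $\ket{\psi(\mathcal{P}^\omega,\dots)}$. The gap at the isometric endpoint is the one obtained in Section~\ref{sec:twisted_isometry}; along the path it persists by the stability argument of~\cite{Schuch:2011p3031}, with the local ground spaces controlled by the intersection property (Theorem~\ref{thm:intersection}) through the identification $\ker h_v=\mathcal{L}_v$. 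Consequently every almost $(G,\omega)$-isometric parent Hamiltonian lies in the universality class of the $(G,\omega)$-isometric fixed point for the given cellulation, and it remains to prove that this class is invariant under $\omega\mapsto\omega'\sim\omega$.

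\emph{Passing between cohomologous cocycles.} Write $\omega'=\omega\,\mathrm{d}\phi$. By~\eqref{eq:projection_gauge} and the discussion around Proposition~\ref{prop:equivalence_isometric}, the $(G,\omega')$-isometric \textsc{PEPS} is obtained from the $(G,\omega)$-isometric one by conjugating with the diagonal on-site unitaries $U_\phi$ at the physical level, the virtual factors $V_\phi$ cancelling upon contraction of the tensors and the closures. Let $H=\sum_v h_v$ be the commuting-projection parent Hamiltonian of the $(G,\omega)$-isometric \textsc{PEPS}, $\ker h_v=\mathcal{L}_v$, and let $U=\bigotimes_v U_{\phi,v}$ be the associated depth-one circuit. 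Choosing Hermitian diagonal single-site operators $K_v$ with $U_{\phi,v}=e^{\mathrm{i}K_v}$, the family $H(s)\coloneqq e^{\mathrm{i}s\sum_v K_v}\,H\,e^{-\mathrm{i}s\sum_v K_v}=\sum_v e^{\mathrm{i}s\sum_{w\in R_v}K_w}\,h_v\,e^{-\mathrm{i}s\sum_{w\in R_v}K_w}$ is a smooth path of frustration-free local Hamiltonians with $H(0)=H$, $H(1)=UHU^\dagger$, each unitarily equivalent to $H$ and therefore sharing its spectrum and gap; moreover $UHU^\dagger$ is a parent Hamiltonian of the $(G,\omega')$-isometric \textsc{PEPS}. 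Running the reduction step in reverse for $\omega'$, we obtain a path of gapped local Hamiltonians joining any almost $(G,\omega)$-isometric parent Hamiltonian to any almost $(G,\omega')$-isometric one whenever $\omega\sim\omega'$, which is the assertion.

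\emph{Main obstacle.} The passage between cohomologous cocycles is mere bookkeeping with on-site unitaries; the one substantive input is the stability of the gap in the reduction step, namely that the near-identity deformation $Q$ does not close the gap in the thermodynamic limit. This is precisely the content of~\cite{Schuch:2011p3031}, which we apply using the gap of the $(G,\omega)$-isometric fixed point and the identification $\ker h_v=\mathcal{L}_v$ of local kernels furnished by Theorem~\ref{thm:intersection}. (We do not address the converse here — whether distinct cohomology classes yield distinct universality classes — which is settled by the \textsc{Dijkgraaf-Witten} identification of the fixed points.)
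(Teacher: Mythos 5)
Your proposal is correct and follows essentially the same route as the paper: reduce the near-identity deformation $Q$ to the isometric standard form by quasi-adiabatic evolution using~\cite{Schuch:2011p3031}, pass between cohomologous cocycles via the physical-level local unitaries of Proposition~\ref{prop:equivalence_isometric} and \eqref{eq:projection_gauge}, then reverse the reduction. Your explicit interpolation $H(s)=e^{\mathrm{i}s\sum K_v}\,H\,e^{-\mathrm{i}s\sum K_v}$ spells out what the paper leaves implicit and is a welcome clarification. One small bookkeeping slip: if you work with the standard form $\mathcal{P}^\omega$ directly, the on-site physical unitary supplied by \eqref{eq:projection_gauge} is $V_\phi^\dagger$ (the physical space being the image of the projection), not $U_\phi$; the latter is the physical unitary relating the \textsc{Dijkgraaf-Witten} form tensors $A_\mathrm{DW}^{\omega}$ and $A_\mathrm{DW}^{\omega'}$. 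Both realise the same local circuit once the $V_\phi$ factors cancel on contraction, so this does not affect the argument.
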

\begin{proof}
    Let us first look at the \textsc{PEPS} themselves.
    Let $\{A,\dots\}$ and $\{A',\dots\}$ be sets of almost $(G,\omega)$- and $(G,\omega')$-isometric
    tensors respectively. Assume that both sets agree on the
    virtual boundaries of their tensors.
    Since all~$Q'$ are in a neighbourhood of the identity we can
    reduce each local tensor~$A'$ to the standard form~$\mathcal{P}^{\omega'}$
    by evolving the state~$\ket{\psi(A',\dots)}$ quasi-adiabatically.
    If $\omega'\sim\omega$ we may subsequently turn
    $\mathcal{P}^{\omega'}$ into $\mathcal{P}^\omega$ by Proposition~\ref{prop:equivalence_isometric}.
    Finally, we deform each local tensor~$\mathcal{P}^\omega$ by~$Q$ to
    obtain the state~$\ket{\psi(A,\dots)}$.%
    \footnote{Alternatively, we have
        $A'=Q'\mathcal{P}^{\omega'}\simeq Q_\phi'\mathcal{P}^\omega V_\phi\simeq Q_\phi'\mathcal{P}^\omega\eqqcolon A$.
        Here we set $Q_\phi'\coloneqq V_\phi Q' V_\phi^\dagger$, the first
        equivalence is due to a local unitary at the physical level and the second
        one due to contraction at the virtual level. Note that
        $Q_\phi'\neq Q'$ since $Q'$ and $V_\phi$ do \emph{not} commute in
        general, however, $Q_\phi'$ and $Q'$ have the same spectrum. Thus
        $A'$ is almost $(G,\omega')$-isometric if and only if $A$
        is almost $(G,\omega)$-isometric.}
    
    We have just shown that
    the states $\ket{\psi(A,\dots)}$
    and $\ket{\psi(A',\dots)}$ are equivalent under quasi-adiabatic evolution.
    This immediately carries over to their respective parent Hamiltonians.
\end{proof}

Clearly, the Hamiltonians of Theorem~\ref{thm:Hamiltonians}
are gapped (because they are equivalent to parent Hamiltonians at the $(G,\omega)$-isometric
point), frustration-free and their
interaction terms do not commute (because of the deformations~$Q$). The universality
class is determined by the $(G,\omega)$-isometric point and the cohomology class
of~$\omega$, i.e. by the new standard form~$\mathcal{P}^\omega$. This point corresponds to a coarse-grained time slice of the \textsc{Dijkgraaf-Witten}
\textsc{TQFT}, or equivalently, to a coarse-grained twisted quantum double model.

\section{Discussion and Outlook}
\label{sec:discussion}

In this article we presented a new standard form for \textsc{PEPS},
which is based on the twisted action of a finite group on the
virtual boundary of a tensor.
Each group element~$g$ acts via an \textsc{MPO}~$V^\omega(g)$. As long
as the twist~$\omega$ is nontrivial \emph{each} such \textsc{MPO}
is entangled, i.e. it has a virtual bond dimension $D=\abs{G}>1$. While
$V^\omega$ is still a \emph{linear} representation on the tensor product of
virtual spaces (along the virtual boundary), the twist~$\omega$
modifies the associativity between those tensor factors nontrivially,
which affects the symmetric subspace. The nontrivial interplay between this
twisted symmetric subspace and the different (periodic) boundary conditions on a torus is
what can accommodate for the many new universality classes compared to the untwisted case~\cite{Schuch:2010p2806}.

In fact, the symmetric subspace is the fundamental object. The twisted
group action is merely a tool to consistently describe a family of virtual subspaces
which grow at the correct asymptotic rate as the tensor network is
grown (which is nothing but the intersection condition of Theorem~\ref{thm:intersection}).
For example, the (possibly twisted) action of certain Hopf
algebras~\cite{Buerschaper:2013ky} also
leads to a consistent family of subspaces and describes yet another kind
of intrinsic topological order in two dimensions. In general,
it may be more convenient to define the family of virtual subspaces
in terms of projections with certain relations rather than by an explicit
algebraic action. Indeed, it should be possible to derive a suitable
family of virtual projections for any \textsc{Turaev-Viro}
invariant~\cite{Turaev:1992fp,Barrett:1996dn} by generalizing the correspondence between
$(G,\omega)$-isometric tensor networks and
the $(2+1)$-dimensional \textsc{Dijkgraaf-Witten} partition function
outlined in this article. The flatness condition for triangles will have to
be replaced by the fusion rule of a spherical fusion category, and since
two edges of a flat triangle may no longer determine the third one uniquely,
additional indices will be required for the \textsc{MPO} describing
the projections. This virtual \enquote{symmetry} will turn out to be the one
arising in (tensor network representations of) the
ground states~\cite{Gu:2009p1885,Buerschaper:2009p1923}
of general \textsc{Levin-Wen} models~\cite{Levin:2004p117,KirillovJr:2011wq}.
As such it will provide a standard form for \textsc{PEPS} which further
generalizes the one introduced in this article.

We also presented a family of gapped, frustration-free Hamiltonians with
non-commuting interaction terms whose universality class is determined by
the new standard form~$\mathcal{P}^\omega$ for \textsc{PEPS}. Previously,
this class has mainly been studied using two approaches: 
quantum field theory~\cite{Dijkgraaf:1990p2241,deWildPropitius:1997p3081,Hung:2012vw}
and exactly solved models~\cite{Hu:mIFulyz6,Mesaros:2013es}. In the first
approach it is not always clear whether a particular field theory
can actually be implemented
by a microscopic Hamiltonian.
The second approach yields comparatively simple Hamiltonians which consist of
commuting projections. Our results address both these shortcomings and may thus
provide a promising route for further progress in the classification
of 2D gapped quantum spin systems.

It is a very interesting question
how this classification of $(2+1)$-dimensional phases via virtual \enquote{symmetry} needs to be
extended to incorporate (physical) symmetry protection or higher dimensions.

\section*{Acknowledgements}

The author acknowledges stimulating discussions with
Ignacio Cirac,
Zheng-Cheng Gu,
Alexander Kirillov Jr.,
David Pérez-García,
Norbert Schuch and
Frank Verstraete.
He would also like to thank Guifré Vidal for valuable comments
on an earlier version of this manuscript.

This research was supported in part by Perimeter Institute for Theoretical Physics.
Research at Perimeter Institute is supported by the Government of
Canada through Industry Canada and by the Province of Ontario through
the Ministry of Research and Innovation.

\begin{raggedright}
    \printbibliography
\end{raggedright}

\appendix

\section{Group cohomology}

\subsection{3-Cocycles}

A 3-cocycle~$\omega$ on a finite group~$G$ is a function $G\times G\times G\to\mathrm{U}(1)$
which satisfies
\begin{equation}
    \label{eq:3-cocycle_definition}
    \omega(a,
           b,
           c)\mkern2mu
    \omega(a,
           b
           c,
           d)\mkern2mu
    \omega(b,
           c,
           d)
    =\omega(a
            b,
            c,
            d)\mkern2mu
     \omega(a,
            b,
            c
            d)
\end{equation}
for all $a$, $b$, $c$, $d\in G$.

Recall that a 3-cocycle can be obtained from any 2-cochain~$\phi$ via
\begin{equation}
    \label{eq:3-coboundary_definition}
    (\mathrm{d}
     \phi)(a,
           b,
           c)
    \coloneqq
     \frac{\phi(a,
                b
                c)\mkern2mu
           \phi(b,
                c)}
          {\phi(a,
                b)\mkern2mu
           \phi(a
                b,
                c)}.
\end{equation}
This is in fact called a 3-coboundary. We regard two 3-cocycles~$\omega$
and~$\omega'$ that only differ by
a 3-coboundary~$\mathrm{d}\phi$ as equivalent, i.e. $\omega$ and~$\omega'$ are in the same cohomology class.
The set of such cohomology classes forms an Abelian group~$H^3\bigl(G,\mathrm{U}(1)\bigr)$.

\bigskip
A 3-cocycle~$\tilde{\omega}$ is called \emph{normalized} if
$\tilde{\omega}(1,b,c)=\tilde{\omega}(a,1,c)=\tilde{\omega}(a,b,1)=1$
for all $a$, $b$, $c\in G$. Every 3-cocycle~$\omega$ is equivalent
to a normalized 3-cocycle~$\tilde{\omega}$ for some~$\phi$. In particular, this means:
\begin{align}
    \omega(1,
           b,
           c)
    & =\frac{\phi(1,
                  b
                  c)}
            {\phi(1,
                  b)},
       \label{eq:reduction1} \\
    \omega(a,
           1,
           c)
    & =\frac{\phi(1,
                  c)}
            {\phi(a,
                  1)},
       \label{eq:reduction2} \\
    \omega(a,
           b,
           1)
    & =\frac{\phi(b,
                  1)}
            {\phi(a
                  b,
                  1)}.
       \label{eq:reduction3}
\end{align}

\subsection{Twisted 2-Cocycles}

Recall that a twisted 2-cocycle~$c$ on a finite group~$G$ is a function
$G\times G\times G\to\mathrm{U}(1)$ which satisfies
\begin{equation}
    \label{eq:twisted_2-cocycle}
    c_g(x,
        y)\mkern2mu
    c_g(x
        y,
        z)
    =c_g(x,
         y
         z)\mkern2mu
     c_{x^{-1}
        g
        x}(y,
           z)
\end{equation}
for all $g$, $x$, $y$, $z\in G$.

Let us record some useful properties of twisted 2-cocycles.

\begin{lemma}
    \label{lem:c_pair_conjugation}
    Let $g,h\in G$ with $hg=gh$.
    
    Then for all $t\in G$ and $x\in Z(g^t\mkern-2mu,h^t)$:
    \begin{equation}
        \frac{c_{g^t}(h^t\mkern-2mu,
                      x)}
             {c_{g^t}(x,
                      h^t)}
        =\frac{c_g(h,
                   \prescript{t}{}{\mkern-1mu x})}
              {c_g(\prescript{t}{}{\mkern-1mu x},
                   h)}.
    \end{equation}
    
    In particular, $(g,h)$ is a $c$-regular pair if and only
    if $(g^t\mkern-2mu,h^t)$ is a $c$-regular pair.
\end{lemma}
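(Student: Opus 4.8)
The plan is to express everything through the twisted 2-cocycle relation~\eqref{eq:twisted_2-cocycle}, using only the two commutation facts available: $hg=gh$ by hypothesis, and the fact that, since $x\in Z(g^t,h^t)$, the element $u:=\prescript{t}{}{\mkern-1mu x}=txt^{-1}$ lies in $Z(g,h)$ (conjugate $[x,g^t]=[x,h^t]=e$ by $t$). Note that $u$ then commutes with both $g$ and $h$, so that $g^h=g$ and $g^u=g$; these two identities are exactly what will make the calculation telescope.

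First I would solve~\eqref{eq:twisted_2-cocycle} for the conjugated cocycle,
\[
    c_{t^{-1}gt}(y,z)=\frac{c_g(t,y)\,c_g(ty,z)}{c_g(t,yz)},
\]
and apply it twice, with $(y,z)=(h^t,x)$ and with $(y,z)=(x,h^t)$, to rewrite the left-hand ratio $c_{g^t}(h^t,x)/c_{g^t}(x,h^t)$ as a product of six factors involving only $c_g$. Substituting $x=t^{-1}ut$ and using $hu=uh$, the two factors $c_g(t,t^{-1}htx)$ and $c_g(t,xt^{-1}ht)$ collapse to the same value and cancel, leaving
\[
    \frac{c_g(t,t^{-1}ht)\,c_g(ht,t^{-1}ut)}{c_g(t,t^{-1}ut)\,c_g(ut,t^{-1}ht)}.
\]

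Next I would ``peel off'' the $t$'s: apply~\eqref{eq:twisted_2-cocycle} to $c_g(ht,t^{-1}ut)$ with triple $(h,t,t^{-1}ut)$ and to $c_g(ut,t^{-1}ht)$ with triple $(u,t,t^{-1}ht)$, the conjugated cocycles $c_{g^h}$, $c_{g^u}$ collapsing back to $c_g$. After the cancellations one arrives at $c_g(h,ut)\,c_g(u,t)\big/\bigl(c_g(h,t)\,c_g(u,ht)\bigr)$. Two final applications of~\eqref{eq:twisted_2-cocycle}, with triples $(h,u,t)$ and $(u,h,t)$, convert numerator and denominator into $c_g(h,u)\,c_g(hu,t)$ and $c_g(u,h)\,c_g(uh,t)$; since $hu=uh$ the factors $c_g(hu,t)=c_g(uh,t)$ cancel and we are left with $c_g(h,u)/c_g(u,h)=c_g\bigl(h,\prescript{t}{}{\mkern-1mu x}\bigr)\big/c_g\bigl(\prescript{t}{}{\mkern-1mu x},h\bigr)$, which is the asserted identity.

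The ``in particular'' statement then follows at once: the map $x\mapsto\prescript{t}{}{\mkern-1mu x}$ is a bijection $Z(g^t,h^t)\to Z(g,h)$, and $h^t\in Z(g^t)\Leftrightarrow hg=gh\Leftrightarrow h\in Z(g)$, so by the identity just proved ``$c_{g^t}(h^t,x)=c_{g^t}(x,h^t)$ for all $x\in Z(g^t,h^t)$'' is equivalent to ``$c_g(h,u)=c_g(u,h)$ for all $u\in Z(g,h)$''; hence $(g^t,h^t)$ is $c$-regular iff $(g,h)$ is. I expect the only real obstacle to be bookkeeping: one must check that every appeal to~\eqref{eq:twisted_2-cocycle} carries the correct conjugated index and that the group-element arguments are simplified consistently, since a single mismatched conjugate would destroy the telescoping. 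Apart from that, the argument is routine.
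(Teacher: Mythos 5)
Your proof is correct and follows essentially the same route as the paper's: a direct telescoping calculation via repeated applications of the twisted 2-cocycle relation~\eqref{eq:twisted_2-cocycle}, using $hg=gh$ and $\prescript{t}{}{\mkern-1mu x}\in Z(g,h)$ to collapse each conjugated cocycle index $c_{a^{-1}ga}$ back to $c_g$. The only cosmetic difference is your conjugation convention ($g^t=t^{-1}gt$, $\prescript{t}{}{\mkern-1mu x}=txt^{-1}$) versus the one implicit in the paper's calculation ($g^t=tgt^{-1}$, $\prescript{t}{}{\mkern-1mu x}=t^{-1}xt$), which is immaterial since $t$ ranges over all of~$G$.
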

\begin{proof}
    Note first that $x\in Z(g^t\mkern-2mu,h^t)$ is equivalent to
    $\prescript{t}{}{\mkern-1mu x}\in Z(g,h)$.
    Repeatedly using~\eqref{eq:twisted_2-cocycle} we then obtain
    \begin{align*}
        \frac{c_{g^t}(h^t\mkern-2mu,
                      x)}
             {c_{g^t}(x,
                      h^t)}
        & =\frac{c_g(h
                     t^{-1}\mkern-3mu,
                     x)\mkern2mu
                 c_g(t^{-1}\mkern-3mu,
                     h^t)}
                {c_g(t^{-1}\mkern-3mu,
                     h^t
                     x)}\mkern2mu
           \frac{c_g(t^{-1}\mkern-3mu,
                     x
                     h^t)}
                {c_g(t^{-1}
                     x,
                     h^t)\mkern2mu
                 c_g(t^{-1}\mkern-3mu,
                     x)} \\
        & =\frac{c_g(h
                     t^{-1}\mkern-3mu,
                     x)\mkern2mu
                 c_g(t^{-1}\mkern-3mu,
                     h^t)}
                {c_g(t^{-1}
                     x,
                     h^t)\mkern2mu
                 c_g(t^{-1}\mkern-3mu,
                     x)} \\
        & =\frac{c_g(t^{-1}\mkern-3mu,
                     x)\mkern2mu
                 c_g(h,
                     t^{-1}
                     x)}
                {c_g(h,
                     t^{-1})}\mkern2mu
           \frac{c_g(t^{-1}\mkern-3mu,
                     h^t)}
                {c_g(t^{-1}
                     x,
                     h^t)\mkern2mu
                 c_g(t^{-1}\mkern-3mu,
                     x)} \\
        & =\frac{c_g(h,
                     \prescript{t}{}{\mkern-1mu x}
                     t^{-1})}
                {c_g(h,
                     t^{-1})}\mkern2mu
           \frac{c_g(t^{-1}\mkern-3mu,
                     h^t)}
                {c_g(\prescript{t}{}{\mkern-1mu x}
                     t^{-1}\mkern-3mu,
                     h^t)} \\
        & =\frac{c_g(h
                     \prescript{t}{}{\mkern-1mu x},
                     t^{-1})\mkern2mu
                 c_g(h,
                     \prescript{t}{}{\mkern-1mu x})}
                {c_g(\prescript{t}{}{\mkern-1mu x},
                     t^{-1})}\mkern2mu
           \frac{c_g(\prescript{t}{}{\mkern-1mu x},
                     t^{-1})}
                {c_g(t^{-1}\mkern-3mu,
                     h^t)\mkern2mu
                 c_g(\prescript{t}{}{\mkern-1mu x},
                     h
                     t^{-1})}\mkern2mu
           \frac{c_g(t^{-1}\mkern-3mu,
                     h^t)}
                {c_g(h,
                     t^{-1})} \\
        & =\frac{c_g(h
                     \prescript{t}{}{\mkern-1mu x},
                     t^{-1})\mkern2mu
                 c_g(h,
                     \prescript{t}{}{\mkern-1mu x})}
                {c_g(\prescript{t}{}{\mkern-1mu x},
                     h
                     t^{-1})\mkern2mu
                 c_g(h,
                     t^{-1})} \\
        & =\frac{c_g(\prescript{t}{}{\mkern-1mu x}
                     h,
                     t^{-1})\mkern2mu
                 c_g(h,
                     \prescript{t}{}{\mkern-1mu x})}
                {c_g(\prescript{t}{}{\mkern-1mu x},
                     h
                     t^{-1})\mkern2mu
                 c_g(h,
                     t^{-1})} \\
        & =\frac{c_g(h,
                     t^{-1})\mkern2mu
                 c_g(\prescript{t}{}{\mkern-1mu x},
                     h
                     t^{-1})}
                {c_g(\prescript{t}{}{\mkern-1mu x},
                     h)}\mkern2mu
           \frac{c_g(h,
                     \prescript{t}{}{\mkern-1mu x})}
                {c_g(\prescript{t}{}{\mkern-1mu x},
                     h
                     t^{-1})\mkern2mu
                 c_g(h,
                     t^{-1})} \\
        & =\frac{c_g(h,
                     \prescript{t}{}{\mkern-1mu x})}
                {c_g(\prescript{t}{}{\mkern-1mu x},
                     h)}.
    \end{align*}
\end{proof}

\begin{lemma}
    \label{lem:twisted_cohomology}
    Let $\epsilon$ a function $G\times G\to\mathrm{U}(1)$ and
    \begin{equation}
        c_g'(x,
             y)
        =(c_g\mkern2mu
          \mathrm{d}
          \epsilon_g)(x,
                      y)
        \coloneqq
         c_g(x,
             y)\mkern2mu
         \frac{\epsilon_g(x)\mkern2mu
               \epsilon_{x^{-1}
                         g
                         x}(y)}
              {\epsilon_g(x
                          y)}.
    \end{equation}
    
    Then
    \begin{enumerate}
        \item $c'$ is a twisted 2-cocycle.
        \item A pair~$(g,h)$ is $c$-regular if and only if it is $c'$-regular.
    \end{enumerate}
\end{lemma}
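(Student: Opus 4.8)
The plan is to split the statement into its two parts and, in both, reduce everything to elementary properties of the \enquote{twisted coboundary}
$(\mathrm{d}\epsilon_g)(x,y)\coloneqq\epsilon_g(x)\,\epsilon_{x^{-1}gx}(y)/\epsilon_g(xy)$, so that by construction $c'=c\cdot\mathrm{d}\epsilon$ as a pointwise product of functions $G\times G\times G\to\mathrm{U}(1)$.

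For part~(i) I would first verify directly that $\mathrm{d}\epsilon$ is itself a twisted 2-cocycle, i.e. that it satisfies~\eqref{eq:twisted_2-cocycle}. Expanding $(\mathrm{d}\epsilon_g)(x,y)\,(\mathrm{d}\epsilon_g)(xy,z)$ and $(\mathrm{d}\epsilon_g)(x,yz)\,(\mathrm{d}\epsilon_{x^{-1}gx})(y,z)$, the factors $\epsilon_g(xy)$ telescope on the left, the factors $\epsilon_{x^{-1}gx}(yz)$ telescope on the right, and the two expressions coincide once one observes the identity $(xy)^{-1}g(xy)=y^{-1}(x^{-1}gx)y$ governing the conjugated subscripts: both sides equal $\epsilon_g(x)\,\epsilon_{x^{-1}gx}(y)\,\epsilon_{(xy)^{-1}g(xy)}(z)/\epsilon_g(xyz)$. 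Since~\eqref{eq:twisted_2-cocycle} is a multiplicative identity with identical arguments on the two sides, the pointwise product of two solutions is again a solution; hence $c'=c\cdot\mathrm{d}\epsilon$ is a twisted 2-cocycle. (Equivalently: the twisted 2-cocycles form a group under pointwise multiplication, and $\mathrm{d}\epsilon$ lies in it.)

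For part~(ii), recall that $(g,h)$ is $c$-regular iff $hg=gh$ and $c_g(h,x)=c_g(x,h)$ for every $x\in Z(g,h)$. The condition $hg=gh$ does not mention the cocycle, so it is unaffected by the passage $c\mapsto c'$. For the remaining condition I would compute the ratio
\[
    \frac{c_g'(h,x)}{c_g'(x,h)}
    =\frac{c_g(h,x)}{c_g(x,h)}\cdot
     \frac{(\mathrm{d}\epsilon_g)(h,x)}{(\mathrm{d}\epsilon_g)(x,h)}
\]
for a commuting pair $(g,h)$ and an element $x\in Z(g,h)$, and observe that the conjugated subscripts collapse: $h^{-1}gh=g$ since $h\in Z(g)$, while $x^{-1}gx=g$ and $xh=hx$ since $x\in Z(g,h)$. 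Hence $(\mathrm{d}\epsilon_g)(h,x)=\epsilon_g(h)\epsilon_g(x)/\epsilon_g(hx)=(\mathrm{d}\epsilon_g)(x,h)$, so the correction factor equals $1$ and $c_g'(h,x)/c_g'(x,h)=c_g(h,x)/c_g(x,h)$. Therefore the defining identities for $c$- and $c'$-regularity agree term by term, which gives~(ii).

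All of this is routine algebra; the one place that genuinely needs care --- and the only \enquote{obstacle} worth flagging --- is keeping track of the conjugations appearing in the subscripts of $\epsilon$, since this twisting is precisely what distinguishes the situation from ordinary group cohomology, where both parts are standard.
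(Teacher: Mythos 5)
Your argument is correct and follows the same route as the paper: part~(i) comes down to checking that $\mathrm{d}\epsilon$ satisfies~\eqref{eq:twisted_2-cocycle} (the paper leaves this as ``follows directly''), and part~(ii) hinges on exactly the observation that the conjugated subscripts collapse when $h\in Z(g)$ and $x\in Z(g,h)$, making the coboundary factor symmetric. If anything, your presentation of~(ii) via the ratio $c_g'(h,x)/c_g'(x,h)=c_g(h,x)/c_g(x,h)$ is a touch cleaner than the paper's chain of equalities, since it makes the ``if and only if'' immediate without having to argue the two directions separately.
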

\begin{proof}
    The first claim follows directly from~\eqref{eq:twisted_2-cocycle}.
    
    The second claim follows from
    \begin{equation*}
        c_g'(h,
             x)
        =c_g(h,
             x)\mkern2mu
         \frac{\epsilon_g(h)\mkern2mu
               \epsilon_{h^{-1}
                         g
                         h}(x)}
              {\epsilon_g(h
                          x)}
        =c_g(x,
             h)\mkern2mu
         \frac{\epsilon_g(x)\mkern2mu
               \epsilon_{x^{-1}
                         g
                         x}(h)}
              {\epsilon_g(x
                          h)}
        =c_g'(x,
              h)
    \end{equation*}
    for $h\in Z(G)$ and all $x\in Z(g,h)$.
\end{proof}

\subsection{Twisted 2-Cocycles from 3-Cocycles}

We turn to the particular twisted 2-cocycles~\eqref{eq:c_omega_definition}
which are defined in terms of (ordinary) 3-cocycles.
 
\begin{lemma}
    \label{lem:c_special}
    Let $\omega$ be a 3-cocycle on a finite group.
    Then
    \begin{enumerate}
        \item $c^\omega$ is a twisted 2-cocycle.
        \item If $\omega$ is normalized, then so is~$c^\omega$.
        \item If $\omega'=\omega\mkern2mu\mathrm{d}\phi$ for any 2-cochain~$\phi$,
            then $c^{\omega'}=c^\omega\mkern2mu\mathrm{d}\epsilon$
            with
            \begin{equation}
                \epsilon_g(x)
                =\frac{\phi(x,
                            \prescript{x\mkern-2mu}{}{g})}
                      {\phi(g,
                            x)}.
            \end{equation}
    \end{enumerate}
\end{lemma}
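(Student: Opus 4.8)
The plan is to verify all three parts by direct computation, reducing everything to the 3-cocycle relation~\eqref{eq:3-cocycle_definition} and, for part~(3), the definition~\eqref{eq:3-coboundary_definition} of a 3-coboundary. Write $\prescript{x}{}{g}$ for the conjugate $x^{-1}gx$, so that $x\cdot\prescript{x}{}{g}=gx$, $xy\cdot\prescript{xy}{}{g}=g\,xy$, $y\cdot\prescript{xy}{}{g}=\prescript{x}{}{g}\cdot y$ and $\prescript{y}{}{(\prescript{x}{}{g})}=\prescript{xy}{}{g}$; these elementary identities are all that is needed to manage the group-element arguments. Part~(2) is then immediate: setting $x=e$ (resp.\ $y=e$) in~\eqref{eq:c_omega_definition} and using $\prescript{e}{}{g}=g$, each of the three $\omega$-factors acquires a neutral element in one of its slots, hence equals $1$ when $\omega$ is normalized, so $c_g^\omega(e,y)=c_g^\omega(x,e)=1$.

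For part~(1) the plan is to substitute~\eqref{eq:c_omega_definition} into the twisted 2-cocycle relation~\eqref{eq:twisted_2-cocycle}; after cancelling the factors common to both sides one is left with an identity between products of values of $\omega$ with arguments built from $g,x,y,z$ and their conjugates, which follows from finitely many applications of~\eqref{eq:3-cocycle_definition} to the triples obtained by inserting the partial products $xy$, $yz$ and $xyz$. (Conceptually this $c^\omega$ is the transgression, or ``slant product'', of $\omega$ along a circle, familiar from twisted quantum double constructions, so one may alternatively invoke~\cite{Dijkgraaf:1990p2241} for this statement; but the direct check is elementary.) This bookkeeping is the only real labour in the lemma: there is no conceptual difficulty, but the applications of~\eqref{eq:3-cocycle_definition} must be ordered so that all auxiliary $\omega$-factors cancel, and the conjugated arguments $\prescript{x}{}{g}$, $\prescript{y}{}{g}$, $\prescript{xy}{}{g}$ must be tracked with care.

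For part~(3) the key observation is that $\omega\mapsto c^\omega$ is multiplicative: since $H^3(G,\mathrm{U}(1))$ is written multiplicatively and, by~\eqref{eq:c_omega_definition}, each $c_g^\omega(x,y)$ is a fixed Laurent monomial in the values of $\omega$ with arguments not depending on $\omega$, one has $c^{\omega\,\mathrm{d}\phi}=c^\omega\,c^{\mathrm{d}\phi}$. It therefore suffices to identify $c^{\mathrm{d}\phi}$ with the twisted 2-coboundary $\mathrm{d}\epsilon$. To do so I would expand $c_g^{\mathrm{d}\phi}(x,y)$ by inserting~\eqref{eq:3-coboundary_definition} into each of $\mathrm{d}\phi(g,x,y)$, $\mathrm{d}\phi(x,y,\prescript{xy}{}{g})$ and $\mathrm{d}\phi(x,\prescript{x}{}{g},y)$, use the conjugation identities above to simplify the products appearing inside the $\phi$-slots, and cancel; the surviving $\phi$-ratio is precisely $\epsilon_g(x)\,\epsilon_{x^{-1}gx}(y)\,\epsilon_g(xy)^{-1}$ for $\epsilon_g(x)=\phi(x,\prescript{x}{}{g})/\phi(g,x)$, once one also rewrites $\epsilon_{\prescript{x}{}{g}}(y)=\phi(y,\prescript{xy}{}{g})/\phi(\prescript{x}{}{g},y)$ via $\prescript{y}{}{(\prescript{x}{}{g})}=\prescript{xy}{}{g}$. (That $c^{\mathrm{d}\phi}$ is automatically a twisted 2-cocycle, by part~(1), is a consistency check but is not needed.)
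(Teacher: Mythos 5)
Your proposal is correct and takes essentially the same approach as the paper, which also proves all three parts by direct substitution into the defining identities~\eqref{eq:3-cocycle_definition}, \eqref{eq:twisted_2-cocycle} and~\eqref{eq:3-coboundary_definition}; the paper leaves the computations entirely to the reader, whereas you organize them slightly more cleanly by observing that $\omega\mapsto c^\omega$ is multiplicative, so that part~(3) reduces to the single calculation $c^{\mathrm{d}\phi}=\mathrm{d}\epsilon$, which indeed checks out.
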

\begin{proof}
    The first statement follows from~\eqref{eq:3-cocycle_definition}
    and~\eqref{eq:twisted_2-cocycle}, the second statement
    is obvious.
    The third claim follows from~\eqref{eq:3-coboundary_definition}
    by a simple calculation.
\end{proof}

\begin{lemma}
    \label{lem:c_omega_regularity}
    One has
    \begin{enumerate}
        \item A pair~$(g,h)$ is $c^\omega$-regular if and only if
            $(h,g)$ is.
        \item Every pair~$(g,g)$ is $c^\omega$-regular.
    \end{enumerate}
\end{lemma}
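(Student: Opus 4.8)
The plan is to derive both parts from a single explicit evaluation of $c^\omega$ on commuting arguments. First I would unwind \eqref{eq:c_omega_definition}: if $gh=hg$ and $x\in Z(g,h)$, then each element that conjugates $g$ inside the formula — namely $h$, $hx$, $x$ and $xh$ — centralizes $g$, hence fixes $g$ under conjugation. Therefore
\[
  c_g^\omega(h,x)=\frac{\omega(g,h,x)\,\omega(h,x,g)}{\omega(h,g,x)},\qquad
  c_g^\omega(x,h)=\frac{\omega(g,x,h)\,\omega(x,h,g)}{\omega(x,g,h)},
\]
and so
\[
  \frac{c_g^\omega(h,x)}{c_g^\omega(x,h)}
  =\frac{\omega(g,h,x)\,\omega(h,x,g)\,\omega(x,g,h)}
        {\omega(h,g,x)\,\omega(g,x,h)\,\omega(x,h,g)}.
\]
No use of the 3-cocycle identity \eqref{eq:3-cocycle_definition} is needed for this step; everything reduces to identifying which of the arguments of $\omega$ is being conjugated.

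The key observation is then that the displayed ratio is manifestly inverted under the exchange $g\leftrightarrow h$, i.e. $c_h^\omega(g,x)/c_h^\omega(x,g)$ is the reciprocal of $c_g^\omega(h,x)/c_g^\omega(x,h)$ for every $x\in Z(g,h)=Z(h,g)$. From this, part~(1) is immediate: $(g,h)$ is $c^\omega$-regular iff $gh=hg$ and that ratio equals $1$ for all $x\in Z(g,h)$; since the condition $gh=hg$ is symmetric in $g$ and $h$, and the ratio attached to $(h,g)$ is the reciprocal of the one attached to $(g,h)$, one condition holds if and only if the other does. For part~(2) I would simply set $h=g$ in the displayed formula: numerator and denominator then agree factor by factor, so the ratio is identically $1$ on $Z(g,g)=Z(g)$, while $g\in Z(g)$ is trivial; hence $(g,g)$ is $c^\omega$-regular.

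The only place demanding any care — and the nearest thing to an obstacle — is the bookkeeping in the first step: one must track precisely which conjugation $\prescript{\bullet}{}{g}$ occurs in each factor of $c_g^\omega(h,x)$ and of $c_g^\omega(x,h)$, and verify that the conjugating element (in particular $hx$, resp.\ $xh$) centralizes $g$, which it does because both $h$ and $x$ do. Once the conjugations have been trivialized, the symmetry argument is purely combinatorial and needs no further cocycle manipulation.
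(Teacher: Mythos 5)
Your proof is correct and matches the paper's argument step for step: both reduce $c_g^\omega(h,x)/c_g^\omega(x,h)$ to the six-factor ratio of $\omega$'s (using that $h$, $hx$, $x$, $xh$ all centralize $g$), both read off part~(1) from the fact that swapping $g\leftrightarrow h$ inverts this ratio, and both obtain part~(2) by setting $h=g$. You have simply filled in the bookkeeping that the paper leaves implicit in its ``easily seen'' and ``obvious'' remarks.
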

\begin{proof}
    By definition, a $c^\omega$-regular pair~$(g,h)$ is
    equivalent to requiring $hg=gh$ and
    \begin{equation}
        \label{eq:c_omega}
        \frac{c_g^\omega(h,
                         x)}
             {c_g^\omega(x,
                         h)}
        =\frac{\omega(g,
                      h,
                      x)\mkern2mu
               \omega(h,
                      x,
                      g)\mkern2mu
               \omega(x,
                      g,
                      h)}
              {\omega(h,
                      g,
                      x)\mkern2mu
               \omega(g,
                      x,
                      h)\mkern2mu
               \omega(x,
                      h,
                      g)}
        =1       
    \end{equation}
    for all $x\in Z(g,h)$.
    
    1. It is easily seen from~\eqref{eq:c_omega} that
    \begin{equation*}
        \frac{c_g^\omega(h,
                         x)}
             {c_g^\omega(x,
                         h)}
        =\biggl(\frac{c_h^\omega(g,
                                 x)}
                     {c_h^\omega(x,
                                 g)}\biggr)^{-1},
    \end{equation*}
    and the claim follows directly.
    
    2. Obvious from~\eqref{eq:c_omega}.
\end{proof}

\subsection{Properties of~$\eta$}

We need to study the properties of~$\eta_g(x,y)$ as defined
by~\eqref{eq:eta_definition} in two situations: first for
the general case $y\in G$ and then for the special case
$y\in Z(g,x)$. In the following we always assume that $x\in Z(g)$.

\begin{lemma}
    \label{lem:eta_property1}
    Let $y\in G$. Then
    \begin{equation}
        \eta_{g^t}(x^t\mkern-2mu,
                   y
                   t^{-1})
        =\frac{\eta_g(x,
                      y)}
              {\eta_g(x,
                      t)}.
    \end{equation}
    holds for all $t\in G$.
\end{lemma}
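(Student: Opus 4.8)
The plan is to obtain this as the crossed-homomorphism (\enquote{twisted $1$-cocycle}) property that $\eta$ inherits from the fact that the \textsc{MPO}s $V^\omega(s)$ form a unitary representation of $G$. The key input is the action \eqref{eq:closure_action}, namely $V^\omega(s)\ket{M(g,h)}=\eta_g(h,s)\ket{M(g^s,h^s)}$ for the closure states with the coefficients \eqref{eq:closure_DW_components}; this identity was already derived in the proof of Proposition~\ref{prop:closure_minimal_torus} (via Lemma~\ref{lem:eta_neutral}) and does not itself invoke the present lemma, so there is no circularity.

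Concretely, I would fix $g$ and $x\in Z(g)$ and put $h\coloneqq x$, so that $hg=gh$ and \eqref{eq:closure_action} applies to $\ket{M(g,h)}$ and to each of its conjugates $\ket{M(g^a,h^a)}$. Note that $\ket{M(g,h)}=\sum_\alpha\lambda_M(\alpha;g,h)\ket{\alpha,\alpha g,\alpha gh,\alpha h}$ is nonzero because with \eqref{eq:closure_DW_components} every $\lambda_M$ lies in $\mathrm{U}(1)$ and the basis kets are pairwise distinct. Evaluating $V^\omega(b)\,V^\omega(a)=V^\omega(ba)$ on $\ket{M(g,h)}$ and applying \eqref{eq:closure_action} on both sides yields, on the left, $\eta_g(h,a)\,\eta_{g^a}(h^a,b)\,\ket{M((g^a)^b,(h^a)^b)}$ and, on the right, $\eta_g(h,ba)\,\ket{M(g^{ba},h^{ba})}$. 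Since the two kets coincide and are nonzero, comparing phases gives
\[
    \eta_g(h,ba)=\eta_g(h,a)\,\eta_{g^a}(h^a,b)\qquad(a,b\in G).
\]
Specializing to $a=t$ and $b=yt^{-1}$ (hence $ba=y$) and recalling $h=x$ gives $\eta_g(x,y)=\eta_g(x,t)\,\eta_{g^t}(x^t,yt^{-1})$; dividing by $\eta_g(x,t)\in\mathrm{U}(1)$ is exactly the claim.

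The only delicate point is the bookkeeping of the conjugation convention, so that $(g^a)^b=g^{ba}$ and $(h^a)^b=h^{ba}$ hold on the nose and the two kets above really are the same vector; once this is arranged the argument is automatic. If a self-contained calculation is preferred instead, one can substitute the explicit formula \eqref{eq:eta_definition} for each of $\eta_{g^t}(x^t,yt^{-1})$, $\eta_g(x,y)$ and $\eta_g(x,t)$, expand into products of values of $\omega$, and verify the identity by repeated use of the $3$-cocycle relation \eqref{eq:3-cocycle_definition}, exactly in the manner of Lemma~\ref{lem:c_pair_conjugation}; that route is longer but entirely elementary, and its only obstacle is organizational.
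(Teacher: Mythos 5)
Your argument is correct, and it takes a genuinely different route from the paper. Where the paper proves Lemma~\ref{lem:eta_property1} by a self-contained computation directly from the twisted $2$-cocycle relation~\eqref{eq:twisted_2-cocycle} (explicitly noting that the proof works for \emph{any} twisted $2$-cocycle~$c$, not just $c^\omega$), you instead read off the crossed-homomorphism identity $\eta_g(h,ba)=\eta_g(h,a)\,\eta_{g^a}(h^a,b)$ from the representation property $V^\omega(b)V^\omega(a)=V^\omega(ba)$ applied to a nonvanishing closure state, and then specialize $a=t$, $b=yt^{-1}$. Both the non-circularity check and the conjugation bookkeeping (here $g^t=tgt^{-1}$, so $(g^a)^b=g^{ba}$ lines up with $V^\omega(b)V^\omega(a)=V^\omega(ba)$) are handled correctly: the derivation of~\eqref{eq:closure_action} in Proposition~\ref{prop:closure_minimal_torus} occurs in its first step and relies only on Lemma~\ref{lem:eta_neutral}, whereas Lemma~\ref{lem:eta_property1} is only invoked in the third step. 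The tradeoff is that your argument is more conceptual — it exhibits $\eta_g(h,\cdot)$ as the phase cocycle of a projective orbit — but it imports the representation machinery and~\eqref{eq:closure_action} from the body of the paper into an appendix lemma that the paper keeps purely combinatorial, and it only applies to the specific $c^\omega$ rather than to an arbitrary twisted $2$-cocycle. The paper's version is therefore preferable as the \emph{official} proof for dependency reasons, but your observation is a nice consistency check and explains why the lemma should be expected to hold.
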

\begin{proof}
    It will be most convenient to prove this directly from
    the twisted 2-cocycle property of~$c^\omega$ without
    invoking~\eqref{eq:c_omega_definition}. Hence the following will in fact hold
    for arbitrary twisted 2-cocycles~$c$:
    \begin{align*}
        \eta_{g^t}(x^t\mkern-2mu,
                   y
                   \bar{t})
        & =\frac{c_{g^t}(t
                         \bar{y},
                         x^y)}
                {c_{g^t}(x^t\mkern-2mu,
                         t
                         \bar{y})} \\
        & =\frac{c_g(\bar{t},
                     t
                     \bar{y})\mkern2mu
                 c_g(\bar{y},
                     x^y)}
                {c_g(\bar{t},
                     t
                     x
                     \bar{y})}
           \frac{c_g(\bar{t},
                     t
                     x
                     \bar{y})}
                {c_g(\bar{t},
                     x^t)\mkern2mu
                 c_g(x
                     \bar{t},
                     t
                     \bar{y})} \\
        & =\frac{c_g(\bar{t},
                     t
                     \bar{y})\mkern2mu
                 c_g(\bar{y},
                     x^y)}
                {c_g(\bar{t},
                     x^t)}\mkern2mu
           \frac{1}
                {c_g(x
                     \bar{t},
                     t
                     \bar{y})} \\
        & =\frac{c_g(\bar{t},
                     t
                     \bar{y})\mkern2mu
                 c_g(\bar{y},
                     x^y)}
                {c_g(\bar{t},
                     x^t)}\mkern2mu
           \frac{c_g(x,
                     \bar{t})}
                {c_g(x,
                     \bar{y})\mkern2mu
                 c_{\bar{x}
                    g
                    x}(\bar{t},
                       t
                       \bar{y})} \\
        & =\frac{c_g(\bar{y},
                     x^y)}
                {c_g(x,
                     \bar{y})}\mkern2mu
           \frac{c_g(x,
                     \bar{t})}
                {c_g(\bar{t},
                     x^t)} \\
        & =\frac{\eta_g(x,
                        y)}
                {\eta_g(x,
                        t)}.
    \end{align*}
    Note that we repeatedly applied~\eqref{eq:twisted_2-cocycle} in the above derivation.
\end{proof}

Now we turn to the special case.
\begin{lemma}
    Let $y\in Z(g,x)$.
    Then
    \begin{align}
        \eta_x(g,
               y)
        & =\eta_g(x,
                  y)^{-1}, \\
        \eta_g(y,
               x)
        & =\eta_g(x^{-1}\mkern-3mu,
                  y^{-1})^{-1}.
    \end{align}
\end{lemma}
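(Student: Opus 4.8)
The plan is to reduce both identities to a single structural property of the function
\begin{equation*}
    \tau(a,b,c)
    \coloneqq
     \frac{\omega(a,b,c)\,\omega(b,c,a)\,\omega(c,a,b)}
          {\omega(b,a,c)\,\omega(a,c,b)\,\omega(c,b,a)},
\end{equation*}
which is well defined whenever $a$, $b$, $c$ pairwise commute. Its numerator runs over the even permutations of $(a,b,c)$ and its denominator over the odd ones, so every transposition interchanges numerator with denominator; hence $\tau$ is \emph{totally alternating}: $\tau(\sigma a,\sigma b,\sigma c)=\tau(a,b,c)^{\sgn\sigma}$ for all $\sigma\in S_3$. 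From the definition~\eqref{eq:c_omega_definition} of $c^\omega$, as recorded in the first equality of~\eqref{eq:c_omega}, one has $\tau(a,b,c)=c_a^\omega(b,c)/c_a^\omega(c,b)$ for any commuting triple.

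First I would unpack $\eta$ using the hypotheses. Since $x\in Z(g)$ and $y\in Z(g,x)$, the elements $g$, $x$, $y$ and all their inverses pairwise commute, so the conjugations appearing in~\eqref{eq:eta_definition} collapse ($x^y=x$, $g^y=g$, $y^x=y$, $(x^{-1})^{y^{-1}}=x^{-1}$, and so on). Substituting these into~\eqref{eq:eta_definition} gives
\begin{align*}
    \eta_g(x,y) &= \frac{c_g^\omega(y^{-1},x)}{c_g^\omega(x,y^{-1})} = \tau(g,y^{-1},x), \\
    \eta_x(g,y) &= \frac{c_x^\omega(y^{-1},g)}{c_x^\omega(g,y^{-1})} = \tau(x,y^{-1},g), \\
    \eta_g(y,x) &= \frac{c_g^\omega(x^{-1},y)}{c_g^\omega(y,x^{-1})} = \tau(g,x^{-1},y), \\
    \eta_g(x^{-1},y^{-1}) &= \frac{c_g^\omega(y,x^{-1})}{c_g^\omega(x^{-1},y)} = \tau(g,y,x^{-1}),
\end{align*}
where in each line the three arguments of $\tau$ pairwise commute, so the explicit $\omega$-form of $\tau$ applies. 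The first identity then reads $\tau(x,y^{-1},g)=\tau(g,y^{-1},x)^{-1}$, i.e.\ the transposition of the outer two arguments of $\tau$; the second reads $\tau(g,x^{-1},y)=\tau(g,y,x^{-1})^{-1}$, i.e.\ the transposition of the last two arguments. Both are instances of the alternating property, so the lemma follows.

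I do not expect a real obstacle: the work is purely bookkeeping — applying the commutation relations to kill the conjugations and then matching argument positions and inverses against $\tau$. The one point worth flagging is that, in contrast with Lemma~\ref{lem:eta_property1}, these identities do \emph{not} follow from the twisted $2$-cocycle axiom alone, since they relate $c^\omega$ with different subscripts (e.g.\ $c_g^\omega$ versus $c_x^\omega$); the specific way $c^\omega$ is assembled from the $3$-cocycle $\omega$, encapsulated in $\tau$ and already used in Lemma~\ref{lem:c_omega_regularity}, is essential. As a consistency check, the special case $\tau(x,y^{-1},g)=\tau(g,y^{-1},x)^{-1}$ should reproduce the antisymmetry relation exploited in the proof of Lemma~\ref{lem:c_omega_regularity}(1).
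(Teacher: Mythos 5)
Your proposal is correct and matches the paper's proof in substance: the paper likewise reduces $\eta_g(x,y)$ to the fully-commuting form~\eqref{eq:eta_simplified} and then reads off both identities by inspection of the six $\omega$-factors. Your $\tau$ and its alternating property under $S_3$ is just a clean formalization of what the paper leaves as ``one can immediately verify.''
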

\begin{proof}
    Note first that the assumption is equivalent to requiring
    $xg=gx\wedge yg=gy\wedge yx=xy$,
    in other words, all arguments of~$\eta$ mutually commute.
    Next we see that~\eqref{eq:eta_definition} simplifies to
    \begin{equation}
        \label{eq:eta_simplified}
        \eta_g(x,
               y)
        =\frac{c_g^\omega(y^{-1}\mkern-3mu,
                          x)}
              {c_g^\omega(x,
                          y^{-1})}
        =\frac{\omega(g,
                      y^{-1}\mkern-3mu,
                      x)\mkern2mu
               \omega(y^{-1}\mkern-3mu,
                      x\mkern-2mu,
                      g)\mkern2mu
               \omega(x,
                      g,
                      y^{-1})}
              {\omega(y^{-1}\mkern-3mu,
                      g,
                      x)\mkern2mu
               \omega(g,
                      x,
                      y^{-1})\mkern2mu
               \omega(x,
                      y^{-1}\mkern-3mu,
                      g)}.
    \end{equation}
    from where one can immediately verify both claims.
\end{proof}

\begin{lemma}
    \label{lem:eta_neutral}
    \begin{equation}
        \eta_g(x,
               e)
        =1.
    \end{equation}
\end{lemma}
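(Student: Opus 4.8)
The plan is to obtain this as an immediate corollary of Lemma~\ref{lem:eta_property1} by specialising the conjugating element to the identity. Recall that Lemma~\ref{lem:eta_property1} asserts $\eta_{g^t}(x^t, y t^{-1}) = \eta_g(x,y)/\eta_g(x,t)$ for every $t\in G$, under the standing assumptions $x\in Z(g)$ and $y\in G$, and that its proof is a direct manipulation of the twisted 2-cocycle identity~\eqref{eq:twisted_2-cocycle} which makes no use of the present statement; so there is no circularity in invoking it here.

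Concretely, I would set $t=e$. Since $g^e=g$, $x^e=x$ and $y\mkern1mu e^{-1}=y$, the left-hand side of Lemma~\ref{lem:eta_property1} collapses to $\eta_g(x,y)$ while the right-hand side becomes $\eta_g(x,y)/\eta_g(x,e)$, giving $\eta_g(x,y)=\eta_g(x,y)/\eta_g(x,e)$. Because $\eta_g(x,y)$ is a ratio of $\mathrm{U}(1)$-valued quantities, hence nonzero, dividing through yields $\eta_g(x,e)=1$. Note that the identity used is not vacuous: it holds for an arbitrary choice of $y$ (e.g. $y=g$), so the argument does not covertly reuse the value $\eta_g(x,e)$ that is being determined.

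As an independent sanity check one may instead evaluate $\eta_g(x,e)$ straight from the explicit form of~\eqref{eq:eta_definition}: at $y=e$ it reads $\eta_g(x,e)=\frac{\omega(g,e,x)\,\omega(e,x,g)\,\omega(x,g,e)}{\omega(g,x,e)\,\omega(e,g,x)\,\omega(x,e,g)}$. Writing $\omega=\tilde\omega\,\mathrm{d}\phi$ with $\tilde\omega$ normalised and using the reduction identities~\eqref{eq:reduction1}--\eqref{eq:reduction3} on each factor, all contributions of $\tilde\omega$ drop out and the surviving $\phi$-terms cancel in pairs down to $\phi(1,xg)\,\phi(gx,1)\big/\bigl(\phi(xg,1)\,\phi(1,gx)\bigr)$, which is $1$ because $x\in Z(g)$ forces $xg=gx$. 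There is no genuine obstacle in either route; the only point requiring a little care in the direct computation is matching each argument pattern with the correct one of the three cohomological reductions, which is purely mechanical.
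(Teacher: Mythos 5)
Your argument is correct, but it takes a genuinely different route from the paper. The paper proves $\eta_g(x,e)=1$ by substituting $y=e$ directly into the explicit expression in~\eqref{eq:eta_definition} and then reducing each of the six $\omega$-factors via the normalization identities~\eqref{eq:reduction1}--\eqref{eq:reduction3}; the commutativity $xg=gx$ (from the standing assumption $x\in Z(g)$) is what makes the residual ratio of $\phi$-terms collapse to $1$, exactly as your sanity check at the end reproduces. Your primary argument instead specialises Lemma~\ref{lem:eta_property1} at $t=e$, obtaining $\eta_g(x,y)=\eta_g(x,y)/\eta_g(x,e)$ and cancelling the nonzero $\mathrm{U}(1)$-valued factor. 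You are right that there is no circularity: the paper's proof of Lemma~\ref{lem:eta_property1} uses only the twisted 2-cocycle identity~\eqref{eq:twisted_2-cocycle} and makes no appeal to the present lemma, and the lemmas also appear in that order. What your route buys is a bit of generality and economy: since Lemma~\ref{lem:eta_property1} is proven for an arbitrary twisted 2-cocycle $c$, the abstract deduction gives $\eta_g(x,e)=1$ for any $\eta$ built as in~\eqref{eq:eta_definition} from a twisted 2-cocycle, without ever invoking the specific form of $c^\omega$ in~\eqref{eq:c_omega_definition} or a normalized 3-cocycle representative. The paper's approach trades that generality for a short, self-contained computation at the level of $\omega$. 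One small remark: your caution that setting $y=e$ would be vacuous is unnecessary---even $y=e$ yields $\eta_g(x,e)=\eta_g(x,e)/\eta_g(x,e)=1$---but choosing a generic $y$ certainly makes the deduction more transparent.
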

\begin{proof}
    It is immediate from~\eqref{eq:reduction1}, \eqref{eq:reduction2}
    and~\eqref{eq:reduction3}
    that~\eqref{eq:eta_definition} simplifies to
    \begin{equation*}
        \eta_g(x,
               e)
        =\frac{\omega(g,
                      e,
                      x)\mkern2mu
               \omega(e,
                      x,
                      g)\mkern2mu
               \omega(x,
                      g,
                      e)}
              {\omega(g,
                      x,
                      e)\mkern2mu
               \omega(e,
                      g,
                      x)\mkern2mu
               \omega(x,
                      e,
                      g)}
        =1.
    \end{equation*}
\end{proof}

\end{document}